\title{Order algebras: a quantitative model of interaction}
\author{Emmanuel Beffara}
\institute{%
  Institut de Math\'{e}matiques de Luminy \\
  UMR6206, Universit\'{e} Aix-Marseille II \& CNRS}
\date{July 7, 2011}
\begin{document}
\maketitle

\begin{abstract}
  A quantitative model of concurrent interaction is introduced.
  The basic objects are linear combinations of partial order relations, acted
  upon by a group of permutations that represents potential non-determinism in
  synchronisation.
  This algebraic structure is shown to provide faithful interpretations of
  finitary process algebras, for an extension of the standard notion of
  testing semantics, leading to a model that is both denotational (in the
  sense that the internal workings of processes are ignored) and
  non-interleaving.
  Constructions on algebras and their subspaces enjoy a good structure that
  make them (nearly) a model of differential linear logic,
  showing that the underlying approach to the representation of
  non-determinism as linear combinations is the same.
\end{abstract}

\tableofcontents

\section{Introduction} 

The theory of concurrency has developed several very different models for
interactive processes, focusing on different aspects of computation.
Among those, process calculi are an appealing framework,
because the formal language approach is well suited to
modular reasoning, allowing to study sophisticated systems by means of
abstract programming primitives for which powerful theoretical tools can be
developed.
They are also the setting of choice for extending the vast body of results
of proof theory to less sequential settings.
However, the vast majority of the semantic studies on process calculi like the
\pii-calculus have focused on the so-called interleaving operational semantics,
which is the basic definition of the dynamic of a process: the interaction of
a program with its environment is reduced to possible sequences of
transitions, thus considering that parallel composition of program components
is merely an abstraction that represents all possible ways of combining
several sequential processes into one.
In Hoare's seminal work on Communicating Sequential
Processes~\cite{hoa85:csp}, this is even an explicit design choice.

There is clearly something unsatisfactory in this state of things.
Although sophisticated theories have been established for interleaving
semantics, most of which are based on various forms of bisimulation, they
fundamentally forget the crucial (and obvious) fact that concurrent processes
are intended to model situations where some events may occur independently,
and event explicitly in parallel.
This fact is well known, and the search for non interleaving semantics for
process calculi is an active field of research, with fruitful interaction with
proof theory and denotational semantics.
Recently, the old idea of Winskel's interpretation of CCS in event
structures~\cite{win82:ccs,win87:event} has been revisited by Crafa, Varacca
and Yoshida to provide an actually non-interleaving operational semantics for
the \pii-calculus, using extensions of event structures~\cite{cvy07:event}.
Event structures are also one of the starting points of extensions of game
semantics to non-sequential frameworks, for instance in asynchronous
games~\cite{mm08:concurrent} and concurrent extensions of
ludics~\cite{fp09:orders}.
In a neighbouring line of research, the recent differential extension of
linear logic is known to be expressive enough to represent the dynamics of the
\pii-calculus~\cite{er06:diffnets,el07:pidiff}.
However the implications of this fact in the search for denotational semantics
of the \pii-calculus are still unclear, in particular the quantitative
contents of differential linear logic lacks a proper status in concurrency.

This paper presents a new semantic framework that addresses this question,
following previous work by the author~\cite{bef08:apc} on the search for
algebraically pleasant denotational semantics of process calculi.
The first step was to introduce in the \pii-calculus an additive structure (a
formal sum with zero) that represents pure non-determinism, and this technique
proved efficient enough to provide a readiness trace
semantics~\cite{oh86:spec} with a complete axiomatization of equivalence for
finite terms.
The second step presented here further extends the space of processes with
arbitrary linear combinations, giving a meaning to these combinations in terms
of quantitative testing.
This introduction of scalar coefficients was not possible in the interleaving
case, because of the combinatorial explosion that arose even when simply
composing independent traces; moving to a non-interleaving setting through
a quotient by homotopy of executions is the solution to this problem.
Growing the space of processes to get more algebraic structure is also
motivated by the idea that better structured semantics gives cleaner
mathematical foundations for the object of study, in the hope that the
obtained theory will be reusable for different purposes and that it will
benefit from existing mathematical tools.

\paragraph{Informal description}

An order algebra is defined on an \emph{arena}, which represents the set of
all observable events that may occur in the execution of a process.
Basic interaction scenarii, named \emph{plays}, are partial order relations
over finite subsets of the arena.
We then postulate two principles:
\begin{itemize}
\item 
  Linear combinations are used to represent non-determinism, which,
  although not the defining feature, is an unavoidable effect in
  concurrent interaction.
  Coefficients form the quantitative part of the model, the first thing
  they represent is how many times a given play may occur in a given
  situation.
  They can also represent more subtle things, like under which conditions a
  given play is relevant.
  This allows for the representation of features such as probabilistic
  choice, in which case coefficients will be random variables.
  In general, coefficients are taken in an arbitrary semiring with some
  additional properties.
  This use of linear combinations is a novelty of the differential λ-calculus
  and subsequent work~\cite{er03:difflambda}, although a decomposition of
  processes as formal linear combinations was first proposed by Boreale and
  Gadducci~\cite{bg03:denotational}, albeit without the quantitative aspect we
  develop here.
\item 
  The fact that some events may be indistinguishable by the environment of a
  process, typically different inputs (or outputs) on the same channel, is
  represented by a group action over the arena.
  Each element of the group acts as a permutation that represents
  a possible way of rearranging the events.
  A comparable approach was used in particular in AJM game
  semantics~\cite{ajm94:pcf,bder97:ajm} to represent the interchangeability of
  copies in the exponentials of linear logic.
\end{itemize}
Some words are borrowed from game semantics, since our objects have
similarities with games, but this is not a ``game'' semantics, at most a
degenerate one.
In particular, there is no real notion of player and opponent
interacting, since there is no polarity that could distinguish them or
distinguish inputs and outputs.
The term ``strategy'' does not really apply either
since there is no notion of choosing the next move in a given situation.
Under these circumstances, calling anything a ``game'' is kind of far fetched.

\paragraph{Outline}

Section~\ref{sec:algebras} defines order algebras from these ideas.
Arenas, plays and linear combinations of plays (simply called vectors) are
defined, with the two basic operations on vectors: \emph{synchronisation},
which extends the merging of orders to take permutations into account, and
\emph{outcome}, which is a scalar that acts as the ``result'' of a
process.
Two vectors are equivalent if they are indistinguishable by
synchronisation and outcome, and 
the order algebra is the quotient of the vectors by this equivalence.

Section~\ref{sec:logic} describes constructs involving order algebras and
their subspaces.
Cartesian and tensor products are described in terms of interaction, and the
symmetric algebra is constructed in the framework.
This  algebra is of particular interest because it represents the
basic source of non-determinism in interaction, namely the fact that any
number of interchangeable actions may occur at a given synchronisation point.

Section~\ref{sec:processes} shows how order algebras can be used to provide
fully abstract models of process calculi, with the example of the
\piI-calculus.
The crucial ingredient is a quantitative extension of the standard notion of
testing, from which the present work stems.
Standard forms of testing are obtained as particular choices of the
semiring of scalars.

\paragraph{Future work}

Order algebras as defined and studied in the present work are very finitary in
nature, because vectors are finite linear combinations of finite plays.
This setting already has an interesting structure, as this paper illustrates,
but it is unable to represent any kind of potentially infinitary behaviour.
This includes identity functions over types that are not finite dimensional,
and as a consequence we do not get a model of differential linear logic.
Handling infinity is the natural next step, and for this we need to add
topology to the structure, in order to get a sensible notion of convergence.
Order algebras will then appear not only as the quotient of combinations of
plays by equivalence, but as the separated and completed space
generated by plays.
In this line of thought, the dual space should play an important role, in
order to define duality in the logical sense.

Another direction is to exploit the fact that the semiring of scalars is a
parameter of the construction.
In particular, going from a semiring $\Scal$ to the semiring of $\Scal$-valued
random variables over a given probabilistic space properly extends the model
to a probabilistic one.
Similarly, using complex numbers and unitary transformations could provide a
way to represent quantum computation in the same framework.
Developing these ideas correctly is a line of research by itself, as the
question of denotational models for these aspects of computation is known to
be a difficult matter.

\paragraph{Related work}

Part of the construction of order algebras is concerned with modelling of
features like name binding or creation of fresh names.
The topic of proper formal handling of binders in syntax is a vast
topic known as \emph{nominal techniques} (see for instance Gabbay's
survey~\cite{gab11:nominal}), and it has been applied in particular to
construct operational semantics for process calculi in a generic
way~\cite{mp05:hdpi,cm10:names}.
We feel that our approach is orthogonal: arenas present a flattened version of
the name structure, in which remains no notion of name creation or binding (or
only indirectly); permutations are used only to relate different
\emph{occurrences} of names.
Moreover, local names, by essence, are absent from order algebras, since our
intent is to build a denotational model, in which internal behaviour is
forgotten.

Our work aims in particular at constructing models of interaction that are not
interleaving, a featured sometimes referred to as “true concurrency”.
This objective, of course, is not new, and the reference model in this respect
is that of event structures.
A relationship between our framework and event structures can be formulated:
using the simplest semiring of coefficients, namely $\{0,1\}$ with $1+1=1$
(thus losing any “quantitative” content), linear combinations of plays are
simply finite sets of plays.
The set of plays interpreting a given process turns out to be exactly the set
of configurations of the event structure interpreting this process, forgetting
any internal events.
We do not develop this correspondence in the present paper, as it is of
limited interest in the current state of development of order algebras,
however it will certainly be of great interest in the development of the
theory, notably when applying it to modelling probabilistic processes, for
which event structure semantics has been
developed~\cite{ab06:cells,vvw06:proba}.
Besides, the use of symmetry in event
structures~\cite{win07:symmetry,sw10:symmetry} has been recently identified as
a crucial feature; we defer to future work the comparison with our approach
based on group actions.

The shift from sets of configurations to formal linear combinations in the
interpretation of processes has a notable precedent in Boreale and Gadducci's
interpretation of CSP processes as formal power
series~\cite{bg03:denotational,bg06:series}, building on Rutten's work
relating coinduction and formal power series~\cite{rut99:series}.
Boreale and Gadducci's work differs from the present paper in two respects.
Firstly, their interpretation of the semiring of coefficient is of a
different nature: sum and product are seen as internal and external choice
respectively, while we interpret them as internal choice and parallel
composition without interaction.
Secondly, their technical development uses only idempotent semirings (where
$x+x=x$ for all $x$), which does not handle quantitative features, and leads
inevitably to interleaving semantics (as proved in our setting by
Theorem~\ref{thm:basis} and remarks in Section~\ref{sec:consequences}).
Nevertheless, Rutten's approach to coinduction, and the idea of coinductive
definitions by behavioural differential equations is certainly relevant to our
work and is a promising source of inspiration for the extension of the present
setting to infinitary behaviours.

\section{Order algebras} 
\label{sec:algebras}

\subsection{Arenas and plays} 

An order algebra is defined on an \emph{arena}, which represents a fixed set of
potential events.
The arena is equipped with a permutation group that represents the
non-determinism that arises when synchronising events, as described below.
Then a play is a partial order relation over a finite subset of the arena.

\begin{definition}
  An \emph{arena} $X$ is a pair $(\web{X},\pgroup[X])$ where $\web{X}$ is a
  countable set (the \emph{web} of $X$) and $\pgroup[X]$ is a subgroup of the
  group $\Perm{\web{X}}$ of permutations of $\web{X}$.
  If $\pgroup[X]$ is trivial, then $X$ is called \emph{static} and it is
  identified with its web.
\end{definition}

The points in the web are called \emph{events}, rather than moves, since there
is no actual notion of players interacting.
Permutations represent the fact that there may be several different ways for
two processes to synchronise.
In process calculus language, permutations can be seen as relating different
occurrences of the same action label.

\begin{example}\label{ex:csp}
  When modelling a simple process algebra like CSP~\cite{hoa85:csp} over an
  alphabet $A$ (with no value passing), we can use a web like $A×\Nat$, where
  $\Nat$ is the set of natural numbers; $(a,i)$ is interpreted as the $i$-th
  copy of $a$ (any other infinite set than $\Nat$ would do: the actual values
  are irrelevant).
  The permutation group will consist of all permutations of $A×\Nat$ that
  leave the first member unchanged in each pair: different occurrences of a
  given event can be freely permuted, but obviously they cannot be exchanged
  for events of a different name.
\end{example}

\begin{example}\label{ex:ccs}
  When modelling a calculus like CCS~\cite{mil89:ccs}, the same arena can be
  used as in CSP, taking for $A$ the set of action labels, including
  polarities, that is $N\uplus\set{\bar{u}}{u∈N}$ if $N$ is the set of
  names.
\end{example}

\begin{example}
  Things get more subtle when modelling a calculus with name passing like the
  $\pi$-calculus~\cite{mil99:pi}.
  For the monadic case, the arena will consist of triples $(ε,a,i)$ where $ε$
  is a polarity (input or output), $a$ is a name (either a free name or a name
  bound by an action) and $i$ is an occurrence number.
  Names bound by different input events will be considered different: in
  process terms, instead of $u(x).P\para u(x).Q$, write
  $u(x_1).P[x_1/x]\para u(x_2).P[x_2/x]$.
  The considered permutations are those that respect the name structure: if
  $σ$ maps an event $u(x_1)$ to an event $u(x_2)$, then it must map any event
  involving $x_1$ to an event of the same type involving $x_2$ instead.
  Private names, like $a$ in $\new{a}(a.P\para\bar{a}.Q)$, will not be
  represented in arenas, since by definition they cannot be involved in
  interaction with the environment, unless they are communicated by scope
  extrusion, as in $\new{a}\bar{u}a$, in which case they will be modelled
  the same way as binding input prefixes.
  This construction is developed in more detail in
  Section~\ref{sec:processes}.
\end{example}

\begin{definition}
  A \emph{play} over $X$ is a pair $s=(\web{s},\order{s})$ where $\web{s}$
  is a finite subset of $\web{X}$ (the support) and $\order{s}$ is a
  preorder over $\web{s}$; the set of plays over $X$ is written
  $\Plays{X}$.
  A play $s$ is called \emph{consistent} if the relation $\order{s}$ is a
  partial order relation (\ie\ if it is acyclic).
\end{definition}

The intuition is that a play represents a possible way a process may act:
the support contains the set of all events that will actually occur, the
preorder represents scheduling constraints for these events.
Consistency means that these constraints are not
contradictory, \ie\ that they do not lead to a deadlock.
Synchronisation, defined below, consists in combining constraints from two
plays, assuming they have the same events.
The primitive definition of plays as pre-orders is a way to make
it a total operator by separating it from the consistency condition: two plays
can synchronise even if their scheduling constraints are not compatible, but
then the result is inconsistent.

\begin{example}\label{ex:diagram}
  We will represent a (consistent) play graphically as the Hasse diagram of
  its order relation, with each node labelled by the event's name.
  By convention, when two events are part of the same orbit under
  $\pgroup[X]$, we use the same name with different indices:
  \begin{center}
    \hfill
    \tikzinline{
      \eventnode{(0,0)}{a}{left}{a}
      \eventnode{(-1,1)}{b1}{left}{b_1}
      \eventnode{(1,1)}{c}{right}{c}
      \eventnode{(1,2)}{b2}{right}{b_2}
      \eventnode{(0,3)}{d}{left}{d}
      \eventnode{(2,3)}{e}{right}{e}
      \draw (a)--(b1)--(d) (a)--(c)--(b2)--(d)  (b2)--(e);
    }
    \hfill
    \begin{minipage}{0.7\textwidth}
      This represents a play with support $\{a,b_1,b_2,c,d,e\}$, with the
      order relation such that $a<b_1$, $a<b_2$, $b_1<d$, $c<b_2$, $b_2<d$ and
      $b_2<e$, in an arena that has a permutation that swaps $b_1$ and $b_2$.
    \end{minipage}
  \end{center}
\end{example}

\begin{definition}
  For $r,s∈\Plays{X}$ with $\web{r}=\web{s}$, the
  \emph{synchronisation} of $r$ and $s$ is the play
  \[
    r\sync s := \bigl( \web{r}, (\order{r}\cup\order{s})^* \bigr),
  \]
  where $(⋅)^*$ denotes the reflexive transitive closure.
  Given a finite subset $A$ of $\web{X}$, define the \emph{$A$-neutral}
  play as $e_A:=(A,\operatorname{id}_A)$ where
  $\operatorname{id}_A$ is the identity relation.
\end{definition}

\begin{example}\label{ex:sync}
  We have the following synchronizations:
  \[
    \left(\!\tikzinline{
      \eventnode{(0,0)}{a1}{left}{a_1}
      \eventnode{(-0.5,1)}{b}{left}{b}
      \eventnode{(0.5,1)}{a2}{right}{a_2}
      \draw (a1)--(b) (a1)--(a2); }\!\right)
    \sync
    \left(\!\tikzinline{
      \eventnode{(0,0)}{a1}{left}{a_1}
      \eventnode{(1,0)}{b}{right}{b}
      \eventnode{(1,1)}{a2}{right}{a_2}
      \draw (b)--(a2); }\!\right)
    =
    \left(\!\tikzinline{
      \eventnode{(0,0)}{a1}{left}{a_1}
      \eventnode{(0,1)}{b}{left}{b}
      \eventnode{(0,2)}{a2}{left}{a_2}
      \draw (a1)--(b)--(a2); }\right)
  ,\quad
    \left(\!\tikzinline{
      \eventnode{(0,0)}{a1}{left}{a_1}
      \eventnode{(-0.5,1)}{b}{left}{b}
      \eventnode{(0.5,1)}{a2}{right}{a_2}
      \draw (a1)--(b) (a1)--(a2); }\!\right)
    \sync
    \left(\!\tikzinline{
      \eventnode{(0,0)}{a2}{left}{a_2}
      \eventnode{(1,0)}{b}{right}{b}
      \eventnode{(1,1)}{a1}{right}{a_1}
      \draw (b)--(a1); }\!\right)
    =
    \left(\!\tikzinline{
      \eventnode{(0,0)}{a1}{left}{a_1}
      \eventnode{(1,0)}{b}{right}{b}
      \eventnode{(1,1)}{a2}{right}{a_2}
      \draw (b) to[bend left] (a1) to[bend left] (b) -- cycle;
      \draw (b)--(a2); }\!\right)
    .
  \]
  The second one leads to an inconsistent play, since the union of the order
  relations is cyclic.
\end{example}

Note that synchronisation is a very restrictive operator because it requires
the event sets to be equal.
The possibility of synchronising on some events while keeping the
others independent, which is a natural notion, will be defined in
Section~\ref{sec:partial} using this primitive form of total synchronisation.

Commutativity of $\sync$ is immediate from the definition.
Associativity is also clear: for $r,s,t∈\Plays{X}$, $(r\sync s)\sync t$ and
$r\sync(s\sync t)$ are defined if and only if $\web{r},\web{s},\web{t}$ are
equal, and in this case we have $\order{(r\sync s)\sync t}=
(\order{r}\cup\order{s}\cup\order{t})^*=\order{r\sync(s\sync t)}$.
Because of the constraint on supports, there cannot be a neutral element.
However, among plays of a given support $A$, the neutral play $e_A$ is
actually neutral for synchronisation.

We now define the action of the permutation group $\pgroup[X]$ over the set of
plays.
Since there is usually no ambiguity, we overload the notation for group
actions: given $σ∈\pgroup[X]$, for $x∈X$ we write $σx$ for the image of $x$,
for $A⊆X$ we write $σA$ for the set of images $\set{σx}{x∈A}$, and similarly
for $r∈\Plays{X}$ we write $σr$ for the play $r$ permuted by $σ$, as defined
below.

\begin{definition}
  Let $X$ be an arena.
  The action of a permutation $σ∈\pgroup[X]$ on a play $r∈\Plays{X}$ is
  defined as
  \[
    σ r := \bigl(
      σ \web{r}, \; \set{(σ x, σ y)}{(x,y)∈\order{r}}
  \bigr) .
  \]
  The \emph{orbit} of a play $r$ in $\Plays{X}$ is the set
  \[
    \porbit[X]{r} := \set{σ r}{σ∈\pgroup[X]} .
  \]
\end{definition}

We refer the reader to some reference textbook (for instance Lang's
\emph{Algebra}~\cite{lang65:algebra}) for details on the standard
group-theoretic notions in use here.
For reference, given a group $G$ acting on a set $X$,
the \emph{stabilizer} of a point $x∈X$ in the action of $G$ is, by
definition, the subgroup of $G$ consisting of all the $σ∈G$ that
leave $x$ unchanged, \ie\ $σx=x$.
The \emph{pointwise} stabilizer of a set $A⊆X$ is the subgroup of those
that leave each point in $A$ unchanged, as opposed to the \emph{setwise}
stabilizer which includes all permutations that leave the set $A$
unchanged as a whole (\ie\ $\set{σx}{x∈A}=A$).
The index of a subgroup $H$ in a group $G$, written $(H:G)$, is the number of
left cosets of $H$ in $G$, that is the cardinal of $\set{σH}{σ∈G}$.
When $H$ is a normal subgroup of $G$, the index $(H:G)$ is the cardinal of the
quotient group $G/H$.

\begin{definition}
  Let $X$ be an arena and $r$ be a play in $\Plays{X}$.
  Let $\pstab[X]{r}$ be the stabilizer of $r$ in the action of $\pgroup[X]$
  over $\Plays{X}$ and let $\pstab[X]{\web{r}}$ be the pointwise stabilizer of
  $\web{r}$ in $\pgroup[X]$, then the \emph{multiplicity} of $r$ in $X$ is the
  index of $\pstab[X]{\web{r}}$ in $\pstab[X]{r}$:
  \[
    \multiplicity[X]{r} := (\pstab[X]{r}:\pstab[X]{\web{r}}) .
  \]
\end{definition}

Hence, the multiplicity of $r$ is the number of different ways one can permute
$r$ into itself.
Indeed, the definition as $(\pstab{r}:\pstab{\web{r}})$ exactly means
the number of permutations of $r$ into itself (elements of $\pstab{r}$), up
to permutations that leave each point of $\web{r}$ invariant (elements of
$\pstab{\web{r}}$), in other words $\multiplicity{r}$ is the order of the
group $\set[1]{\restr{σ}{\web{r}}}{σ∈\pgroup,σ r=r}$,
which is always finite since the support $\web{r}$ is finite.

\begin{example}
  Using the same conventions as in Example~\ref{ex:diagram}, we have
  \[
    \multiplicity{\tikzinline{
      \eventnode{(0,0)}{a}{left}{a}
      \eventnode{(-1,1)}{b1}{left}{b_1}
      \eventnode{(-1,2)}{c1}{left}{c_1}
      \eventnode{(1,1)}{b2}{right}{b_2}
      \eventnode{(1,2)}{c2}{right}{c_2}
      \draw (a)--(b1)--(c1) (a)--(b2)--(c2);
    }} = 2
    \qquad\text{and}\qquad
    \multiplicity{\tikzinline{
      \eventnode{(0,0)}{a}{left}{a}
      \eventnode{(-1,1)}{b1}{left}{b_1}
      \eventnode{(-1,2)}{c1}{left}{c_1}
      \eventnode{(1,1)}{b2}{right}{b_2}
      \eventnode{(0,3)}{c2}{right}{c_2}
      \draw (a)--(b1)--(c1)--(c2) (a)--(b2)--(c2);
    }} = 1
  \]
  Both plays have the same support, there are $4$ permutations of this
  support: $b_1$ and $b_2$ can be exchanged, idem for $c_1$ and $c_2$.
  In the first case if we exchange $b_1$ with $b_2$ and $c_1$ with
  $c_2$, we get the same play (permuting the $b$ but not the $c$ yields a
  different play).
  In the second case, no permutation can yield the same play.
\end{example}

\subsection{Linear combinations} 

The set of plays of an arena $X$ is independent of the group $\pgroup$,
but our idea is that plays that are permutations of each other should be
considered equivalent, since permutations exchange occurrences of
indistinguishable actions.
In the presence of permutations, however, there are several ways to
synchronise two plays, so in order to extend the definition of synchronisation
we have to be able to consider combinations of possible plays.
For genericity, and because our aim is to get a quantitative account of
interaction, we will use linear combinations, with coefficients in an
unspecified commutative semiring.

\begin{definition}
  A \emph{commutative semiring} $\Scal$ is a tuple $(\Scal,{+},{⋅},0,1)$ such
  that $(\Scal,{+},0)$ and $(\Scal,{⋅},1)$ are commutative monoids and for all
  $x,y,z∈\Scal$ it holds that $x⋅(y+z)=x⋅y+x⋅z$ and $x⋅0=0$.
  A \emph{semimodule} over $\Scal$ is a commutative monoid $(M,{+},0)$
  with an action $(⋅):\Scal×M→M$ that
  commutes with addition on both sides and satisfies $λ⋅(μ⋅x)=(λ⋅μ)⋅x$ for all
  $λ,μ∈\Scal$ and $x∈M$.
  A \emph{commutative semialgebra} over $\Scal$ is a semimodule $M$ with a
  bilinear operation that is associative and commutative.
\end{definition}

Terminology about semirings, semimodules and semialgebras is not standard, in
particular some definitions do not require both neutrals.
Sometimes, the neutrals are not required to be distinct (they are equal if and
only if the semiring is a singleton, but this is a degenerate case that we
will not consider).
In the above definitions, if all elements of $\Scal$ have additive inverses,
then $\Scal$ is a (commutative unitary) ring, and the semimodules and
semialgebras are actually modules and algebras (indeed, the action of $\Scal$
imposes the existence of additive inverses in them too).
If $\Scal$ is a field, we get the usual notions of vector space and algebra.

\begin{definition}
  Let $\Scal$ be a commutative semiring.
  The \emph{integers} of $\Scal$ are the finite sums of $1$ including the
  empty sum $0$, the \emph{non-zero integers} are the finite non-empty sums.
  We call $\Scal$ \emph{regular} if
  for every non-zero integer $n\in\Scal$,
  for all $x,y\in\Scal$, $nx=ny$ implies $x=y$.
  We call $\Scal$ \emph{rational} if every non-zero integer has a
  multiplicative inverse.
\end{definition}

In particular, regularity applied to $x=1$ and $y=0$ imposes that no non-empty
sum of $1$ can be equal to $0$, in other words $\Scal$ has characteristic
zero.
The rationality condition means that it is possible to divide by non-zero
natural numbers, or in more abstract terms that the considered semiring is a
semimodule over the semiring of non-negative rationals.
This obviously implies regularity.

Two important cases of rational semirings will be considered here.
The first case is that of commutative algebras over the field $\Rat$ of
rational numbers, which includes fields of characteristic zero (among which
rational, real and complex numbers) and commutative algebras over them.
The second case is when addition is idempotent, which includes so-called
\emph{tropical} semirings~\cite{pin94:tropical}, and the canonical examples
are that of min-plus and max-plus semirings.
Boolean algebras with disjunction as sum and conjunction as product are
another typical example.
In this case, all integers except $0$ are equal to $1$, so they obviously
have multiplicative inverses.

Throughout this paper, unless explicitly stated otherwise, $\Scal$ is any
semiring.
Note that the semiring $\Nat$ of natural numbers and the ring $\Int$ of
integers are regular but not rational.
Indeed, when using natural numbers as scalars, some properties of order
algebras will be lost, for instance the existence of bases.
Hence some statements will explicitly require $\Scal$ to be regular or
rational.

For an arbitrary set $X$, a formal linear combination over $X$ is a function
from $X$ to $\Scal$ that has a value other than $0$ on a finite number of
points.
Formal linear combinations over $X$, with sum and scalar product defined
pointwise, form the free $\Scal$-semimodule over $X$
and an element $x∈X$ is identified with its “characteristic” function
$δ_x:X→\Scal$, such that $δ_x(x)=1$ and $δ_x(y)=0$ for all $y≠x$.
If $X$ is finite, then the set of formal linear combinations is the
$\Scal$-semimodule $\Scal^X$.
For an arbitrary subset $A$ of a $\Scal$-semimodule $E$, we denote by
$\Vect[\Scal]{A}$, or simply $\Vect{A}$, the submodule of $E$ generated by
$A$, \ie\ the smallest submodule of $E$ that contains $A$, that is the set
of finite linear combinations of elements of $A$.

\begin{definition}
  Let $X$ be an arena.
  The \emph{preliminary order algebra} $\PreOrdAlg[\Scal]{X}$ is the
  free $\Scal$-semimodule over $\Plays{X}$.
  The \emph{outcome} is the linear form $\outcome{⋅}$ over
  $\PreOrdAlg[\Scal]{X}$ such that $\outcome{r}=1$ when $r$ is consistent and
  $\outcome{r}=0$ otherwise.
\end{definition}

We usually keep the semiring $\Scal$ implicit in our notations.
Vectors in $\PreOrdAlg{X}$ are finite linear combination of plays in $X$,
they represent the collection of all possible behaviours of a finite process.
The coefficients can be understood as the amount of each behaviour that
is present in the process.
Examples in further sections also illustrate that $\Scal$ can be chosen to
represent conditions on the availability of each behaviour.
The outcome represents how relevant each play is, and by the intuition exposed
in the previous section, plays with cyclic dependencies cannot happen, so they
are considered irrelevant.

\begin{example}
  Consider the CSP term $P=a{\to}(b\mathbin{\|}c)\mid a{\to}c$
  (remember that in CSP $\mid$ is the choice operator, and $\|$ is parallel
  composition).
  An interpretation of $P$ in a preliminary order algebra containing only
  $a,b,c$ as events could be
  \[
    \tikzformula{ } +
    2 \tikzformula{
      \eventnode{(0,0)}{a}{right}{a}} +
    \tikzformula{
      \eventnode{(0,0)}{a}{right}{a}
      \eventnode{(0,1)}{b}{right}{b}
      \draw (a)--(b); } +
    2 \tikzformula{
      \eventnode{(0,0)}{a}{right}{a}
      \eventnode{(0,1)}{c}{right}{c}
      \draw (a)--(c); } +
    \tikzformula{
      \eventnode{(0,0)}{a}{right}{a}
      \eventnode{(-0.5,1)}{b}{left}{b}
      \eventnode{(0.5,1)}{c}{right}{c}
      \draw (a)--(b) (a)--(c); }
  \]
  where we have a summand for each partial run of $P$.
  The coefficient $2$ in the second and fourth summands represent the fact
  that there are two ways to perform only $a$, and two ways to perform $a$
  then $c$, depending on the choice one has done.
\end{example}

\begin{definition}\label{def:psync}
  Let $X$ be an arena.
  Permuted synchronisation in $X$ is the bilinear operator
  $\psync$ over $\PreOrdAlg{X}$ such that for all plays $r,s∈\Plays{X}$,
  \[
    r \psync s := \multiplicity[X]{s}
    \sum_{\substack{s'∈\porbit[X]{s} \\ \web[1]{s'}=\web{r}}} r\sync s'
  \]
\end{definition}

Observe that this sum is always finite.
The reason is that each $s'$ can be written $σ s$ for some
$σ∈\pgroup[X]$, and $\web{σs}=\web{r}$ implies
$σ\web{s}=\web{r}$.
Since $σ s$ is determined by the action of $σ$ on $\web{s}$, there
is at most one image of $s$ for each bijection between $\web{s}$ and
$\web{r}$.
Since $\web{r}$ and $\web{s}$ are finite, the number of such bijections is
finite.

\begin{example}\label{ex:psync}
  Considering again the plays in Example~\ref{ex:sync}, we have
  \[
    \tikzformula{
      \eventnode{(0,0)}{a1}{left}{a_1}
      \eventnode{(-0.5,1)}{b}{left}{b}
      \eventnode{(0.5,1)}{a2}{right}{a_2}
      \draw (a1)--(b) (a1)--(a2); }
    \psync
    \tikzformula{
      \eventnode{(0,0)}{a2}{left}{a_2}
      \eventnode{(1,0)}{b}{right}{b}
      \eventnode{(1,1)}{a1}{right}{a_1}
      \draw (b)--(a1); }
    =
    \tikzformula{
      \eventnode{(0,0)}{a1}{left}{a_1}
      \eventnode{(0,1)}{b}{left}{b}
      \eventnode{(0,2)}{a2}{left}{a_2}
      \draw (a1)--(b)--(a2); }
    +
    \tikzformula{
      \eventnode{(0,0)}{a1}{left}{a_1}
      \eventnode{(1,0)}{b}{right}{b}
      \eventnode{(1,1)}{a2}{right}{a_2}
      \draw (b) to[bend left] (a1) to[bend left] (b) -- cycle;
      \draw (b)--(a2); }
    .
  \]
  The first term in the sum corresponds to the identity permutation, the
  second one exchanges $a_1$ and $a_2$.
  Here, all plays involved have multiplicity $1$.
\end{example}

\begin{remark}
  Permuted synchronisation is similar to the parallel composition operator of
  CSP, in the case of processes defined on the same alphabet: in $r\psync s$,
  every event of $r$ must be synchronized with some event of the same name in
  $s$.
  There is a difference between plays and processes, however, in that in a
  play, every event must occur, whereas in a process, an action may be
  cancelled, for lack of a partner action to synchronize with.
\end{remark}

Any partial function $f:\Plays{X}^n\to\Plays{X}$ extends as an $n$-linear
operator $\bar{f}:\PreOrdAlg{X}^n\to\PreOrdAlg{X}$, by setting
$\bar{f}(r_1,\ldots,r_n)=0$ when $f(r_1,\ldots,r_n)$ is undefined.
This applies in particular to synchronisation, which yields a bilinear
operator $\bar{\sync}$ over $\PreOrdAlg{X}$.
As a slight abuse of notations, we will write it simply as $\sync$ when there
is no ambiguity.

Using this convention, permuted synchronisation can be seen as a
generalisation of non-permuted synchronisation, since when the permutation
group is trivial, all multiplicities are $1$ and all orbits are singletons.
Although $\psync$ is a generalisation of $\sync$, we still use different
notations, since both operators are of interest in a given non-static arena.
The non-permuted version will be referred to as \emph{static synchronisation}
to avoid confusion.

\begin{definition}
  Let $X$ be an arena.
  Observational equivalence in $\PreOrdAlg[\Scal]{X}$ is defined as
  $u\obseqv[X]u'$ when $\outcome{u\psync v}=\outcome{u'\psync v}$ for all
  $v∈\PreOrdAlg[\Scal]{X}$.
  The \emph{order algebra} over $X$ is
  $\OrdAlg[\Scal]{X}:=\PreOrdAlg[\Scal]{X}/{\obseqv[X]}$.
\end{definition}

The scalar $\outcome{u\psync v}$ is understood as the result of testing a
process $u$ against a process $v$.
It linearly extends the basic case of single plays: $\outcome{r\sync s}$
is $1$ if $r$ and $s$ are compatible and $0$ otherwise; $\outcome{r\psync s}$
is the number of different ways $r$ and $s$ can be permuted so that they
become compatible.
Hence the definition: $u\obseqv v$ if $u$ and $v$ are indistinguishable by this
testing protocol.

\begin{example}
  Any inconsistent play is observationally equivalent to $0$, since
  synchronising it with any order yields an inconsistent play.
  Hence the synchronisation of Example~\ref{ex:psync} implies
  \[
    \tikzformula{
      \eventnode{(0,0)}{a1}{left}{a_1}
      \eventnode{(-0.5,1)}{b}{left}{b}
      \eventnode{(0.5,1)}{a2}{right}{a_2}
      \draw (a1)--(b) (a1)--(a2); }
    \psync
    \tikzformula{
      \eventnode{(0,0)}{a2}{left}{a_2}
      \eventnode{(1,0)}{b}{right}{b}
      \eventnode{(1,1)}{a1}{right}{a_1}
      \draw (b)--(a1); }
    \obseqv
    \tikzformula{
      \eventnode{(0,0)}{a1}{left}{a_1}
      \eventnode{(0,1)}{b}{left}{b}
      \eventnode{(0,2)}{a2}{left}{a_2}
      \draw (a1)--(b)--(a2); }
    .
  \]
\end{example}

\begin{lemma}\label{lemma:perm-eq}
  Let $X$ be an arena.
  For all $u∈\PreOrdAlg{X}$ and $σ∈\pgroup[X]$,
  we have $u\obseqvσ u$.
\end{lemma}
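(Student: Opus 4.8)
The plan is to show that synchronising against $\sigma u$ gives the same outcomes as synchronising against $u$, by exploiting the fact that $\sigma$ is a symmetry of the whole structure. Concretely, I would fix an arbitrary test vector $v \in \PreOrdAlg{X}$ and prove $\outcome{(\sigma u)\psync v} = \outcome{u\psync v}$; by definition of $\obseqv$ this gives $\sigma u \obseqv u$. By bilinearity of $\psync$ and linearity of $\outcome{\cdot}$, it suffices to treat the case where $u = r$ and $v = s$ are single plays.

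First I would observe that $\sigma$ preserves consistency: a play $t$ is consistent (acyclic) if and only if $\sigma t$ is, since $\sigma$ acts as a relabelling bijection on the support. Hence $\outcome{\sigma t} = \outcome{t}$ for every play $t$, and by linearity $\outcome{\sigma w} = \outcome{w}$ for every $w \in \PreOrdAlg{X}$. Next I would check that $\sigma$ distributes over static synchronisation: whenever $\web{t} = \web{t'}$, we have $\sigma(t\sync t') = (\sigma t)\sync(\sigma t')$, because $\sigma$ commutes with union of relations and with reflexive-transitive closure (being an order-isomorphism on the relevant support). The key remaining point is that $\sigma$ acts compatibly with permuted synchronisation. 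Since $\pstab{\web{\sigma r}} = \sigma\,\pstab{\web{r}}\,\sigma^{-1}$ and $\pstab{\sigma r} = \sigma\,\pstab{r}\,\sigma^{-1}$, conjugation by $\sigma$ is a bijection between the corresponding cosets, so $\multiplicity{\sigma r} = \multiplicity{r}$; and the map $s' \mapsto \sigma s'$ is a bijection from $\{\,s' \in \porbit{s} : \web[1]{s'} = \web{r}\,\}$ onto $\{\,s'' \in \porbit{s} : \web[1]{s''} = \web{\sigma r}\,\}$, since $\porbit{s}$ is $\pgroup[X]$-stable and $\web[1]{\sigma s'} = \sigma\web[1]{s'} = \sigma\web{r} = \web{\sigma r}$. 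Combining these with the distributivity above yields $(\sigma r)\psync(\sigma s) = \sigma(r \psync s)$, hence by linearity $(\sigma u)\psync(\sigma w) = \sigma(u \psync w)$ for all $u, w$.

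To conclude, given the test $v$, apply the identity with $w = \sigma^{-1}v$: then
\[
  \outcome{(\sigma u)\psync v} = \outcome{(\sigma u)\psync(\sigma(\sigma^{-1}v))} = \outcome{\sigma(u\psync \sigma^{-1}v)} = \outcome{u \psync \sigma^{-1}v}.
\]
This is not yet $\outcome{u\psync v}$, so one more step is needed: I would show that the outcome of a permuted synchronisation is itself invariant under permuting one argument, i.e. $\outcome{u\psync w} = \outcome{u\psync \tau w}$ for all $\tau \in \pgroup[X]$. This follows because $r\psync s$ and $r\psync(\tau s)$ have the same orbit-sum of static synchronisations up to the multiplicity bookkeeping — indeed $\porbit{\tau s} = \porbit{s}$ and $\multiplicity{\tau s} = \multiplicity{s}$, so $r\psync(\tau s) = r\psync s$ exactly. (So in fact $\psync$ already depends on $s$ only through its orbit.) Taking $\tau = \sigma^{-1}$ gives $\outcome{u\psync\sigma^{-1}v} = \outcome{u\psync v}$, which closes the argument.

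The main obstacle is the bookkeeping around multiplicities and orbit-restricted sums in the definition of $\psync$: one has to be careful that conjugation by $\sigma$ matches up the index $(\pstab{r}:\pstab{\web{r}})$ with $(\pstab{\sigma r}:\pstab{\web{\sigma r}})$ and simultaneously reindexes the summation set $\{s' \in \porbit{s} : \web[1]{s'} = \web{r}\}$ correctly, so that the factor $\multiplicity{s}$ and the set of surviving terms transform consistently. Everything else — preservation of acyclicity, commutation with closure, and linearity — is routine.
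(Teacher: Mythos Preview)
Your argument is correct and rests on the same ingredients as the paper's proof: permutations preserve consistency (hence outcome), commute with static synchronisation, and act bijectively on each orbit. The organisation differs slightly. The paper does a single direct computation: expand $\outcome{u\psync s}=\multiplicity{s}\sum_{s'\in\porbit{s}}\outcome{u\sync s'}$, rewrite each summand as $\outcome{\sigma u\sync\sigma s'}$, and reindex using that $\sigma$ permutes $\porbit{s}$. You instead isolate two intermediate facts---the equivariance $(\sigma u)\psync(\sigma v)=\sigma(u\psync v)$ and the second-argument invariance $u\psync\tau v=u\psync v$---and chain them. This is sound but longer than necessary; in particular, once you have noticed that $r\psync s$ depends on $s$ only through its orbit, the whole detour through the equivariance identity could be replaced by the paper's one-step reindexing. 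That said, your equivariance identity is a clean reusable statement, so the extra work is not wasted.
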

\begin{proof}
  Since plays generate the module $\PreOrdAlg{X}$, clearly $u\obseqvσ u$
  if and only if $\outcome{u\psync s}=\outcome{σ u\psync s}$ for all
  play~$s$.
  Since $u$ is a finite linear combination of plays, the definition of
  synchronisation on plays extends as $\outcome{u\psync s}
  =\multiplicity{s}\sum_{s'∈\porbit{s}}\outcome{u\sync s'}$.
  It is clear that for all $σ∈G$ we have
  $σ(u\sync s')=σ u\syncσ s'$, moreover
  outcomes are preserved by permutations, so we have
  $\outcome{u\sync s'}=\outcome{σu\syncσs'}$ for all $s'$, hence
  $\outcome{u\psync s}
  =\multiplicity{s}\sum_{s'∈\porbit{s}}\outcome{σ u\syncσ s'}$.
  Since $σ$ acts as a permutation on the orbit $\porbit{s}$,
  $σ s'$ and $s'$ range over the same set, so we have
  $\outcome{u\psync s}=\outcome{σ u\psync s}$, and finally
  $u\obseqvσ u$.
\end{proof}

  The fact that observational equivalence is preserved by linear combinations
  is immediate from the definition, since synchronisation and outcome are
  linear.
  As a consequence, in each orbit $\porbit{s}$, we can choose a representant
  $\repr{s}$ such that each vector in $\PreOrdAlg{X}$ is equivalent to a
  linear combination of representants.

\begin{definition}\label{def:choice}
  Let $X$ be an arena.
  A \emph{choice of representants} for $X$ is a pair of an idempotent map
  $A\mapsto\repr{A}$ over $\PartsFin{\web{X}}$ and an idempotent map
  $r\mapsto\repr{r}$ over $\Plays{X}$ such that
  for all $r∈\Plays{X}$ $\web{\repr{r}}=\repr{\web{r}}$, and
  for all $r,s∈\Plays{X}$, $\repr{r}=\repr{s}$ if and only if $r=σ s$
  for some $σ∈\pgroup[X]$.
\end{definition}

  So a choice of representants picks one play in each orbit under $\pgroup[X]$
  in such a way that representants have the same support if it is possible.
  There always exists such choices, and in the sequel we assume that each
  arena comes with a particular choice, written $r\mapsto\repr[X]{r}$.
  The choice function over $\Plays{X}$ induces a projection in
  $\PreOrdAlg{X}$ by linearity, and for all $u∈\PreOrdAlg{X}$ we have
  $u\obseqv\repr[X]{u}$ by Lemma~\ref{lemma:perm-eq}.

\begin{definition}\label{def:saturate}
  Let $X$ be an arena.
  \emph{Saturation} in $X$ is the linear map
  $\saturate[X]:\PreOrdAlg{X}\to\PreOrdAlg{X}$ such that for each play~$r$, 
  \[
    \saturate[X] r := \sum_{σ∈\rpgroup{\web{r}}} σ r
    \qquad\text{where}\qquad
    \rpgroup{A} := \set[2]{\restr{σ}{A}}{σ∈\pgroup[X],σ A=A}.
  \]
\end{definition}

  The set $\rpgroup{A}$ is the group of permutations of $A$ induced by $\pgroup$,
  it is isomorphic to the quotient $\pstab{\{A\}}/\pstab{A}$ where
  $\pstab{\{A\}}$ is the setwise stabilizer of $A$ in $\pgroup$ and
  $\pstab{A}$ is its pointwise stabilizer (it is easy to check that the latter
  is a normal subgroup of the former).

\begin{example}
  Again using the conventions of Example~\ref{ex:diagram}, we have
  \[
    \saturate
    \tikzformula{
      \eventnode{(0,0)}{a}{left}{a}
      \eventnode{(-1,1)}{b}{left}{b}
      \eventnode{(-1,2)}{c1}{left}{c_1}
      \eventnode{(0,1)}{c2}{above}{c_2}
      \eventnode{(1,1)}{c3}{above}{c_3}
      \draw (a)--(b)--(c1) (a)--(c2) (a)--(c3); }
    = 2 \tikzformula{
      \eventnode{(0,0)}{a}{left}{a}
      \eventnode{(-1,1)}{b}{left}{b}
      \eventnode{(-1,2)}{c1}{left}{c_1}
      \eventnode{(0,1)}{c2}{above}{c_2}
      \eventnode{(1,1)}{c3}{above}{c_3}
      \draw (a)--(b)--(c1) (a)--(c2) (a)--(c3); }
    + 2 \tikzformula{
      \eventnode{(0,0)}{a}{left}{a}
      \eventnode{(-1,1)}{b}{left}{b}
      \eventnode{(-1,2)}{c1}{left}{c_2}
      \eventnode{(0,1)}{c2}{above}{c_1}
      \eventnode{(1,1)}{c3}{above}{c_3}
      \draw (a)--(b)--(c1) (a)--(c2) (a)--(c3); }
    + 2 \tikzformula{
      \eventnode{(0,0)}{a}{left}{a}
      \eventnode{(-1,1)}{b}{left}{b}
      \eventnode{(-1,2)}{c1}{left}{c_3}
      \eventnode{(0,1)}{c2}{above}{c_1}
      \eventnode{(1,1)}{c3}{above}{c_2}
      \draw (a)--(b)--(c1) (a)--(c2) (a)--(c3); } ,
  \]
  where the factor $2$ comes from the fact that exchanging $c_2$ and $c_3$ in
  the original play does not change it.
  Indeed, the multiplicity of this play is $2$.
\end{example}

\begin{lemma}\label{lemma:saturate}
  Let $X$ be an arena.
  For all $r,s\in\Plays{X}$ such that $\web{r}=\web{s}$ we have
  $s\psync r=s\mathbin{\bar{\sync}}\saturate r$.
\end{lemma}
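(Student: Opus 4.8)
The plan is to unfold both sides and reduce the identity to the orbit--stabilizer theorem for the finite permutation group $\rpgroup{\web{r}}$ acting on the plays supported on $\web{r}$. I would first rewrite the right-hand side: by bilinearity and the definition of saturation, $s \mathbin{\bar{\sync}} \saturate{r} = \sum_{\sigma \in \rpgroup{\web{r}}} s \mathbin{\bar{\sync}} (\sigma r)$, and since each $\sigma \in \rpgroup{\web{r}}$ stabilizes $\web{r}$ setwise we have $\web{\sigma r} = \web{r} = \web{s}$, so every term is an ordinary static synchronisation and the right-hand side equals $\sum_{\sigma \in \rpgroup{\web{r}}} s \sync (\sigma r)$. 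On the left, Definition~\ref{def:psync} together with the hypothesis $\web{r} = \web{s}$ gives $s \psync r = \multiplicity{r} \sum_{r'} s \sync r'$, where $r'$ ranges over those elements of $\porbit{r}$ with support $\web{r}$.

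Next I would identify this index set with a group orbit. Writing $G := \rpgroup{\web{r}}$, the action of any $\tau \in \pgroup$ on the play $r$ depends only on $\restr{\tau}{\web{r}}$ (because $\order{r} \subseteq \web{r} \times \web{r}$), and $\web{\tau r} = \tau \web{r}$; hence a play in $\porbit{r}$ has support $\web{r}$ if and only if it is of the form $\sigma r$ for some $\sigma \in G$, so the index set $\set{r' \in \porbit{r}}{\web{r'} = \web{r}}$ is exactly the orbit of $r$ under the natural action of $G$ on the plays supported on $\web{r}$. The stabilizer of $r$ in this action is $\set{\restr{\tau}{\web{r}}}{\tau \in \pgroup,\ \tau r = r}$, which by the remark following the definition of multiplicity is a finite group of order $\multiplicity{r}$. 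By the orbit--stabilizer theorem applied to the finite group $G$, the map $\sigma \mapsto \sigma r$ from $G$ onto the orbit of $r$ is $\multiplicity{r}$-to-one.

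Finally, grouping the terms of the right-hand sum by the value $r' = \sigma r$, each element $r'$ of the orbit is attained by exactly $\multiplicity{r}$ permutations $\sigma \in G$, whence $\sum_{\sigma \in G} s \sync (\sigma r) = \multiplicity{r} \sum_{r'} s \sync r'$, the sum running over the orbit of $r$, which is $s \psync r$ by the first step. I do not expect any genuine obstacle; the only delicate point is that $\pgroup$ may be infinite, so one must descend to the finite quotient $\rpgroup{\web{r}} \cong \pstab{\{\web{r}\}}/\pstab{\web{r}}$ before invoking orbit--stabilizer, and use the cited remark to know that the stabilizer of $r$ there has order $\multiplicity{r}$ and is not the possibly infinite group $\pstab{r} \leq \pgroup$.
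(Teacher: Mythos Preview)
Your proof is correct and follows essentially the same approach as the paper's. The paper's argument is more terse—it simply asserts that as $\sigma$ ranges over $\rpgroup{\web{r}}$ the image $\sigma r$ covers each element of the orbit with support $\web{r}$ exactly $\multiplicity{r}$ times—whereas you spell out explicitly that this is an instance of orbit--stabilizer for the finite group $\rpgroup{\web{r}}$, but the content is the same.
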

\begin{proof}
  We use the notations of Definition~\ref{def:saturate}.
  As $σ$ ranges over $\rpgroup{\web{r}}$, $σ r$ ranges over all the
  elements of the orbit of $r$ under $\pgroup$ that have the same support
  as $r$.
  Moreover, each play in the orbit is hit a number of times equal to
  $\multiplicity{r}$, so for any play $s$ we have the expected equality.
\end{proof}

In particular, this implies the equivalence
$u\psync v\obseqv\repr{u}\mathbin{\bar{\sync}}\saturate\repr{v}$
for all $u$ and $v$.
We will use this fact in Proposition~\ref{prop:represent} to get a
representation of arbitrary order algebras in static ones.

\begin{proposition}
  Permuted synchronisation is compatible with observational equivalence.
  Up to observational equivalence, it is associative and commutative.
\end{proposition}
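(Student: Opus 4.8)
The plan is to route all three claims through a single combinatorial fact about plays, which I will call $(\star)$: \emph{for any $p_1,\dots,p_m\in\Plays{X}$, every fully bracketed permuted synchronisation of $p_1,\dots,p_m$ — in any order, with any parenthesisation — has the same outcome, a scalar depending only on the multiset $\{\porbit{p_1},\dots,\porbit{p_m}\}$; write it $\Omega(p_1,\dots,p_m)$.} Once $(\star)$ is available, commutativity and associativity up to $\obseqv$ are essentially free: $\psync$ is bilinear and (as the paper already observes) $\obseqv$ is preserved by linear combinations, so it is enough to prove $(r\psync s)\psync t\obseqv r\psync(s\psync t)$ and $r\psync s\obseqv s\psync r$ for plays, and such an equivalence need only be tested against plays; but $\outcome{((r\psync s)\psync t)\psync q}$ and $\outcome{(r\psync(s\psync t))\psync q}$ are both $\Omega(r,s,t,q)$, and $\outcome{(r\psync s)\psync q}$ and $\outcome{(s\psync r)\psync q}$ are both $\Omega(r,s,q)$. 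Compatibility then follows from $(\star)$ plus one small manipulation: given $u\obseqv u'$, testing $u\psync v\obseqv u'\psync v$ against a play $w$ and expanding multilinearly, I would use $\outcome{(r\psync s)\psync t}=\outcome{r\psync(s\psync t)}=\sum_p c_p\,\outcome{r\psync p}$ (where $s\psync t=\sum_p c_p\,p$) to turn $\sum_r u_r\outcome{(r\psync s)\psync t}$ into $\sum_p c_p\,\outcome{u\psync p}=\sum_p c_p\,\outcome{u'\psync p}$, so that summing over $s,t$ gives $\outcome{(u\psync v)\psync w}=\outcome{(u'\psync v)\psync w}$; with commutativity this upgrades to the two-sided congruence $u\obseqv u',v\obseqv v'\Rightarrow u\psync v\obseqv u'\psync v'$, hence $\psync$ descends to $\OrdAlg{X}$ and is there associative and commutative.

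\textbf{Proving $(\star)$.} First, a bracketed product is $0$ unless the $\web{p_i}$ all lie in one $\pgroup$-orbit of finite subsets of $\web{X}$ (else some step synchronises vectors with mismatched supports), so I assume this and set $A:=\web{p_1}$, $k:=|\rpgroup{A}|$; for a play $s$ whose support lies in that orbit, write $\mathsf{sat}_A(s):=\multiplicity{s}\sum_{s'\in\porbit{s},\,\web{s'}=A}s'$, the variant of the saturation of Definition~\ref{def:saturate} that targets support $A$ (so $\mathsf{sat}_A(r)=\saturate r$ when $A=\web{r}$). Definition~\ref{def:psync} gives at once, and by bilinear extension, that $w\psync v=w\mathbin{\bar{\sync}}\mathsf{sat}_A(v)$ whenever $w$ is a vector on support $A$ and $v$ a vector on a $\pgroup$-isomorphic support — Lemma~\ref{lemma:saturate} across isomorphic supports. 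I would then show, by induction on the parenthesisation, that \emph{every} bracketing $[T]$ of $p_1\psync\cdots\psync p_n$ in that fixed order equals
\[
  p_1\mathbin{\bar{\sync}}\mathsf{sat}_A(p_2)\mathbin{\bar{\sync}}\cdots\mathbin{\bar{\sync}}\mathsf{sat}_A(p_n).
\]
In the step $[T]=[T_1]\psync[T_2]$, the induction hypothesis makes $[T_1]$ this normal form for $p_1,\dots,p_j$ (supported on $A$) and $[T_2]$ the normal form for $p_{j+1},\dots,p_n$ (supported on $B:=\web{p_{j+1}}$); then $[T]=[T_1]\mathbin{\bar{\sync}}\mathsf{sat}_A([T_2])$, and I would reduce $\mathsf{sat}_A([T_2])$ to $\mathsf{sat}_A(p_{j+1})\mathbin{\bar{\sync}}\cdots\mathbin{\bar{\sync}}\mathsf{sat}_A(p_n)$ by fixing $\sigma_0\in\pgroup$ with $\sigma_0 B=A$ and using that on vectors on support $B$ one has $\mathsf{sat}_A(x)=\sum_{\rho\in\rpgroup{A}}\rho\sigma_0 x$, that $\sigma_0\mathsf{sat}_B(p_i)=\mathsf{sat}_A(p_i)$, that each $\mathsf{sat}_A(p_i)$ is $\rpgroup{A}$-invariant, and that the $\pgroup$-action distributes over $\mathbin{\bar{\sync}}$, which is associative. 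From the normal form, expanding the saturations (static synchronisation being associative and commutative among plays on $A$) gives
\[
  \outcome{[T]}=\Bigl(\textstyle\prod_{i\ge 2}\multiplicity{p_i}\Bigr)\cdot\#\bigl\{(p_2',\dots,p_n') : p_i'\in\porbit{p_i},\ \web{p_i'}=A,\ \textstyle\bigl(\bigcup_{i\ge 1}\order{p_i'}\bigr)^{*}\text{ acyclic}\bigr\}
\]
with $p_1':=p_1$; folding the first factor into the count too (replacing $p_1$ by another support-$A$ element of its orbit is a permutation, which preserves acyclicity) and using the orbit–stabiliser identity $\multiplicity{p}\cdot\#\{p'\in\porbit{p}:\web{p'}=A\}=k$ (essentially the count made in the proof of Lemma~\ref{lemma:saturate}), this becomes $\frac{1}{k}\bigl(\prod_{i\ge 1}\multiplicity{p_i}\bigr)$ times the number of tuples of support-$A$ representatives of $\porbit{p_1},\dots,\porbit{p_n}$ with acyclic order-union — a quantity symmetric in $p_1,\dots,p_n$, and unchanged (as is $k$) if $A$ is replaced by a $\pgroup$-isomorphic set, which is what a reordering of the product would require. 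Hence $\outcome{[T]}$ depends only on $\{\porbit{p_1},\dots,\porbit{p_n}\}$, giving $(\star)$.

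\textbf{Main obstacle.} The hard part is the normal-form lemma, and the two delicate points are: $(i)$ $\psync$ is really not associative at the vector level once the $\web{p_i}$ are pairwise $\pgroup$-isomorphic but not equal — $(p\psync q)\psync t$ and $p\psync(q\psync t)$ then sit on different supports — so the transport by $\sigma_0$ and the $\rpgroup{A}$-invariance are essential to glue the two inductive sub-products; and $(ii)$ $\Scal$ is only a semiring, not necessarily regular, so the positive integers $k$ and $\multiplicity{p_i}$ cannot be divided out, and the computation must be arranged so that both sides of every identity are \emph{literally} the same $\Nat$-linear combination of plays (and the two outcomes the same natural number), not merely equal after cancellation.
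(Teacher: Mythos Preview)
Your argument is essentially correct but takes a longer route than the paper, and the ``main obstacle'' you flag turns out not to exist.

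The paper's key simplification is the one-line remark $r\psync\sigma s=r\psync s$ for all $\sigma\in\pgroup[X]$ (immediate from the definition, since $\porbit{\sigma s}=\porbit{s}$ and $\multiplicity{\sigma s}=\multiplicity{s}$). Combined with the equivariance $\sigma(s\psync t)=(\sigma s)\psync(\sigma t)$, this lets one assume \emph{from the outset} that $r,s,t$ all have the same support $A$. Then Lemma~\ref{lemma:saturate} gives $r\psync(s\psync t)=r\sync\saturate(s\sync\saturate t)$, and a two-line re-indexing over $\rpgroup{A}\times\rpgroup{A}$ (replacing $\sigma\tau$ by $\tau$) yields $(r\psync s)\psync t$ \emph{as an equality in $\PreOrdAlg{X}$}. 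So $\psync$ is strictly associative, not merely up to $\obseqv$; your obstacle $(i)$ about mismatched supports evaporates, and your normal-form induction, the transport by $\sigma_0$, and the $\rpgroup{A}$-invariance checks are all subsumed by this reduction.

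Once strict associativity is in hand, compatibility is immediate without any multilinear expansion: for $u\obseqv u'$ and any $v,w$, $\outcome{(u\psync v)\psync w}=\outcome{u\psync(v\psync w)}=\outcome{u'\psync(v\psync w)}=\outcome{(u'\psync v)\psync w}$. Commutativity up to $\obseqv$ is then a separate three-line computation using Lemma~\ref{lemma:perm-eq}: $s\psync r=\sum_\sigma s\sync\sigma r=\sum_\sigma\sigma(\sigma^{-1}s\sync r)\obseqv\sum_\sigma\sigma^{-1}s\sync r=r\psync s$. Your combinatorial fact $(\star)$, the symmetrized outcome formula, and the $1/k$ bookkeeping are therefore unnecessary; in particular your point $(ii)$ about division in a general semiring never arises in the paper's proof, since every identity is already a literal equality of vectors. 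What your approach does buy is an explicit closed form for the $n$-fold outcome, which is nice to have but not needed for the proposition.
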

\begin{proof}
  We first prove that permuted synchronisation is strictly associative.
  Consider three plays $r,s,t$.
  For all $σ∈\pgroup[X]$ we have $r\psyncσ s=r\psync s$, so we can
  assume that $r,s,t$ have equal support (if no permutation can let them have
  the same support, then synchronisation in any order is zero).
  Then we have $r\psync(s\psync t)=r\sync\saturate(s\sync\saturate t)$ by
  Lemma~\ref{lemma:saturate}, hence
  \[
    r\psync(s\psync t)
    = \sum_{σ,\tau∈\rpgroup{\web{r}}} r\syncσ(s\sync\tau t)
    = \sum_{σ,\tau∈\rpgroup{\web{r}}} (r\syncσ s)\syncσ\tau t
    = \sum_{σ,\tau∈\rpgroup{\web{r}}} (r\syncσ s)\sync\tau t
    = (r\psync s)\psync t
  \]
  using associativity of strict synchronisation and the fact that, for a fixed
  $σ$, the permutations $σ\tau$ and $\tau$ range over the same set.
  This extends to all vectors by linearity.

  Associativity implies compatibility with
  observational equivalence: for two equivalent vectors $u\obseqv u'$ and an
  arbitrary $v∈\PreOrdAlg{X}$, for all $w∈\PreOrdAlg{X}$ we have
  $\outcome{(u\psync v)\psync w}=\outcome{u\psync(v\psync w)}
  =\outcome{u'\psync(v\psync w)}=\outcome{(u'\psync v)\psync w}$ so
  $u\psync v\obseqv u'\psync v$.

  Since observational equivalence is preserved by permutations, using
  commutativity of strict synchronisation we have
  \[
    s\psync r
    = \sum_{σ∈\rpgroup{\web{r}}} s\syncσ r
    = \sum_{σ∈\rpgroup{\web{r}}} σ(σ^{-1}s\sync r)
    \obseqv \sum_{σ∈\rpgroup{\web{r}}} σ^{-1}s\sync r
    = \sum_{σ∈\rpgroup{\web{r}}} r\syncσ^{-1}s
    = r\psync s .
  \]
  which proves commutativity of permuted synchronisation up to observational
  equivalence.
\end{proof}

\begin{lemma}\label{lemma:partial-neutral}
  Let $X$ be an arena and let $(u_i)_{i∈I}$ be a finite family of
  vectors in $\PreOrdAlg{X}$.
  There exists a vector $e$ and an integer $n>0$
  such that for all $i∈I$, $u_i\psync e=n\,u_i$.
\end{lemma}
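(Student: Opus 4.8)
The plan is to take for $e$ a linear combination of neutral plays $e_A=(A,\operatorname{id}_A)$, one for each support $A$ that actually occurs in the family $(u_i)_{i∈I}$, with integer coefficients tuned to absorb multiplicities. First I would write each $u_i$ as a finite linear combination of plays and collect the finite set $\mathcal{A}\subseteq\PartsFin{\web{X}}$ of all supports $\web{s}$ of plays $s$ appearing with nonzero coefficient in some $u_i$ (if every $u_i$ vanishes, then $e:=0$ and $n:=1$ work, so assume $\mathcal{A}\neq\emptyset$).

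The computation to establish first is that, for a play $r$ and a support $A\in\mathcal{A}$, one has $r\psync e_A=\multiplicity{e_A}\,r$ when $\web{r}$ lies in the $\pgroup[X]$-orbit of $A$, and $r\psync e_A=0$ otherwise. This holds because $σe_A=e_{σA}$ for every $σ$, so in the sum of Definition~\ref{def:psync} the only term whose support equals $\web{r}$ is $r\sync e_{\web{r}}$, present exactly when $\web{r}$ is in the orbit of $A$; and $r\sync e_{\web{r}}=r$ since $\order{r}$, being a preorder, already contains the identity and is transitively closed, so $(\order{r}\cup\operatorname{id}_{\web{r}})^*=\order{r}$. I would also note that $\multiplicity{e_A}$ — which equals $|\rpgroup{A}|$ — is a strictly positive integer depending only on the orbit of $A$, conjugate subsets having conjugate setwise and pointwise stabilizers.

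Then I would partition $\mathcal{A}$ into $\pgroup[X]$-orbits, set $n_o:=\sum_{A∈\mathcal{A},\,A∈o}\multiplicity{e_A}$, which is $>0$, for each orbit $o$ meeting $\mathcal{A}$, take $n:=\prod_o n_o>0$, and define
\[
  e := \sum_{A∈\mathcal{A}} \frac{n}{n_{o(A)}}\,e_A ,
\]
where $o(A)$ is the orbit of $A$ and each $n/n_{o(A)}$ is a positive integer since $n_{o(A)}$ divides $n$. For a play $r$ occurring in some $u_i$, with $o$ the orbit of $\web{r}$, the computation above gives
\[
  r\psync e = \sum_{A∈\mathcal{A},\,A∈o}\frac{n}{n_o}\,\multiplicity{e_A}\,r = \frac{n}{n_o}\,n_o\,r = n\,r ,
\]
and by bilinearity of $\psync$ this extends to $u_i\psync e=n\,u_i$ for every $i∈I$, as required.

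The step I expect to need real care is the weighting by $n/n_{o(A)}$. The naive choice $e=\sum_{A∈\mathcal{A}}e_A$ would only yield $u_i\psync e=n_o\,u_i$ with a coefficient $n_o$ that varies with the orbit of the support, so the $e_A$ must be rescaled to bring all these coefficients to a common value; since $\Scal$ need not allow division, that common value is forced to be an integer multiple $n$ of the $n_o$, which is precisely why the statement produces a factor $n>0$ rather than a genuine neutral element for $\psync$.
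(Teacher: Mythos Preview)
Your proof is correct and follows essentially the same approach as the paper: both recognise that $r\psync e_A=\multiplicity{e_A}\,r$ whenever $\web{r}$ lies in the orbit of $A$, and both build $e$ as an integer-weighted sum of neutral plays so that every relevant support gets the same scaling factor. The only cosmetic differences are that the paper selects one representative support per orbit (via the choice map $\repr{\cdot}$) rather than keeping all of $\mathcal{A}$, and takes $n$ to be the \emph{lcm} of the $\multiplicity{e_A}$ rather than a product, yielding a smaller $n$ but the same conclusion.
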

\begin{proof}
  Observe that for all finite subset $A$ of $X$, every setwise stabilizer of
  $A$ is a stabilizer of the $A$-neutral play $e_A$, so we have
  $\saturate e_A=\multiplicity{e_A}e_A$, and subsequently for all play
  $s∈\Plays{X}$ such that $\repr{\web{s}}=\repr{A}$ we get
  $s\psync e_A=\multiplicity{e_A}(s\sync e_A)=\multiplicity{e_A}s$.
  Call $P$ the set of all $\repr{\web{r}}$ such that the play $r$ has a
  non-zero coefficient in some $u_i$, then $P$ is finite since each $u_i$ is a
  finite linear combination of plays.
  Let $n$ be the least common multiple of the $\multiplicity{e_A}$ for $A$ in
  $P$, then the vector
  \[
    e := \sum_{A∈P} \frac{n}{\multiplicity{e_A}} e_A
  \]
  satisfies $e\psync u_i=nu_i$ for each $i$ by construction.
  Note that the coefficients $n/\multiplicity{e_A}$ are all natural numbers,
  by construction.
\end{proof}

\begin{corollary}
  If $\Scal$ is regular then outcome is preserved by observational
  equivalence.
\end{corollary}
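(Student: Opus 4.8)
The plan is to unwind what the statement asks: "outcome is preserved by observational equivalence" means that $u\obseqv u'$ implies $\outcome{u}=\outcome{u'}$, so that $\outcome{\cdot}$ descends to a well-defined linear form on the quotient $\OrdAlg{X}$. The only ingredient needed beyond the definitions is Lemma~\ref{lemma:partial-neutral}, which supplies a common partial neutral for any finite family of vectors; regularity of $\Scal$ will be used exactly once, at the very end.

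First I would apply Lemma~\ref{lemma:partial-neutral} to the two-element family $(u,u')$, obtaining a vector $e\in\PreOrdAlg{X}$ and an integer $n>0$ with $u\psync e=n\,u$ and $u'\psync e=n\,u'$. Next, since by hypothesis $u\obseqv u'$, the defining property of observational equivalence applied to the test vector $v=e$ gives $\outcome{u\psync e}=\outcome{u'\psync e}$, that is $\outcome{n\,u}=\outcome{n\,u'}$; and since outcome is linear this is the equality $n\,\outcome{u}=n\,\outcome{u'}$ in $\Scal$. Finally, $n$ is a non-zero integer of $\Scal$, so regularity of $\Scal$ lets us cancel it, yielding $\outcome{u}=\outcome{u'}$.

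There is no real obstacle: the whole combinatorial content sits in Lemma~\ref{lemma:partial-neutral}, and the hypothesis on $\Scal$ is there precisely to make the multiplier $n$ cancellable — over, say, $\Nat$ the equality $n\,\outcome{u}=n\,\outcome{u'}$ would not in general simplify, which is why the corollary is stated for regular semirings. The only points worth a line of checking are that $e$ and $n$ depend only on $u$ and $u'$ and not circularly on their outcomes (clear from the explicit construction $e=\sum_{A\in P}\frac{n}{\multiplicity{e_A}}e_A$ in the proof of the lemma), and that the $n$ produced there is a non-zero integer in the sense of the definition of regularity, which is immediate since $n>0$ is obtained as a least common multiple of strictly positive integers.
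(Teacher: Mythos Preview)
Your proof is correct and essentially identical to the paper's: apply Lemma~\ref{lemma:partial-neutral} to the pair $(u,u')$, use the definition of $\obseqv$ with test vector $e$, then cancel the integer $n$ by regularity. One small slip in your commentary: $\Nat$ \emph{is} regular (multiplication by a nonzero integer is cancellative there), so it is not a counterexample; a genuine example where cancellation fails would be a semiring of positive characteristic such as $\mathbb{Z}/p\mathbb{Z}$.
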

\begin{proof}
  Let $u\obseqv v$ be a pair of equivalent vectors in $\PreOrdAlg{X}$.
  By Lemma~\ref{lemma:partial-neutral} there is a vector $e$ and an integer
  $n≠0$ such that  $u\psync e=nu$ and $v\psync e=nv$, so we have
  $n\outcome{u}=\outcome{u\psync e}=\outcome{v\psync e}=n\outcome{v}$.
  By regularity, we can deduce $\outcome{u}=\outcome{v}$.
\end{proof}

As a consequence, the order algebra $\OrdAlg{X}$, which is defined as the
quotient of $\PreOrdAlg{X}$ by observational equivalence, is a commutative
semialgebra over $\Scal$ with synchronisation $\psync$ as the product,
and outcome $\outcome{⋅}$ is a linear form over it.
The choice of representants for orbits of finite sets and plays induces the
following representation property of $\OrdAlg{X}$ in the static algebra
$\OrdAlg{\web{X}}$.

\begin{proposition}\label{prop:represent}
  Let $X$ be an arena.
  Define the linear map $\psplit[X]:\PreOrdAlg{X}\to\PreOrdAlg{X}$ as
  \[
    \psplit[X](u) := \saturate[X]\repr[X]{u} .
  \]
  For all $u,v∈\PreOrdAlg{X}$,
  \[
    u \obseqv[X] v
    \qquad\text{if and only if}\qquad
    \psplit[X](u) \obseqv[{\web{X}}] \psplit[X](v) .
  \]
  Hence $\psplit[X]$ is an injective map from $\OrdAlg{X}$ into
  $\OrdAlg{\web{X}}$.
  For all $u,v∈\OrdAlg{X}$,
  \[
    \psplit[X](u \psync v) = \psplit[X](u) \sync \psplit[X](v) .
  \]
\end{proposition}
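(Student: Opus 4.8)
The plan is to reduce everything to one identity relating synchronisation in $X$ to synchronisation in the static arena $\web{X}$, and then to read off all three claims from it. Note first that $\PreOrdAlg{X}$ and $\PreOrdAlg{\web{X}}$ have the same underlying semimodule, that in $\web{X}$ permuted synchronisation coincides with $\mathbin{\bar{\sync}}$, and that $\psplit[X]$ is linear because $\repr{-}$ and $\saturate[X]$ are. The key identity, which I call $(\star)$, is this: \emph{for every $u\in\PreOrdAlg{X}$ and every play $s$ with $\web{s}=\repr{\web{s}}$, one has $\outcome{\psplit[X](u)\sync s}=\outcome{u\psync s}$.}

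To prove $(\star)$ I would first observe that $\outcome{u\psync s}$ depends only on the $\obseqv[X]$-class of $u$ (this is the definition of $\obseqv[X]$ with test vector $s$), and that $\psplit[X](u)=\saturate[X]\repr{u}$ only sees $\repr{u}$; so I may replace $u$ by $\repr{u}=\sum_i a_i p_i$, a combination of representative plays, and then by bilinearity treat a single representative play $u=p$. If $\web{p}\neq\web{s}$ both sides vanish: the left one because every summand of $\saturate[X] p$ has support $\web{p}$, the right one because $\web{p}$ and $\web{s}$, being representative sets in the same orbit, must coincide, hence $p\psync s=0$. If $\web{p}=\web{s}=A$, then Lemma~\ref{lemma:saturate} gives $\psplit[X](p)\sync s=\saturate[X] p\mathbin{\bar{\sync}}s=s\psync p$, so it remains to check $\outcome{s\psync p}=\outcome{p\psync s}$; writing the two sides as $\sum_{\tau\in\rpgroup{A}}\outcome{s\mathbin{\bar{\sync}}\tau p}$ and $\sum_{\tau\in\rpgroup{A}}\outcome{p\mathbin{\bar{\sync}}\tau s}$ via Lemma~\ref{lemma:saturate}, they agree after the change of variable $\tau\mapsto\tau^{-1}$, using that outcome is invariant under permutations and $\mathbin{\bar{\sync}}$ is commutative.

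The equivalence then follows quickly. Since $\psplit[X](u)$ and $\psplit[X](v)$ are supported only on plays with representative support, testing them against a play $s$ with $\web{s}\neq\repr{\web{s}}$ gives $0$ on both sides, so $\psplit[X](u)\obseqv[{\web{X}}]\psplit[X](v)$ is equivalent to $\outcome{\psplit[X](u)\sync s}=\outcome{\psplit[X](v)\sync s}$ for all $s$ of representative support, which by $(\star)$ means $\outcome{u\psync s}=\outcome{v\psync s}$ for all such $s$. On the other hand $u\obseqv[X]v$ means this last equality for \emph{all} plays $s$; but $\outcome{u\psync s}$ is unchanged when $s$ is replaced by $\sigma s$ (since $\porbit{s}$ and $\multiplicity{s}$ are orbit invariants, which is essentially Lemma~\ref{lemma:perm-eq}), and every orbit of plays contains $\repr{s}$, whose support $\repr{\web{s}}$ is representative. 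Hence the two conditions coincide, which is the ``if and only if''; its forward direction makes $\psplit[X]$ well defined on $\OrdAlg{X}$ and its backward direction makes the induced map into $\OrdAlg{\web{X}}$ injective.

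For the multiplicativity identity it suffices, by the same reduction, to check $\outcome{\psplit[X](u\psync v)\sync s}=\outcome{\bigl(\psplit[X](u)\sync\psplit[X](v)\bigr)\sync s}$ for every play $s$ of representative support $A$. The left side equals $\outcome{(u\psync v)\psync s}$ by $(\star)$; using $u\psync v\obseqv[X]\repr{u}\psync\repr{v}$, expanding $\repr{u},\repr{v}$ over representative plays $p_j,q_i$, discarding the pairs with $\web{p_j}\neq\web{q_i}$ (then $p_j\psync q_i=0$) or $\web{p_j}\neq A$ (then synchronising further with $s$ gives $0$), and applying Lemma~\ref{lemma:saturate} together with the strict associativity of $\psync$, it becomes $\sum a_j b_i\,\outcome{p_j\mathbin{\bar{\sync}}\saturate[X] q_i\mathbin{\bar{\sync}}\saturate[X] s}$, summed over the pairs with $\web{p_j}=\web{q_i}=A$. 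The right side, by the same expansion and Lemma~\ref{lemma:saturate}, becomes $\sum a_j b_i\,\outcome{\saturate[X] p_j\mathbin{\bar{\sync}}\saturate[X] q_i\mathbin{\bar{\sync}}s}$ over the same pairs. Writing each as a double sum over $\rpgroup{A}\times\rpgroup{A}$, the substitution $(\tau_1,\tau_2)\mapsto(\tau_2^{-1},\tau_2^{-1}\tau_1)$ together with permutation invariance and commutativity of outcome identifies the two term by term. The one delicate point throughout is the clash between $\psync$, which synchronises plays with merely orbit-equivalent supports, and $\mathbin{\bar{\sync}}$, which demands equal ones: this is exactly what forces the detour through a fixed choice of representants with $\web{\repr{r}}=\repr{\web{r}}$ and the restriction of $(\star)$ to test plays of representative support; once that is in place, the group-theoretic reindexings above are the only real work.
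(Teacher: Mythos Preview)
Your proof is correct and rests on the same ingredients as the paper's: Lemma~\ref{lemma:saturate} as the bridge between $\psync$ and $\mathbin{\bar{\sync}}$, the reduction to plays with representative support, and group reindexings over $\rpgroup{A}$. The organization differs slightly. You isolate a single identity $(\star)$ and derive both directions of the equivalence from it; the paper instead proves each direction separately, deriving in the forward direction the equality $\outcome{\saturate\repr{u}\sync s}=\outcome{\saturate\repr{u}\sync\repr{s}}$ via a permutation in the setwise stabiliser of $\web{s}$ (which is your observation that $\outcome{u\psync s}$ is constant on the orbit of $s$, seen from the other side). For multiplicativity the paper computes $\saturate\repr{r\psync s}$ directly and shows it equals $\saturate\repr{r}\sync\saturate\repr{s}$ as vectors in $\PreOrdAlg{X}$, whereas you only show equality after applying $\outcome{-\sync s}$ for all test plays $s$, i.e.\ equality in $\OrdAlg{\web{X}}$; this is all the statement asks, and your reindexing $(\tau_1,\tau_2)\mapsto(\tau_2^{-1},\tau_2^{-1}\tau_1)$ is the same trick the paper uses inside its direct computation. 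Either route works; yours is a bit more uniform, the paper's gives the slightly stronger on-the-nose identity for plays.
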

\begin{proof}
  For compatibility with observational equivalences,
  first suppose that $u$ and $v$ are such that
  $\saturate\repr{u}\obseqv_{\web{X}}\saturate\repr{v}$.
  Consider a play $r∈\Plays{X}$, then we have
  $\outcome{u\psync r}=\outcome{\saturate\repr{u}\sync\repr{r}}
  =\outcome{\saturate\repr{v}\sync\repr{r}}=\outcome{v\psync r}$ 
  using Lemma~\ref{lemma:saturate}, so we have $u\obseqv[X]v$.

  Reciprocally suppose $u\obseqv[X]v$, then by definition for all play
  $r∈\Plays{X}$ we have $\outcome{u\psync r}=\outcome{v\psync r}$.
  By the remarks above, the outcome $\outcome{u\psync r}$ is equal to
  $\outcome{r\psync u}=\outcome{\repr{r}\sync\saturate\repr{u}}$,
  so we have
  $\outcome{\repr{r}\sync\saturate\repr{u}}=\outcome{\repr{r}\sync\saturate\repr{v}}$
  for all $r$.
  Let $s$ be an arbitrary play in $\Plays{X}$.
  Writing $u$ as a linear combination $\sum_{i∈I}λ_ir_i$, we get
  $
    \outcome{\saturate\repr{u}\sync s} =
    \sum_{i∈I} λ_i \outcome{\saturate\repr{r_i}\sync\repr{s}} .
  $
  If $\web{s}$ is not a representant subset of $\web{X}$, then this sum is
  zero since $\saturate{r_i}$ is a combination of plays whose supports are
  representant subsets.
  The same applies to $v$ so we have
  $\outcome{\saturate\repr{u}\sync s}=\outcome{\saturate\repr{v}\sync s}=0$.
  Now suppose that $\web{s}$ is a representant subset of $\web{X}$, then the
  representant $\repr{s}$ of $s$ has the same support as $s$ by definition, so
  there is a permutation $σ$ such that $σ s=\repr{s}$ and
  $σ\web{s}=\web{s}$.
  For all $i∈I$, if $\web{\repr{r_i}}=\web{s}$, then by definition of
  saturation we have $σ\saturate\repr{r_i}=\saturate\repr{r_i}$, so we
  get $\outcome{\saturate\repr{r_i}\sync s}
  =\outcome{\saturate\repr{r_i}\sync\repr{s}}$.
  If $\web{\repr{r_i}}\neq\web{s}$, then the equality holds trivially since
  both sides are $0$.
  By linearity, we can deduce $\outcome{\saturate\repr{u}\sync s}
  =\outcome{\saturate\repr{u}\sync\repr{s}}$, and applying the same reasoning
  to $v$, from our initial remarks we deduce
  $\outcome{\saturate\repr{u}\sync s}=\outcome{\saturate\repr{v}\sync s}$.
  Hence we get $\saturate\repr{u}\obseqv_{\web{X}}\saturate\repr{v}$.

  For the commutation property with synchronisation, consider two plays
  $r,s∈\Plays{X}$.
  If the supports $\web{\repr{r}}$ and $\web{\repr{s}}$ are distinct, then
  clearly $\psplit[X](r\psync s)=\psplit[X](r)\sync\psplit[X](s)=0$.
  Otherwise, let $A$ be this support.
  If $ρ$ is a permutation in $\pgroup[X]$ such that $ρ r=\repr{r}$, we
  have $\repr{r\psync s}=\repr{ρ(r\psync s)}=\repr{ρ r\psync s}
  =\repr{\repr{r}\psync s}$.
  Moreover, $\repr{r}\psync s=\repr{r}\sync\saturate\repr{s}$ so
  all terms in $\repr{r}\psync s$ have support $A$, and since for all play $t$
  with $\web{t}=A$ we have $\saturate\repr{t}=\saturate t$, we get
  \[
    \saturate \repr{r \psync s}
    = \saturate (\repr{r} \psync s)
    = \sum_{σ∈\rpgroup{A}} σ(\repr{r} \psync s)
    = \sum_{σ∈\rpgroup{A}} σ\repr{r} \psync s
    = \sum_{σ∈\rpgroup{A}}
      \sum_{\tau∈\rpgroup{A}} σ\repr{r} \sync \tau\repr{s}
    = \saturate\repr{r} \sync \saturate\repr{s}
  \]
  which concludes the proof.
\end{proof}

The commutation property could actually be written
$\psplit[X](u \psync v) = \psplit[X](u) \psync \psplit[X](v)$
since permuted and static synchronisations coincide in the static order
algebra $\OrdAlg{\web{X}}$, but we keep the notations distinct to stress the
fact that the second is static.
This establishes an injective morphism of $\Scal$-semialgebras, however this
morphism does not preserve outcomes: for a play $s$, we have
$\outcome{\psplit s}=\sharp(\rpgroup{\web{s}})\outcome{s}$; since this factor
depends on $\web{s}$, the outcome of $\psplit(u)$ is not even proportional to
that of $u$ in general.

\begin{proposition}\label{prop:units}
  Let $X$ be an arena.
  Assume $\Scal$ is rational.
  Then the $\Scal$-semialgebra $\OrdAlg{X}$ has a unit element if and only if
  the web $\web{X}$ is finite.
\end{proposition}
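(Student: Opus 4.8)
The plan is to prove both implications. For the easy direction, suppose $\web{X}$ is finite. Then $\Plays{X}$ is finite, and $X$ has only finitely many finite subsets, so the construction of Lemma~\ref{lemma:partial-neutral} applied to the (finite!) family of all plays yields a single vector $e$ and an integer $n>0$ with $s\psync e=n\,s$ for every play $s$. Since $\Scal$ is rational, $n$ is invertible, so $\tfrac1n e$ is a two-sided unit for $\psync$ on all of $\PreOrdAlg{X}$, hence a unit in $\OrdAlg{X}$. Concretely, $e=\sum_{A\in P}\tfrac{n}{\multiplicity{e_A}}e_A$ where $P$ ranges over all representant subsets of $\web{X}$, and $\tfrac1n e=\sum_A \tfrac1{\multiplicity{e_A}}e_A$ is the unit; one should check this directly from Lemma~\ref{lemma:saturate} and the computation $\saturate e_A=\multiplicity{e_A}e_A$ already recorded in the proof of Lemma~\ref{lemma:partial-neutral}.

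For the converse, suppose $\web{X}$ is infinite and, for contradiction, that $\OrdAlg{X}$ has a unit $\epsilon$, represented by some vector $u\in\PreOrdAlg{X}$. Then $u$ is a finite linear combination of plays, so the union $B$ of the supports $\web{r}$ over all plays $r$ occurring in $u$ is a finite subset of $\web{X}$. Pick any event $x\in\web{X}\setminus B$ and consider the singleton play $\delta_x$ with support $\{x\}$ and empty order. I claim $\delta_x\psync u$ cannot be observationally equivalent to $\delta_x$. Indeed, by bilinearity $\delta_x\psync u$ is a linear combination of terms $\delta_x\psync r$, and each such term, by Definition~\ref{def:psync}, is a sum of plays $\delta_x\sync r'$ with $\web{r'}=\web{\delta_x}=\{x\}$; but $r$ occurs in $u$, so $\web{r}\subseteq B$ and $\web{r'}=\web{r}$ for any $r'$ in the orbit of $r$, forcing $\web{r}=\{x\}$, which is impossible since $x\notin B$. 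Hence every term vanishes and $\delta_x\psync u=0$. To finish, one exhibits a test $v$ separating $0$ from $\delta_x$: take $v=\delta_x$ itself, so that $\outcome{0\psync\delta_x}=0$ while $\outcome{\delta_x\psync\delta_x}=\outcome{\delta_x\sync\delta_x}=\outcome{\delta_x}=1$ (the multiplicity factor and the orbit sum contribute a positive integer, which is non-zero by characteristic zero, a consequence of rationality). This contradicts $\epsilon$ being a unit.

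The main obstacle is the converse direction, and specifically making the separation argument airtight: one must be sure that $\delta_x\psync u=0$ genuinely holds at the level of $\PreOrdAlg{X}$ (which it does, by the support-matching constraint in $\psync$), and that $0\not\obseqv\delta_x$, which needs a test against which the two differ — here $\delta_x$ against itself works, but one should double-check that no permutation in $\pgroup[X]$ sends $\{x\}$ somewhere that would change the count, which is harmless since the outcome $\outcome{\delta_x\psync\delta_x}$ is a positive integer multiple of $1$, non-zero by characteristic zero. A secondary point worth stating carefully is that the unit of $\OrdAlg{X}$, if it exists, is represented by some honest vector of $\PreOrdAlg{X}$ with finite support, which is what lets us extract the finite set $B$; this is immediate since $\OrdAlg{X}$ is by definition a quotient of $\PreOrdAlg{X}$.
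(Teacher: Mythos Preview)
Your forward direction is correct and matches the paper: when $\web{X}$ is finite, $\Plays{X}$ is finite, Lemma~\ref{lemma:partial-neutral} applied to the whole family of plays gives a vector $e$ with $e\psync s=ns$ for all plays $s$, and rationality lets you divide by $n$.

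The converse, however, contains a genuine error. You assert that $\web{r'}=\web{r}$ for every $r'$ in the orbit $\porbit[X]{r}$, but this is false in general: a permutation $\sigma\in\pgroup[X]$ sends $r$ to a play with support $\sigma\web{r}$, which need not equal $\web{r}$. For a concrete failure of your argument, take $X=\PNat$ (web $\Nat$, group all permutations of $\Nat$) and suppose the singleton play $e_{\{0\}}$ occurs in $u$. Then for \emph{any} $x\in\Nat$ the orbit of $e_{\{0\}}$ contains $e_{\{x\}}$, so $\delta_x\psync e_{\{0\}}=\delta_x\neq 0$. Your choice of a point $x\notin B$ therefore does not force $\delta_x\psync u=0$; in a sufficiently transitive arena no such choice of singleton can work.

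The paper's remedy is to exploit the one feature of supports that \emph{is} invariant under $\pgroup[X]$: their cardinality. Given a candidate unit $u$, choose $n$ strictly larger than the cardinality of every support appearing in $u$, and pick $A\subseteq\web{X}$ of cardinality $n$ (possible because $\web{X}$ is infinite). No play $r$ occurring in $u$ can have an orbit element $r'$ with $\web{r'}=A$, since $|\web{r'}|=|\web{r}|<n$; hence $u\psync e_A=0$. Since $e_A\not\obseqv 0$ (for instance $\outcome{e_A\psync e_A}=\multiplicity{e_A}$, a non-zero integer in a rational $\Scal$), this rules out $u$ as a unit. Your argument becomes correct once you replace the singleton $\delta_x$ by such an $e_A$; the separation step you wrote (testing against the play itself) then goes through unchanged.
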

\begin{proof}
  Suppose $\web{X}$ is finite, then the set of plays $\Plays{X}$ is finite, so
  we can apply Lemma~\ref{lemma:partial-neutral} to the whole set $\Plays{X}$,
  which provides a vector $e$ and a non-zero integer $n$ such that
  $e\psync s=ns$ for all $s∈\Plays{X}$; then $e/n$ is a neutral element for
  synchronisation.
  Now suppose that $\web{X}$ is infinite.
  Let $u$ be an arbitrary vector in $\PreOrdAlg{X}$.
  Since $u$ is a finite linear combination of plays with finite support, there
  is an integer $n$ such that all non-zero components of $u$ are plays with
  supports of cardinal strictly less than $n$.
  Let $A$ be a subset of $\web{X}$ of cardinal $n$, then we must have
  $u\psync e_A=0\neq e_A$.
  This implies that no finite linear combination of plays can be neutral.
\end{proof}

\subsection{Bases} 
\label{sec:bases}

In this section, we describe the $\Scal$-semimodule $\OrdAlg{X}$ by providing
a subset of plays whose equivalence classes forms a basis.
Linear independence does not have a unique definition for modules over
arbitrary semirings~\cite{agg09:independence}, so we state the appropriate
definition for our needs, which clearly extends the standard one for vector
spaces:
\begin{definition}
  Let $\Scal$ be a semiring and $E$ a semimodule over $\Scal$.
  A family $(u_i)_{i∈I}$ in $E$ is linearly independent if, for any two
  families $(λ_i)_{i∈I}$ and $(\mu_i)_{i∈I}$ in $\Scal$ with
  finite support, if
  $\sum_{i∈I}λ_iu_i=\sum_{i∈I}\mu_iu_i$ then for all $i$,
  $λ_i=\mu_i$.
  A basis of $E$ is a linearly independent generating family.
\end{definition}

We first concentrate on the case of static order algebras.
The first thing we can remark about observational equivalence is that plays
of different supports are always independent, since compatibility
explicitly requires having the same support, so we have the following
decomposition:
\begin{proposition}\label{prop:direct-sum}
  For a finite static arena $X$, let $\PreOrdAlgS{X}$ be the submodule of
  $\PreOrdAlg{X}$ generated by plays of support $X$.
  Define the \emph{strict order algebra} over $X$ as the submodule
  $\OrdAlgS{X}$ of $\OrdAlg{X}$ made of equivalence classes of
  elements of $\PreOrdAlgS{X}$.
  Then for all static arena $X$ we have
  \[
    \OrdAlg{X} = \bigoplus_{Y∈\PartsFin{X}} \OrdAlgS{Y}.
  \]
\end{proposition}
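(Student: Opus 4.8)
The plan is to show that $\PreOrdAlg{X}$ decomposes as a direct sum of the submodules $\PreOrdAlgS{Y}$ over $Y \in \PartsFin{X}$, that this decomposition descends to the quotient by observational equivalence, and that the summand of $\OrdAlg{X}$ coming from $\PreOrdAlgS{Y}$ is exactly $\OrdAlgS{Y}$. Let me spell out the steps.

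First, the decomposition at the level of preliminary order algebras is immediate: $\Plays{X}$ is the disjoint union over $Y \in \PartsFin{X}$ of the sets $\Plays[Y]{}$ of plays of support exactly $Y$, and $\PreOrdAlg{X}$ is the free $\Scal$-semimodule on $\Plays{X}$, so $\PreOrdAlg{X} = \bigoplus_{Y} \PreOrdAlgS{Y}$ as $\Scal$-semimodules, where $\PreOrdAlgS{Y}$ is free on $\Plays[Y]{}$.

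Next I would show that observational equivalence respects this decomposition, i.e. if $u = \sum_Y u_Y$ and $v = \sum_Y v_Y$ with $u_Y, v_Y \in \PreOrdAlgS{Y}$, then $u \obseqv v$ if and only if $u_Y \obseqv v_Y$ for every $Y$. The key observation is that for a play $s$ of support $A$ and a play $r$ of support $B \neq A$, synchronisation $r \psync s$ (or $r \sync s$, the arena being static) is $0$ by definition. Hence for any test vector $w$, writing $w = \sum_A w_A$ by support, we get $\outcome{u \psync w} = \sum_A \outcome{u_A \psync w_A}$, and moreover $\outcome{u_A \psync w_A}$ only involves plays of support $A$ on both sides. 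So testing $u$ against the single component $w_A = e_A$-multiples, or more precisely against arbitrary vectors supported on a single $A$, isolates each $u_A$: one direction is clear, and for the converse, given that $u \obseqv v$, restrict the test vector $w$ to lie in $\PreOrdAlgS{A}$ to conclude $u_A \obseqv v_A$ for that fixed $A$ (here one should check, invoking Lemma~\ref{lemma:partial-neutral} or simply the finiteness of $\Plays{X}$, that such single-support tests suffice — but in fact the displayed identity $\outcome{u \psync w} = \sum_A \outcome{u_A \psync w_A}$ for \emph{all} $w$ already gives it directly by choosing $w$ supported on one $A$). Therefore $\obseqv$ is the direct sum of its restrictions to the $\PreOrdAlgS{Y}$, and quotienting commutes with direct sums of semimodules, yielding $\OrdAlg{X} = \bigoplus_Y \bigl(\PreOrdAlgS{Y}/{\obseqv}\bigr)$.

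Finally I would identify $\PreOrdAlgS{Y}/{\obseqv}$ with $\OrdAlgS{Y}$ as defined in the statement. By definition $\OrdAlgS{Y}$ is the image of $\PreOrdAlgS{Y}$ in $\OrdAlg{X}$, i.e. $\PreOrdAlgS{Y}/({\obseqv} \cap (\PreOrdAlgS{Y})^2)$, which is exactly the $Y$-summand above — one just needs that the observational equivalence relation internal to $\PreOrdAlgS{Y}$ (testing only against same-support vectors) agrees with the restriction of the global $\obseqv$, which is precisely what the previous step established. Assembling the pieces gives the claimed identity.

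The main obstacle, and the only genuinely non-formal point, is the claim that restricting test vectors to a single support $A$ does not lose information, i.e. that two same-support vectors indistinguishable by same-support tests are indistinguishable by \emph{all} tests. This follows cleanly from the "cross-support synchronisations vanish" observation, so the subtlety is really just in presenting that observation carefully; everything else is bookkeeping about direct sums of free semimodules and the fact that quotients commute with such direct sums.
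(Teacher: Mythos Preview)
Your proposal is correct and follows essentially the same approach as the paper: decompose $\PreOrdAlg{X}$ by support, then show the decomposition descends to the quotient via the observation that cross-support synchronisations vanish. The paper's proof is slightly more streamlined in that, instead of arguing that single-support tests suffice, it uses the neutral play $e_Y$ as an explicit projector---observing that $u\sync e_Y=u_Y$ and then invoking the already-established compatibility of $\sync$ with $\obseqv$ to get $u_Y\obseqv v_Y$ from $u\obseqv v$ in one line.
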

\begin{proof}
  Clearly $\PreOrdAlg{X}$ is the direct sum of the $\PreOrdAlgS{Y}$, since
  this decomposition amounts to partitioning the basis $\Plays{X}$
  according to the supports $Y$ of its elements.
  As a consequence, $\OrdAlg{X}$ is the sum of the $\OrdAlgS{Y}$, and we have
  to prove that this sum is direct.
  Consider two vectors $u=\sum_{Y∈\PartsFin{X}}u_Y$ and
  $v=\sum_{Y∈\PartsFin{X}}v_Y$ such that $u\obseqv v$ and
  for all $Y∈\PartsFin{X}$, $u_Y,v_Y∈\PreOrdAlgS{Y}$ (necessarily, only
  finitely many of the $u_Y$ and $v_Y$ are not $0$).
  For each $Y∈\PartsFin{X}$, we have $u\sync e_Y=u_Y$ and $v\sync
  e_Y=v_Y$, so $u_Y\obseqv v_Y$ since $\sync$ is compatible with $\obseqv$.
  As a consequence, the decomposition of a vector in $\OrdAlg{X}$ on the
  submodules $\OrdAlgS{Y}$ is unique.
\end{proof}

We can thus focus on the study of strict order algebras.
These have the definite advantage of being finitely generated, since there are
finitely many different binary relations over a given finite set.
We will now provide explicit bases for them, depending on the
structure of $\Scal$.

Let $X$ be a finite static arena.
Clearly, for all inconsistent plays $r$ we have $r\obseqv0$, so we can
consider only consistent plays, \ie\ plays $r$ such that $\order{r}$ is an
order relation.
In the following statements, as a slight abuse of notations, a play $r$ with
$\web{r}=X$ is identified with its order relation $\order{r}$, and also
with its equivalence class in $\OrdAlgS{X}$.
Let $\Ord{X}$ be the set of all partial order relations over $X$.

The notations $<_r$, $≥_r$, $>_r$ are defined as expected.
We denote by $\conc_r$ the incomparability relation: $x\conc_ry$ if and
only if neither $x≤_ry$ nor $y≤_rx$.
We write $x\conceq_ry$ if $x=y$ or $x\conc_ry$.
If there is no ambiguity, we may omit the subscript $r$ in these notations.
The notation $[a<b]_X$, for $a,b∈X$, represents the smallest partial order
over $X$ for which $a<b$, that is $\identity{X}\cup\{(a,b)\}$.
The notation extends to more complicated formulas, for instance $[a<b,c<d]$ is
the smallest partial order for which $a<b$ and $c<d$.
We write $r\comp s$ to denote that two partial orders $r$ and $s$ are
compatible.

\begin{proposition}\label{prop:totals}
  Let $\Totals{X}$ be the set of total orders over $X$, then
  $\Totals{X}$ is a linearly independent family in $\OrdAlgS{X}$.
\end{proposition}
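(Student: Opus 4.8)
The plan is to test each putative linear relation against a family of well-chosen "probe" vectors and read off the coefficients one total order at a time. Concretely, suppose $\sum_{t∈\Totals{X}}λ_t\,t = \sum_{t∈\Totals{X}}\mu_t\,t$ in $\OrdAlgS{X}$; I want to conclude $λ_t=\mu_t$ for every total order $t$. Since we are in a static arena, permuted synchronisation coincides with static synchronisation $\sync$, and observational equivalence says exactly that the two sides have the same outcome against every vector $v∈\PreOrdAlg{X}$. So it suffices to produce, for each fixed total order $t_0$, a vector $v_{t_0}$ such that $\outcome{t\sync v_{t_0}}$ is $1$ when $t=t_0$ and $0$ for every other total order $t$; then testing the relation against $v_{t_0}$ yields $λ_{t_0}=\mu_{t_0}$.

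The natural candidate is $v_{t_0} := t_0^{\mathrm{op}}$, the opposite order of $t_0$ (same support $X$, reversed relation). First I would check compatibility: $t\sync t_0^{\mathrm{op}}$ is consistent iff $(\order{t}\cup\order{t_0^{\mathrm{op}}})^*$ is acyclic. If $t=t_0$, then $\order{t_0}\cup\order{t_0^{\mathrm{op}}}$ is the full relation $X×X$ but its "strict part" contributes no new cycles beyond the diagonal only in the degenerate case $|X|=1$ — so I must be slightly more careful. The right probe is instead a total order whose strict part is disjoint from that of $t_0$ in a controlled way: take $v_{t_0}=\order{t_0^{\mathrm{op}}}$ and note $t_0\sync t_0^{\mathrm{op}}$ has underlying preorder in which every pair of distinct elements is related both ways, hence is cyclic as soon as $|X|\ge 2$, so $\outcome{t_0\sync t_0^{\mathrm{op}}}=0$ — the opposite of what I wanted. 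So I would instead use $v_{t_0}:=t_0$ itself: $\outcome{t\sync t_0}=1$ iff $t$ and $t_0$ are compatible total orders, and two total orders on the same finite set are compatible iff they are equal. That gives exactly the separating property I need: $\outcome{t\sync t_0}=δ_{t,t_0}$.

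So the key steps are: (1) reduce observational equivalence to equality of outcomes against the specific probes $\{t_0 : t_0∈\Totals{X}\}$; (2) prove the combinatorial fact that two total orders on a finite set are compatible (their union is acyclic) iff they coincide — this is where the argument really lives; (3) expand $\outcome{(\sum_t λ_t t)\sync t_0}=\sum_t λ_t\,\outcome{t\sync t_0}=λ_{t_0}$ by linearity of $\sync$ and of $\outcome{⋅}$, likewise for $\mu$, and conclude $λ_{t_0}=\mu_{t_0}$ for each $t_0$. Step (2) is the only non-formal ingredient: if $t\neq t_0$ there exist $x,y$ with $x<_t y$ but $y<_{t_0} x$, so the union contains the 2-cycle $x<y<x$ and is cyclic, hence inconsistent; conversely if $t=t_0$ the union is just $t_0$, which is a partial (indeed total) order, hence consistent. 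I expect this step to be routine but it is the crux; everything else is bookkeeping with bilinearity. Note the argument makes no use of regularity or rationality of $\Scal$, consistent with the proposition being stated for arbitrary $\Scal$.
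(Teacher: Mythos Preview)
Your proposal is correct and follows essentially the same route as the paper: use each total order $t_0$ as its own probe, observe that two distinct total orders on the same finite set are incompatible (hence $\outcome{t\sync t_0}=\delta_{t,t_0}$), and extract the coefficients by linearity. The detour through $t_0^{\mathrm{op}}$ is unnecessary but harmless, since you correctly discard it and land on the paper's argument.
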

\begin{proof}
  We prove the equivalent statement that two observationally equivalent
  combinations of total orders are necessarily equal.
  Let $u=\sum_{t∈\Totals{X}}λ_tt$ and $v=\sum_{t∈\Totals{X}}\mu_tt$
  be two combinations such that $u\obseqv v$.
  If $r$ and $s$ are two distinct total orders over $X$, there exists a pair
  $(a,b)∈X^2$ such that $a<_rb$ and $b<_sa$, hence $r$ and $s$ are not
  compatible, so $\inter{r}{s}=0$.
  Besides, it always holds that $\inter{r}{r}=1$,
  so for all $t∈\Totals{X}$, $\inter{u}{t}=λ_t$ and
  $\inter{v}{t}=\mu_t$, so $u\obseqv v$ implies $λ_t=\mu_t$ for all $t$,
  hence $u=v$.
\end{proof}

However, in general, $\Totals{X}$ is not a generating family for $\OrdAlgS{X}$.
The simplest counter-example can be found if $X$ has two points.
Write $X=\{a,b\}$, then $\Ord{X}$ has three elements:
\[
  \Ord{\{a,b\}}=\bigl\{[a\conc b],[a<b],[a>b]\bigr\}.
\]
Then in the canonical basis $([a\conc b],[a<b],[a>b])$ of $\PreOrdAlgS{X}$,
the matrix of $(u,v)\mapsto\inter{u}{v}$ is
\[
  \begin{pmatrix}
    1 & 1 & 1 \\
    1 & 1 & 0 \\
    1 & 0 & 1
  \end{pmatrix}
\]
If $\Scal$ is the field of reals, for instance, then this matrix is
invertible, which means that the three orders are linearly independent, hence
$\OrdAlgS{X}$ is isomorphic to $\Scal^{\Ord{X}}$ (this isomorphism holds if
and only if the cardinal of $X$ is at most $2$, as we shall see below).
There is one case where $[a<b]$ and $[b<a]$ do generate $\OrdAlgS{\{a,b\}}$,
namely when addition in $\Scal$ is idempotent, \ie\ when $1+1=1$.

\begin{proposition}\label{prop:basis-totals}
  $\Totals{X}$ is a basis of $\OrdAlgS{X}$ for all $X$
  if and only if addition in $\Scal$ is idempotent.
\end{proposition}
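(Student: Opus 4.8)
The plan is to reduce the statement to a question about generation and then treat the two implications separately. By Proposition~\ref{prop:totals}, $\Totals{X}$ is already known to be linearly independent in $\OrdAlgS{X}$ for every finite static arena $X$, so ``$\Totals{X}$ is a basis'' is equivalent to ``$\Totals{X}$ is a generating family'', and the whole content becomes: $\Totals{X}$ generates $\OrdAlgS{X}$ for all finite static $X$ if and only if $1+1=1$ in $\Scal$. I would also use throughout that, for $u,v\in\PreOrdAlgS{X}$, one has $u\obseqv v$ as soon as $\outcome{u\sync s}=\outcome{v\sync s}$ for every partial order $s$ over $X$, since synchronising a vector of support $X$ with a play of a different support gives $0$; equivalently, it suffices to know $\inter{u}{s}=\inter{v}{s}$ for all such $s$.

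For the \emph{if} direction I would assume addition is idempotent and show that every consistent play $r$ with $\web{r}=X$ is observationally equivalent in $\OrdAlgS{X}$ to the sum $\sum_{t\in\Totals{X},\,t\supseteq r}t$ of its linear extensions (an inconsistent play has no linear extension and is already $\obseqv 0$, which is the empty such sum). To verify this, test both sides against an arbitrary partial order $s$ over $X$: on the left, $\inter{r}{s}$ equals $1$ when $r\comp s$ and $0$ otherwise; on the right, $\inter{\sum_{t\supseteq r}t}{s}=\sum_{t\supseteq r}\inter{t}{s}$ is a sum of copies of $1$ indexed by the linear extensions $t$ of $r$ with $t\comp s$, which by idempotency equals $1$ if at least one such $t$ exists and $0$ otherwise. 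It then remains to see that a linear extension of $r$ compatible with $s$ exists if and only if $r\comp s$: if $r\comp s$ then $(\order{r}\cup\order{s})^*$ is a partial order, which extends to a total order $t$ by Szpilrajn's theorem, and $\order{s}\subseteq\order{t}$ forces $t\comp s$; conversely $\order{r}\subseteq\order{t}$ gives $(\order{r}\cup\order{s})^*\subseteq(\order{t}\cup\order{s})^*$, the latter being acyclic because $t\comp s$. Hence the two outcomes coincide for every $s$, so every play lies in $\Vect{\Totals{X}}$ in $\OrdAlgS{X}$, and $\Totals{X}$ is a basis.

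For the \emph{only if} direction I would argue by contraposition, using only the arena $X=\{a,b\}$, for which $\Totals{X}=\{[a<b],[a>b]\}$. Suppose $[a\conc b]$ belongs to $\Vect{\Totals{X}}$ in $\OrdAlgS{X}$, say $[a\conc b]\obseqv\lambda\,[a<b]+\mu\,[a>b]$ with $\lambda,\mu\in\Scal$. Testing against $[a<b]$ gives $1=\lambda$ (since $[a<b]\comp[a<b]$ while $[a<b]$ and $[a>b]$ are incompatible); testing against $[a>b]$ gives $1=\mu$; and testing against $[a\conc b]$ gives $1=\lambda+\mu$, whence $1+1=1$. So if $1+1\neq 1$ then $[a\conc b]\notin\Vect{\Totals{\{a,b\}}}$, and $\Totals{X}$ fails to generate $\OrdAlgS{X}$ for $X=\{a,b\}$, hence is not a basis for all $X$.

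The only genuine obstacle is the elementary combinatorial lemma about linear extensions in the \emph{if} direction; once it is in place, the idempotency collapse of the sum and the two-point computation are routine. I would also remark in passing that when $X$ has at most one point there is a single total order and a single play, so $\Scal$ is unconstrained, which means the first obstruction really occurs when $X$ has two points, in agreement with the example discussed just before the statement.
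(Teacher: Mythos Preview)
Your proposal is correct and follows essentially the same route as the paper's proof: both reduce to generation via Proposition~\ref{prop:totals}, handle the \emph{only if} direction by the two-point computation with $[a\conc b]$, and for the \emph{if} direction show that $r$ is equivalent to the sum of the total orders extending it (noting that for a total order $t$, compatibility with $r$ is the same as $r\subseteq t$), using idempotency to collapse the sum when testing against an arbitrary $s$. Your explicit mention of Szpilrajn's theorem and the handling of inconsistent plays are minor expository additions, not genuine differences in strategy.
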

\begin{proof}
  By Proposition~\ref{prop:totals}, we know that $\Totals{X}$ is always a
  linearly independent family, so all we have to prove is that it generates
  $\OrdAlgS{X}$ if and only if $1+1=1$ in $\Scal$.

  Firstly, assume that $\Totals{X}$ generates $\OrdAlgS{X}$ for all $X$.
  Then, for $X=\{a,b\}$, there are two scalars $λ,\mu∈\Scal$ such that
  $[a\conc b]\obseqvλ[a<b]+\mu[a>b]$.
  Then we have
  \[
    \inter{[a\conc b]}{[a<b]}
    = λ\inter{[a<b]}{[a<b]} + \mu\inter{[a>b]}{[a<b]} = λ
  \]
  but by definition we have $\inter{[a\conc b]}{[a<b]}=1$, so $λ=1$.
  Similarly, we get $\mu=1$, and so $[a\conc b]=[a<b]+[a>b]$.
  As a consequence, we have
  \[
    1 = \inter{[a\conc b]}{[a\conc b]}
    = \inter{[a<b]}{[a\conc b]} + \inter{[a>b]}{[a\conc b]}
    = 1 + 1 .
  \]

  Reciprocally, assume $\Scal$ satisfies $1+1=1$.
  Let $X$ be an arbitrary finite set and let $r∈\Ord{X}$.
  Let $u=\sum_{i=1}^kt_i$ be the sum of all total orders that are compatible
  with $r$.
  Consider an arbitrary order $s∈\Ord{X}$.
  Then we have $\inter{u}{s}=\sum_{i=1}^k\inter{t_i}{s}$ and each term of this
  sum is $0$ or $1$.
  If $s$ is compatible with $r$, then there is a total order $t$ that extends
  both $r$ and $s$, so $t$ is one of the $t_i$;
  since $s$ and $t$ are compatible, the sum contains at least one $1$ so
  $\inter{u}{s}=1=\inter{r}{s}$.
  If $s$ is incompatible with $r$, then it is incompatible with any order that
  contains $r$, and in particular it is incompatible with all the $t_i$, so
  $\inter{u}{s}=0=\inter{r}{s}$.
  As a consequence we have $r\obseqv u$, which proves that $\Totals{X}$
  generates $\OrdAlgS{X}$.
\end{proof}

%

In the general case, without any hypothesis on the semiring $\Scal$, it
happens that the family of all orders over $X$ is not linearly independent, as
soon as $X$ has at least three points.

\begin{proposition}\label{prop:split}
  For all semiring $\Scal$,
  in $\PreOrdAlgS[\Scal]{\{x,y,z\}}$ we have
  \[
    \tikzformula{
      \eventnode{(0,0)}{x}{left}{x}
      \eventnode{(0,2)}{y}{left}{y}
      \eventnode{(1,1)}{z}{right}{z}
      \draw (x) -- (y); }
    + \tikzformula{
      \eventnode{(0,0)}{x}{left}{x}
      \eventnode{(0,2)}{y}{left}{y}
      \eventnode{(0,1)}{z}{right}{z}
      \draw (x) -- (y); }
    = \tikzformula{
      \eventnode{(0,0)}{x}{left}{x}
      \eventnode{(0,2)}{y}{left}{y}
      \eventnode{(1,1)}{z}{right}{z}
      \draw (z) -- (x) -- (y); }
    + \tikzformula{
      \eventnode{(0,0)}{x}{left}{x}
      \eventnode{(0,2)}{y}{left}{y}
      \eventnode{(1,1)}{z}{right}{z}
      \draw (x) -- (y) -- (z); }
  \]
\end{proposition}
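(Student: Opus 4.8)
The plan is to unfold the definition of observational equivalence and reduce to a finite check. Since $\psync$ and $\outcome{\cdot}$ are (bi)linear and the arena $\{x,y,z\}$ is static (so $\psync$ coincides with $\sync$), the two sides of the displayed equation are observationally equivalent — equal in the order algebra — as soon as they have the same outcome against every play $s$. Synchronisation vanishes unless $\web{s}=\{x,y,z\}$, and if $s$ is inconsistent then $r\sync s$ is inconsistent for every $r$, so $\outcome{\cdot}$ is $0$ on both sides; hence it suffices to let $s$ range over the partial orders on $\{x,y,z\}$. Writing $r=[x<y]$ (with $z$ incomparable) and $t=[x<z,z<y]$ (the total order $x<z<y$) for the two orders on the left, and $p=[x<y,x<z]$, $q=[x<y,z<y]$ for those on the right, the goal becomes: for every partial order $s$ on $\{x,y,z\}$,
\[
  \inter{r}{s}+\inter{t}{s}=\inter{p}{s}+\inter{q}{s},
\]
an identity between sums of copies of $0$ and $1$, which I will verify as an equality of integers in $\{0,1,2\}$ so that it holds in every commutative semiring.

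Next I would record that $r$ is contained in each of $t,p,q$ (all three contain the pair $x<y$) and that $t=(p\cup q)^*$. \emph{Case 1}: if $r\not\comp s$, then, since an order incompatible with $s$ remains incompatible when enlarged, each of $t,p,q$ is also incompatible with $s$, so both sides vanish. \emph{Case 2}: if $r\comp s$, equivalently $\neg(y<_s x)$, then $\inter{r}{s}=1$. Here, because $s$ itself is acyclic, any cycle in $p\cup s$, $q\cup s$ or $t\cup s$ must use one of the non-reflexive pairs of $p$, $q$ or $t$, all of which lie among $(x,y),(x,z),(z,y)$; a cycle through $(x,y)$ would require $y<_s x$ and is excluded, so a short inspection of which pairs occur in each order gives $\inter{p}{s}=1\iff\neg(z<_s x)$, $\inter{q}{s}=1\iff\neg(y<_s z)$, and $\inter{t}{s}=1\iff\neg(z<_s x)\wedge\neg(y<_s z)$; in particular $\inter{t}{s}=\inter{p}{s}\cdot\inter{q}{s}$ in this case.

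The crux is then a one-line transitivity remark: in Case 2 we cannot have both $z<_s x$ and $y<_s z$, as these would force $y<_s x$. Hence $\inter{p}{s}$ and $\inter{q}{s}$ are not both $0$; writing $a=\inter{p}{s}$, $b=\inter{q}{s}\in\{0,1\}$ with $(a,b)\neq(0,0)$ and $\inter{t}{s}=ab$, the identity to prove collapses to $1+ab=a+b$, which is immediate for $(a,b)\in\{(1,1),(1,0),(0,1)\}$. This settles both cases, so the two sides are observationally equivalent, which is the asserted relation.

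The main obstacle is exactly that transitivity remark, i.e. that compatibility with $s$ can fail for at most one of $p$ and $q$. It is what saves the identity despite the fact that the naive ``modular law'' $\inter{r\cap r'}{s}+\inter{(r\cup r')^*}{s}=\inter{r}{s}+\inter{r'}{s}$ fails in general: taking $r=[a<b]$, $r'=[b<c]$ and $s=[c<a]$ over the three points $\{a,b,c\}$ yields $1+0$ on the left and $1+1$ on the right. What rules out the bad configuration here is the combination of the three-element support with the pair $x<y$ common to $p$ and $q$ (this shared pair is precisely $r$, and $r\comp s$ is a prerequisite for \emph{any} of the four orders to be compatible with $s$). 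Everything else is bookkeeping, and since no step uses subtraction or idempotency, the statement indeed holds for an arbitrary commutative semiring.
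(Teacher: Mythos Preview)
Your proof is correct and follows essentially the same route as the paper's: both reduce to checking, for each partial order $s$ on $\{x,y,z\}$, that $\inter{r}{s}+\inter{t}{s}=\inter{p}{s}+\inter{q}{s}$, and both hinge on the two observations that (i) $s\comp r$ iff $s\comp p$ or $s\comp q$, and (ii) $s\comp t$ iff $s\comp p$ and $s\comp q$. The paper states these as two separate logical remarks and then combines them, whereas you organise the same content as a case split on $s\comp r$ together with the arithmetic identity $1+ab=a+b$ for $(a,b)\in\{0,1\}^2\setminus\{(0,0)\}$; the underlying argument, including the key transitivity step ruling out $z<_s x$ and $y<_s z$ simultaneously, is the same.
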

\begin{proof}
  We use the following notations:
  $a:=[x<y]$,
  $b:=[x<z<y]$,
  $c:=[x<y,x<z]$,
  $d:=[x<y,z<y]$,
  so that the equation we prove is $a+b=c+d$.
  Let $s$ be a partial order over $\{x,y,z\}$.
  First remark that $s\comp a$ if and only if $s\comp c$ or
  $s\comp d$.
  Indeed, assume $s\comp a$, then there is a total order $t$ that contains $a$
  and $s$.
  If $x<_tz$ then $[x<z]\subseteq t$ so $a\sync{}[x<z]\subseteq t$, hence
  $s\comp a\sync{}[x<z]=c$.
  Otherwise $z<_tx<_ty$ so $z<_ty$ then $s\comp d$.
  Reciprocally, if $s\comp c$ or $s\comp d$ then $s\comp a$ since
  $a$ is included in $c$ and $d$.
  Secondly, remark that $s\comp b$ if and only if $s\comp c$ and
  $s\comp d$.
  Indeed, assume that $s\comp c$ and
  $s\comp d$.
  Let $s'=s\sync c=s\sync a\sync{}[x<z]$.
  Suppose $s'\incomp [z<y]$, then $y<_{s'}z$.
  By hypothesis we cannot have $y<_{a\sync s}z$, so $(y,z)$ occurs in $s'$ but
  not in $(s\sync a)\cup[x<z]$, which implies $y<_{s\sync a}x$.
  This contradicts the hypothesis $x<_ay$, hence $s'\comp[z<y]$, so
  $s\comp a\sync{}[x<z]\sync{}[z<y]=b$.
  The reciprocal implication is immediate since $a\subseteq b$ and
  $d\subseteq b$.
  As a consequence of the two remarks above, we have $\inter{a}{s}=1$ if and
  only if $\inter{c}{s}=1$ or $\inter{d}{s}=1$, which is equivalent
  to $\inter{(c+d)}{s}∈\{1,2\}$.
  Moreover, $\inter{(c+d)}{s}=2$ if and only if $\inter{c}{s}=1$
  and $\inter{d}{s}=1$, which is equivalent to $\inter{b}{s}=1$.
  Therefore $\inter{(a+b)}{s}=\inter{(c+d)}{s}$.
\end{proof}

This proposition applies to orders on three points, but the exact same
argument applies in any larger context, since the proof never uses the fact
that there are no other points than $x,y,z$.
So for any play $r$ and points $x,y,z∈\web{r}$ such that $x<_ry$,
$x\conc_rz$ and $y\conc_rz$ we have
\[
  r + (r\sync{}[x<z<y]) \obseqv (r\sync{}[x<z]) + (r\sync{}[z<y]) .
\]
This can also be deduced from Proposition~\ref{prop:split} using the partial
composition operators defined in Section~\ref{sec:partial}.
When $\Scal$ is a ring, it allows us to express each of
the patterns of the equation in Proposition~\ref{prop:split} as a linear
combination of the others with coefficients $1$ and $-1$.
This implies that for each of these patterns, the set of all orders over $X$
that do not contain the considered pattern generates $\OrdAlgS{X}$.
In each case, the forbidden pattern defines a particular class of orders,
respectively weak total orders (as of Proposition~\ref{prop:weak} below),
orders of height at most $2$ and forests with roots up or down.

\begin{proposition}\label{prop:weak}
  Let $(X,≤)$ be a partially ordered set.
  The following conditions are equivalent:
  \begin{itemize}
  \item 
    For all $x,y,z∈X$, if $x<y$ then $x<z$ or $z<y$.
  \item 
    The relation $\conceq$ is an equivalence.
  \item 
    There is a totally ordered set $(Y,≤)$ and a function $f:X\to Y$ such
    that, for all $x,y∈X$, $x<y$ if and only if $f(x)<f(y)$.
  \end{itemize}
  Let $\Weak{X}$ be the set of orders that satisfy these conditions,
  called \emph{weak total orders} over~$X$.
\end{proposition}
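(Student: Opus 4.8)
The plan is to prove the three conditions all equivalent through the chain $(1)\Leftrightarrow(2)$, $(2)\Rightarrow(3)$ and $(3)\Rightarrow(1)$, which closes the loop. For $(1)\Rightarrow(2)$ I would note that $\conceq$ is reflexive and symmetric straight from its definition, so only transitivity requires work: if $x\conceq y$ and $y\conceq z$, the only nontrivial case is $x\conc y$ and $y\conc z$, and if one had $x<z$ or $z<x$, applying hypothesis~(1) to that strict inequality with the remaining point $y$ would produce $x<y$, $y<z$, $z<y$ or $y<x$, each contradicting $x\conc y$ or $y\conc z$; hence $x\conceq z$. For $(2)\Rightarrow(1)$ I would use that, for any pair of elements $a,b$, exactly one of $a<b$, $a=b$, $a\conc b$, $b<a$ holds: given $x<y$ and a point $z$, if both $x<z$ and $z<y$ failed, then $z<x$, $z=x$, $z=y$ and $y<z$ are all ruled out by transitivity of $<$ together with $\neg(x<z)$ and $\neg(z<y)$, so $x\conc z$ and $z\conc y$; transitivity of $\conceq$ then gives $x\conceq y$, contradicting $x<y$.

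The substantial step is $(2)\Rightarrow(3)$. Having both~(1) and~(2) at hand, I would set $Y:=X/{\conceq}$, take $f:X\to Y$ the canonical surjection, and define $[x]\le_Y[y]$ to mean $\neg(y<x)$ in $X$, equivalently that one of $x<y$, $x=y$, $x\conc y$ holds. The crucial point is that $\le_Y$ is well defined on classes, that is, $x\conceq x'$, $y\conceq y'$ and $y<x$ imply $y'<x'$: I would apply hypothesis~(1) to $y<x$ with the new point $y'$, discarding the alternative $y<y'$ because $y\conceq y'$, to get $y'<x$; then apply~(1) to $y'<x$ with the new point $x'$, discarding $x'<x$ because $x\conceq x'$, to get $y'<x'$. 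The order axioms and totality of $\le_Y$ are then checked using the four-way trichotomy, transitivity of $<$ and of $\conceq$, and one further appeal to~(1) for transitivity. Finally $x<y$ is equivalent to $[x]<_Y[y]$: if $x<y$ then $[x]\le_Y[y]$ and $[x]\neq[y]$ since $[x]=[y]$ would mean $x\conceq y$; conversely $[x]<_Y[y]$ forces $x\le y$ with $x\neq y$, hence $x<y$. This exhibits the required $Y$ and $f$.

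For $(3)\Rightarrow(1)$, suppose $x<y$, so $f(x)<f(y)$ in the totally ordered set $Y$; then any $z$ satisfies either $f(z)<f(y)$, giving $z<y$, or $f(y)\le f(z)$, giving $f(x)<f(z)$ and hence $x<z$. This closes the cycle.

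The main obstacle is the well-definedness of $\le_Y$ in $(2)\Rightarrow(3)$: this is the one place where hypothesis~(1) is genuinely exploited, and the spot where the case analysis, stemming from $\le$ being merely a partial order on $X$, must be carried out with care. All the remaining verifications are routine bookkeeping with the observation that any two elements of $X$ stand in exactly one of the relations $<$, $=$, $\conc$, $>$.
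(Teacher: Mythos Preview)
Your proof is correct and follows essentially the same route as the paper: the cycle $(1)\Rightarrow(2)\Rightarrow(3)\Rightarrow(1)$, with the quotient $X/{\conceq}$ and the canonical surjection furnishing the totally ordered set in the middle step. The only cosmetic differences are that you add the redundant implication $(2)\Rightarrow(1)$ and define the order on the quotient as $[x]\le_Y[y]\iff\neg(y<x)$ rather than the paper's existential $A\sqsubseteq B\iff\exists a\in A,\,b\in B,\,a\le b$; these yield the same relation and the verifications are interchangeable.
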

\begin{proof}
  Firstly, assume that for all $x,y,z∈X$, if $x<y$ then
  $x<z$ or $z<y$.
  It is clear that $\conceq$ is always reflexive and symmetric.
  Let $x,y,z∈X$ such that $x\conceq z$ and $z\conceq y$.
  If $x<y$, then by hypothesis we must have $x<z$ or $z<y$, which
  contradicts the hypothesis on $x,y,z$.
  Similarly we cannot have $y<x$, so $x\conceq y$.
  Therefore $\conceq$ is transitive and it is an equivalence relation.

  Secondly, assume $\conceq$ is an equivalence relation.
  Let $Y$ be the set of equivalence classes of~$\conceq$.
  Define the relation $\sqsubseteq$ on $Y$ as $A\sqsubseteq B$ if $a≤ b$
  for some $a∈A$ and $b∈B$.
  The relation $\sqsubseteq$ is reflexive since for all $A∈Y$, for any
  $a∈A$ we have $a≤ a$ so $A\sqsubseteq A$.
  Assume $A\sqsubseteq B$ and $B\sqsubseteq A$ for some $A,B∈Y$, then there
  are $a,a'∈A$ and $b,b'∈B$ such that $a≤ b$ and $b'≤ a'$; if
  $a<b'$ then $a<a'$ which contradicts $a\conceq a'$, similarly if $b'<a$ then
  $b'<b$ which contradicts $b'\conceq b$, so $a\conceq b'$, which implies that
  $A$ and $B$ are the same class, therefore $\sqsubseteq$ is antisymmetric.
  Assume $A\sqsubseteq B$ and $B\sqsubseteq C$ for some $A,B,C∈Y$, then
  there are $a∈A$, $b,b'∈B$ and $c∈C$ such that $a≤ b$ and
  $b'≤ c$; if $a\conceq c$ then $A=C$ hence $A\sqsubseteq C$, otherwise we
  must have $a<c$ or $c<a$, but the second case implies $b'≤ c<a≤ b$
  which contradicts $b\conceq b'$, so $a<c$ and $A\sqsubseteq C$, hence
  $\sqsubseteq$ is transitive.
  Totality is immediate: if $A$ and $B$ are two distinct classes, then every
  pair $(a,b)∈A× B$ is comparable.
  Let $f$ be the function that maps each element of $X$ to its class.
  If $x<y$ then $f(x)\sqsubset f(y)$ by definition.
  Reciprocally, if $f(x)\sqsubset f(y)$, then $x$ and $y$ must be comparable
  (since they are in distinct classes), and $y<x$ would imply
  $f(y)\sqsubset f(x)$, so $x<y$.

  Finally, assume there is $f:X\to Y$ where $Y$ is totally ordered such that
  $x<y$ if and only if $f(x)<f(y)$.
  Let $x,y,z$ be such that $x<y$, then $f(x)<f(y)$.
  Since the order on $Y$ is total, we must have either $f(x)<f(z)$ or
  $f(z)<f(y)$ (or both), hence $x<z$ or $z<y$.
\end{proof}

In other words, a weak total order is a total order over sets of mutually
incomparable points.
Interestingly, this kind of order was considered long ago in scheduling
theory~\cite{lam78:time} as the possibility to label events with time stamps in
a possibly non-injective manner.
It turns out that weak total orders form a basis.

\begin{definition}
  Let $r∈\Ord{X}$.
  Two elements $a,b∈X$ are equivalent in $r$, written $a\sim_rb$,
  if for all $c∈X∖\{a,b\}$, $a<_rc$ if and only if $b<_rc$, and
  $c<_ra$ if and only if $c<_rb$.
  For a pair $a\sim_rb$ with $a\neq b$, let $r/(a\sim b)$ be the order
  $r\cap{(X∖\{b\})^2}$ over $X∖\{b\}$.
\end{definition}

\begin{definition}
  Let $a,b∈X$ with $a≠b$.
  For each $r∈\Ord{X∖\{b\}}$, define the relations
  \begin{formulas}
    r_{a\sim b} := r ∪ \set{(x,b)}{(x,a)∈r} ∪ \set{(b,x)}{(a,x)∈r}, \break
    r_{a<b} := r_{a\sim b} ∪ \{(a,b)\}, \split
    r_{a>b} := r_{a\sim b} ∪ \{(b,a)\}.
  \end{formulas}
\end{definition}

Clearly, $r_{a\sim b}$, $r_{a<b}$ and $r_{a>b}$ are partial orders over
$X$ in which $a$ and $b$ are equivalent.

\begin{lemma}\label{lemma:coh-up}
  Let $a,b$ be two distinct elements of $X$.
  For all $r∈\Weak{X}$ and $s∈\Ord{X∖\{b\}}$,
  \begin{itemize}
  \item 
    if $a<_rb$ then $r\comp s_{a\sim b}$ if and only if $r\comp s_{a<b}$,
    moreover $r\incomp s_{a>b}$,
  \item 
    if $a>_rb$ then $r\comp s_{a\sim b}$ if and only if $r\comp s_{a>b}$,
    moreover $r\incomp s_{a<b}$,
  \item 
    if $a\conc_rb$ then $r\comp s_{a\sim b}$
    if and only if $r\comp s_{a<b}$
    if and only if $r\comp s_{a>b}$.
  \end{itemize}
\end{lemma}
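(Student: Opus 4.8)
The statement is a case analysis on the relative position of $a$ and $b$ in the weak total order $r$; by symmetry (swapping $a$ and $b$, which swaps the roles of $s_{a<b}$ and $s_{a>b}$) the first two bullets are the same claim, so it suffices to treat the case $a<_rb$ and the case $a\conc_rb$. The plan is to unfold the definitions of $s_{a\sim b}$, $s_{a<b}$, $s_{a>b}$ and of compatibility ($r\comp t$ means $(r\cup t)^*$ is acyclic, equivalently $r$ and $t$ have a common total extension), and to exploit the defining property of weak total orders from Proposition~\ref{prop:weak}: $\conceq_r$ is an equivalence, i.e.\ if $x<_ry$ then for every $z$ either $x<_rz$ or $z<_ry$.

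\textbf{Case $a<_rb$.} First I would show $r\incomp s_{a>b}$: since $s_{a>b}$ contains $(b,a)$ and $r$ contains $(a,b)$, the union already has a $2$-cycle, so it is cyclic. For the equivalence $r\comp s_{a\sim b}\iff r\comp s_{a<b}$, one direction is immediate because $s_{a\sim b}\subseteq s_{a<b}$, so $r\comp s_{a<b}$ implies $r\comp s_{a\sim b}$. For the converse, assume $r\comp s_{a\sim b}$ and suppose for contradiction that $r\cup s_{a<b}$ has a cycle; since $r\cup s_{a\sim b}$ is acyclic, the cycle must use the extra pair $(a,b)$, so there is a path from $b$ back to $a$ in $r\cup s_{a\sim b}$. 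I would then walk along this path and use two facts: (i) $a$ and $b$ are $\sim$-equivalent in $s_{a\sim b}$ by construction, so any $s_{a\sim b}$-edge out of $b$ can be rerouted through $a$; (ii) $a<_rb$ together with the weak-total-order property lets me collapse $r$-edges appropriately. The goal is to produce from the $b\rightsquigarrow a$ path a cycle already present in $r\cup s_{a\sim b}$ (or a direct contradiction with $a<_rb$ via Proposition~\ref{prop:weak}), contradicting $r\comp s_{a\sim b}$. This is essentially the same kind of path-surgery argument as in the proof of Proposition~\ref{prop:split}.

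\textbf{Case $a\conc_rb$.} Here $r$ imposes no constraint between $a$ and $b$, and since $r$ is a weak total order with $a\conc_rb$, by Proposition~\ref{prop:weak} the classes of $a$ and $b$ under $\conceq_r$ coincide, i.e.\ $a\sim_rb$ in $r$ as well. I would argue that $r$ is then compatible with whichever of $s_{a\sim b}$, $s_{a<b}$, $s_{a>b}$ one likes precisely when it is compatible with $s_{a\sim b}$: because $a$ and $b$ play interchangeable roles in both $r$ and $s_{a\sim b}$, adding either orientation $(a,b)$ or $(b,a)$ cannot create a cycle that was not already creatable. Concretely, from a common total extension of $r$ and $s_{a\sim b}$ one can reorder the (mutually $r$-incomparable, mutually $s_{a\sim b}$-equivalent) block containing $a$ and $b$ to put $a$ before or after $b$ as desired, yielding a common total extension of $r$ and $s_{a<b}$ (resp.\ $s_{a>b}$).

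\textbf{Main obstacle.} The delicate point is the converse direction in the case $a<_rb$: showing that compatibility with the unoriented $s_{a\sim b}$ already forces compatibility with $s_{a<b}$. The content is exactly that a hypothetical $b\rightsquigarrow a$ path cannot sneak around using the equivalence of $a$ and $b$ in $s_{a\sim b}$ without contradicting $a<_rb$, and making the path-rewriting bookkeeping precise — in particular handling alternations between $r$-edges and $s_{a\sim b}$-edges and the rerouting at $b$ — is where the real work lies. I expect this to mirror, almost line for line, the incomparability manipulations already carried out in Proposition~\ref{prop:split}.
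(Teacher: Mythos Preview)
Your overall strategy is sound and would go through, but you have the difficulty exactly backwards. In the case $a<_rb$, the pair $(a,b)$ already lies in $r$, so $r\cup s_{a<b}=r\cup s_{a\sim b}$ as sets of pairs (the two orders $s_{a\sim b}$ and $s_{a<b}$ differ only on $(a,b)$). Hence the transitive closures coincide and one is acyclic if and only if the other is: the equivalence $r\comp s_{a\sim b}\iff r\comp s_{a<b}$ is a one-liner, with no path surgery needed. This is precisely how the paper handles it. Your proposed $b\rightsquigarrow a$ argument is not wrong, but it is attacking a non-existent obstacle.

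Conversely, the case $a\conc_rb$, which you treat somewhat informally via ``reorder the block in a common total extension'', is the only place where an actual argument is required. The paper's version is close to what you sketch but more direct: from $r\comp s_{a\sim b}$ set $t=r\sync s_{a\sim b}$; if $a<_tb$ there is a chain $a=a_0,\dots,a_n=b$ with each step in $r$ or $s_{a\sim b}$, and since $a$ and $b$ are equivalent in \emph{both} $r$ (by the weak-total-order property, exactly as you note) and $s_{a\sim b}$ (by construction), one may replace $b$ by $a$ in this chain to obtain $a<_ta$, a contradiction; similarly $b\not<_ta$, so $a\conc_tb$ and $t\comp[a<b]$, whence $r\comp s_{a<b}$. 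Your total-extension reordering idea amounts to the same thing, but the paper's chain-rewriting is cleaner and avoids invoking total extensions.
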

\begin{proof}
  If $a<_rb$, we have $r\cup s_{a\sim b}=r\cup s_{a<b}$, since
  $s_{a\sim b}$ and $s_{a<b}$ only differ on $(a,b)$, so the compatibility
  of the two pairs is equivalent to this union being acyclic.
  The same argument applies to the case $a>_rb$.
  For the case $a\conc_rb$, first assume $r\comp s_{a\sim b}$ and let
  $t=r\sync s_{a\sim b}$.
  By definition of weak orders, we have $a\sim_rb$.
  If $a<_tb$ then there exists a sequence $a=a_0,\ldots,a_n=b$ such that for
  each $i<n$, $a_i<_ra_{i+1}$ or $a_i<_{s_{a\sim b}}a_{i+1}$,
  but since $a$ and $b$ are equivalent in both $r$ and $s_{a\sim b}$, we
  can replace $b$ with $a$ in this sequence, which leads to the contradiction
  $a<_ta$.
  By the same argument we cannot have $b<_ta$, so $a\conc_tb$.
  We thus have $t\comp[a<b]$ hence $s_{a<b}=s_{a\sim b}\sync{}[a<b]\comp r$,
  and similarly $r\comp s_{a>b}$.
  The reverse implications are immediate since $s_{a\sim b}$ is included in
  both $s_{a<b}$ and $s_{a>b}$.
\end{proof}

\begin{proposition}\label{prop:basis-weak}
  If $\Scal$ is a ring, then for all finite set $X$, $\Weak{X}$ is
  a basis of $\OrdAlgS{X}$.
\end{proposition}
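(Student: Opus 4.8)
The plan is to prove two things: that $\Weak{X}$ generates $\OrdAlgS{X}$, and that it is linearly independent. For generation, I would argue by induction on the number of pairs in which an order fails to be weak, using Proposition~\ref{prop:split} (in its general form, valid in any larger context): whenever $r$ contains a "bad triple" $x<_r y$ with $x\conc_r z$ and $y\conc_r z$, the relation $r+(r\sync{}[x<z<y])\obseqv(r\sync{}[x<z])+(r\sync{}[z<y])$ lets us rewrite $r$, since $\Scal$ is a ring, as $(r\sync{}[x<z])+(r\sync{}[z<y])-(r\sync{}[x<z<y])$. Each of the three orders on the right-hand side strictly extends $r$, hence has strictly fewer incomparable pairs, so an induction on the number of incomparable pairs (or on some well-founded measure counting bad triples) terminates with a combination of weak total orders. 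The base case is when $r$ has no bad triple, i.e. $r$ is already weak by the first characterisation in Proposition~\ref{prop:weak}.

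For linear independence, the natural approach is to find, for each weak total order $r\in\Weak{X}$, a "test" vector $\varphi_r\in\PreOrdAlgS{X}$ such that $\outcome{s\sync\varphi_r}=\delta_{r,s}$ for all $s\in\Weak{X}$; then from $\sum_r\lambda_r r\obseqv\sum_r\mu_r r$ we immediately get $\lambda_r=\mu_r$ by synchronising with $\varphi_r$ and using that outcome is preserved by $\obseqv$ (here we use that a ring is regular, so the Corollary after Lemma~\ref{lemma:partial-neutral} applies). The test vectors should be built by the "inclusion–exclusion over equivalent-point collapses" idea suggested by Lemma~\ref{lemma:coh-up}: for a weak order $r$, repeatedly collapsing $\sim_r$-equivalence classes via the $r_{a\sim b}/r_{a<b}/r_{a>b}$ construction and taking signed combinations $r_{a<b}+r_{a>b}-r_{a\sim b}$ isolates $r$ among weak orders. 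Concretely, one processes the classes of $\conceq_r$ one pair of points at a time: Lemma~\ref{lemma:coh-up} says that for a weak $s$, compatibility with the "collapsed" order on $X\setminus\{b\}$ tested through $s_{a\sim b}$, $s_{a<b}$, $s_{a>b}$ behaves exactly like a Boolean disjunction/conjunction according to whether $a<_s b$, $a>_s b$ or $a\conc_s b$, so the signed combination $s_{a<b}+s_{a>b}-s_{a\sim b}$ evaluates to $1$ against $s$ precisely when $a\conc_s b$ and to $0$ otherwise — exactly what is needed to detect, step by step, the equivalence-class structure of $r$.

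The main obstacle is the bookkeeping in the linear-independence half: one must set up the recursion on $X$ cleanly so that the signed test vector for $r$, assembled from the one-point collapses, genuinely satisfies $\outcome{s\sync\varphi_r}=\delta_{r,s}$ for every weak $s$, not merely for $s$ with the same class structure as $r$. This requires checking that (i) collapsing down to a single point of each class reduces to the already-known independence of total orders (Proposition~\ref{prop:totals}), and (ii) the "spread out a class into $a<b$, $a>b$ or $a\conc b$" step, via Lemma~\ref{lemma:coh-up}, correctly returns $1$ exactly on the target configuration and $0$ on all others, uniformly in $s$. Once the recursion and the $\pm1$ coefficients are organised so that these two facts compose, both directions follow. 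I would therefore state and prove an auxiliary lemma producing, for each finite $X$ and each $r\in\Weak{X}$, such a vector $\varphi_r$, by induction on $|X|$ using Lemma~\ref{lemma:coh-up} for the inductive step and Proposition~\ref{prop:totals} for the base case, and then deduce both generation (combined with the Proposition~\ref{prop:split} rewriting above) and independence from it.
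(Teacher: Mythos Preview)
Your generation argument is essentially the paper's: both use Proposition~\ref{prop:split} to rewrite an order with a ``bad triple'' $x<_ry$, $x\conc_rz$, $y\conc_rz$ as a signed combination of strict extensions, and both terminate by a well-founded measure. (Your measure, the number of incomparable pairs, is in fact safer than the paper's count of bad triples.)

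For independence, your approach and the paper's use the same key tool (Lemma~\ref{lemma:coh-up}) and the same inductive descent on $|X|$, but there is a genuine gap in your formulation. Your recursion constructs $\varphi_r$ by collapsing a pair $a\conc_rb$; this lets you pass from $r$ on $X$ to $r'=r/(a\sim b)$ on $X\setminus\{b\}$, and the Lemma~\ref{lemma:coh-up} bookkeeping you describe is correct for that step. The problem is termination: once all $\conceq_r$-classes are collapsed, $r$ has become a \emph{total} order $r''$ on some $X''$, and you still need $\varphi_{r''}$ with $\outcome{s''\sync\varphi_{r''}}=\delta_{r'',s''}$ for \emph{all weak} $s''$ on $X''$. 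Proposition~\ref{prop:totals} does not give you this: setting $\varphi_{r''}=r''$ yields $\outcome{s''\sync r''}=1$ whenever $s''\subseteq r''$, which happens for every weak coarsening of $r''$, not only for $s''=r''$. So the base case of your induction fails whenever $|X''|\geq 2$.

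The paper sidesteps this by \emph{not} constructing explicit dual vectors. Instead it shows directly that any $u\in\Scal^{\Weak{X}}\cap Z(X)$ vanishes: for each non-total weak $r$, pick $a\conc_rb$, push $u$ down to $u'$ on $X\setminus\{b\}$ via $u'(t)=u(t_{a\sim b})$, check $u'\in Z(X\setminus\{b\})$ using exactly the Lemma~\ref{lemma:coh-up} computation you describe, and conclude $u(r)=u'(r/(a\sim b))=0$ by induction. Once all non-total coefficients are shown to vanish, $u$ is a combination of total orders lying in $Z(X)$, and \emph{now} Proposition~\ref{prop:totals} applies verbatim to conclude $u=0$. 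The difference is that Proposition~\ref{prop:totals} is invoked at the end, on an element already known to be supported on total orders, rather than as a base case producing dual vectors valid against arbitrary weak orders. Your argument becomes correct if you reorganise it in this direction: prove independence by showing the kernel is zero, first killing non-total coefficients by your collapse-and-reduce step, then finishing with Proposition~\ref{prop:totals}.
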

\begin{proof}
  Let $Z(X)$ be the submodule of all the $u∈\PreOrdAlgS{X}$ such that for
  all order $r$ over $X$, $\inter{u}{r}=0$.
  We actually prove the fact that $\PreOrdAlgS{X}$ is isomorphic to the direct
  sum $\Scal^{\Weak{X}}⊕ Z(X)$, which is equivalent since by definition
  $\OrdAlgS{X}$ is $\PreOrdAlgS{X}/Z(X)$ when the semiring $\Scal$ is a ring.

  We first prove that for all order $r$ over $X$ there is an
  $s∈\Scal^{\Weak{X}}$ such that $r-s∈Z(X)$.
  Let $N(r)=\set{(a,b,c)∈X^3}{a<_rb,a\conc_r c,b\conc_r c}$,
  we proceed by induction on $\sharp N(r)$.
  If $r=\emptyset$, then by Proposition~\ref{prop:weak} we
  have $r∈\Weak{X}$, so we can set $s=r$.
  Otherwise, consider a triple $(a,b,c)∈N(r)$.
  Define the orders $r_1:=r\sync{}[a<c]$, $r_2:=r\sync{}[c<b]$ and
  $r_3:=r\sync{}[a<c<b]$.
  By Proposition~\ref{prop:split}, we have $r_1+r_2-r_3-r∈Z(X)$.
  Besides, for each $i∈\{1,2,3\}$, clearly $N(r_i)\subset N(r)$ and
  $(a,b,c)∈N(r)∖ N(r_i)$, so $\sharp N(r_i)<\sharp N(r)$.
  We can then apply the induction hypothesis to get an $s_i∈\Scal^{\Weak{X}}$
  such that $r_i-s_i∈Z(X)$.
  We can then conclude by setting $s:=s_1+s_2-s_3$.

  As a consequence we have $\PreOrdAlgS{X}=\Scal^{\Weak{X}}+Z(X)$, and we now
  prove that this sum is direct by proving $\Scal^{\Weak{X}}\cap Z(X)=\{0\}$.
  We proceed by recurrence on the size of $X$.
  If $X$ has $0$ or $1$ element, then the only order over $X$ is the trivial
  order $t$, and $\inter{t}{t}=1\neq0$, so $Z(X)=\{0\}$ and the result
  trivially holds.
  Now let $n≥ 2$, suppose the result holds for all $X$ with at most $n-1$
  points, and let $u∈\Scal^{\Weak{X}}\cap Z(X)$.
  We now prove that $u$ is the zero function.

  Let $r$ be a weak total order that is not a total order, let $a,b∈X$ such
  that $a\conc_rb$.
  Let $X'=X∖\{b\}$.
  Define $u'∈\Scal^{\Weak{X'}}$ by $u'(t)=u(t_{a\sim b})$ for all
  $t\in\Ord{X'}$, so that $u(r)=u'(r/(a\sim b))$.
  For any orders $s∈\Weak{X}$ and $t∈\Ord{X'}$, by
  Lemma~\ref{lemma:coh-up} we have that
  $\inter{s}{(t_{a<b}+t_{a>b}-t_{a\sim b})}$ is $0$ if $a$ and $b$ are comparable in $s$, otherwise it is equal to
  $\inter{s}{t_{a\sim b}}$, which is itself equal to $\inter{s/(a\sim b)}{t}$
  by restriction to $X'$.
  Let $s'=s/(a\sim b)$, we have
  \[
    \inter{u}{(t_{a<b}+t_{a>b}-t_{a\sim b})}
    = \sum_{s∈\Weak{X},a\conc_sb}u(s)\inter{s}{t_{a\sim b}}
    = \sum_{s∈\Weak{X},a\conc_sb}u'(s')\inter{s'}{t}
  \]
  The mapping $s\mapsto s/(a\sim b)$ is a bijection from weak total orders
  over $X$ such that $a\conc b$ to weak total orders over $X'$, so the
  latter sum is equal to
  $\sum_{s'∈\Weak{X'}}u'(s')\inter{s'}{t}=\inter{u'}{t}$.
  Besides, $u$ is in $Z(X)$ so $\inter{u}{(t_{a<b}+t_{a>b}-t_{a\sim b})}=0$,
  which implies $\inter{u'}{t}=0$.
  This holds for all $t$, so $u'∈Z(X')$.
  By construction we have $u'∈\Scal^{\Weak{X'}}$ so $u'$ is in
  $\Scal^{\Weak{X'}}\cap Z(X')$.
  By the induction hypothesis this is $\{0\}$, so $u'=0$ and as a consequence
  we have $u(r)=u'(r/(a\sim b))=0$.

  By the argument above, we thus know that $u(r)=0$ as soon as $r$ is not a
  total order.
  In other words, $u$ is a linear combination of total orders.
  From Proposition~\ref{prop:totals} we know that total orders are linearly
  independent in $\OrdAlgS{X}$, so we can conclude that $u=0$.
\end{proof}

As a consequence, weak total orders on subsets of $\web{X}$ form a basis of
the static order algebra $\OrdAlg{X}$.
We can extend this property to arbitrary order algebras using the
representation property.

\begin{theorem}\label{thm:basis}
  Let $X$ be an arena.
  Then $\OrdAlg{X}$ has a basis $(b_i)_{i∈I}$ made of plays if
  \begin{itemize}
  \item $\Scal$ is idempotent, then the $b_i$ are the orbits of
    totally ordered plays under $\pgroup[X]$, or
  \item $\Scal$ is a regular ring, then the $b_i$ are the orbits of
    weakly totally ordered plays under $\pgroup[X]$.
  \end{itemize}
  In both cases, if $\Scal$ is rational, then there exists a family of vectors
  $(b^*_i)_{i∈I}$ such that for all $i,j∈I$, $\outcome{b_i\psync b^*_j}$ is
  $1$ if $i=j$ and $0$ otherwise.
\end{theorem}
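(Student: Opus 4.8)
\noindent\textit{Proof plan.} Both assertions reduce, through the representation map $\psplit[X]$ of Proposition~\ref{prop:represent} and the decomposition $\OrdAlg{\web{X}}=\bigoplus_{Y\in\PartsFin{\web{X}}}\OrdAlgS{Y}$ of Proposition~\ref{prop:direct-sum}, to finite-dimensional computations in the strict order algebras $\OrdAlgS{Y}$, where $Y$ ranges over the representant subsets of $\web{X}$.

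For the basis, since $\psplit[X]$ is an injective linear map of $\OrdAlg{X}$ into $\OrdAlg{\web{X}}$, and an injective linear map sending a generating (resp.\ linearly independent) family onto such a family of its image reflects bases, it is enough to check that the images $\psplit[X](b_i)=\saturate[X]\repr[X]{r_i}$ — with $r_i$ a (weakly) totally ordered play in the orbit indexed by $i$ — form a basis of $W:=\psplit[X](\OrdAlg{X})$. Sorting $W$ by support gives $W=\bigoplus_Y W_Y$ over representant subsets $Y$, with $W_Y=\Vect{\set{\saturate[X]s}{s\in\Ord{Y}}}\subseteq\OrdAlgS{Y}$, since inconsistent plays vanish and $\saturate[X]s$ depends only on the $\rpgroup{Y}$-orbit of $s$. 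Generation: expand an arbitrary $s\in\Ord{Y}$ over the weak total (resp.\ total) orders using Proposition~\ref{prop:basis-weak} (resp.\ Proposition~\ref{prop:basis-totals}), apply the linear map $\saturate[X]$, which respects $\obseqv$ because each permutation is an automorphism of synchronisation and outcome, and group the terms by orbit — each $\saturate[X]w$ occurring is exactly some $\psplit[X](b_i)$. Independence: after projecting to a fixed $\OrdAlgS{Y}$ one gets $\psplit[X](b_i)=\multiplicity[X]{\repr[X]{r_i}}$ times the sum of the (weakly) totally ordered plays of that orbit; distinct orbits use disjoint families of such plays, which are linearly independent in $\OrdAlgS{Y}$ by Proposition~\ref{prop:basis-weak}/\ref{prop:basis-totals}; equating coefficients and cancelling the multiplicity — a non-zero integer, legitimate since $\Scal$ is regular, and equal to $1$ outright when $\Scal$ is idempotent — gives the conclusion.

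For the family $(b^*_i)$, take $\Scal$ rational, hence regular, so that $\langle u,v\rangle:=\outcome{u\psync v}$ descends to a symmetric bilinear form on $\OrdAlg{X}$, non-degenerate by the very definition of $\obseqv$. Expanding with Definition~\ref{def:psync} shows $\langle b_i,b_j\rangle=0$ whenever $b_i$ and $b_j$ have supports in distinct $\pgroup[X]$-orbits, so the Gram matrix of $(b_i)$ is block-diagonal with one finite block $G_Y=(\langle b_i,b_j\rangle)_{i,j\in I_Y}$ per representant subset $Y$, where $I_Y\subseteq I$ collects the orbits whose plays have support in the orbit of $Y$ (a finite set, since there are finitely many orders on the finite set $Y$); moreover every entry of $G_Y$ is a non-negative integer of $\Scal$ whose value does not depend on the choice of $\Scal$. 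It remains to invert each $G_Y$ over $\Scal$. If $\Scal$ is idempotent, $G_Y$ is the identity: distinct total orders are incompatible (as in Proposition~\ref{prop:totals}), so $G_Y$ is diagonal, and its diagonal entries, being multiplicities, equal $1$ in $\Scal$; here $b^*_j:=b_j$ works. If instead $\Scal$ is a ring, then being rational it contains $\Rat$; applying the basis part over the field $\Rat$, the form is non-degenerate there, the block $\Vect[\Rat]{\set{b_i}{i\in I_Y}}$ is finite-dimensional and orthogonal to the others, so the restricted form is non-degenerate and its Gram matrix — the very integer matrix underlying $G_Y$ — is invertible over $\Rat$; its inverse has entries in $\Rat\subseteq\Scal$, so it inverts $G_Y$ over $\Scal$ as well. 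Then $b^*_j:=\sum_{k\in I_Y}(G_Y^{-1})_{jk}\,b_k$ for $j\in I_Y$, and symmetry of $G_Y$ together with the block structure gives $\outcome{b_i\psync b^*_j}$ equal to $1$ when $i=j$ and $0$ otherwise.

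The delicate point throughout is the bookkeeping: keeping track of the multiplicity factors as plays are saturated and normalised to their chosen representants, and pinning down which $\OrdAlgS{Y}$ carries each orbit. The single genuinely non-formal observation is that, for $(b^*_i)$, non-degeneracy of the pairing over the field $\Rat$ together with the integrality of the Gram matrix is precisely what forces invertibility over an arbitrary rational semiring; everything else is routine linear algebra and orbit counting.
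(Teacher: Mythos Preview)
Your proposal is correct and follows essentially the same route as the paper: reduce to the static algebra via $\psplit[X]=\saturate[X]\circ\repr[X]$ and Proposition~\ref{prop:represent}, invoke the bases of Propositions~\ref{prop:basis-totals} and~\ref{prop:basis-weak} support by support via Proposition~\ref{prop:direct-sum}, cancel the multiplicity factor using regularity (trivially in the idempotent case), and for the dual family invert the block-diagonal Gram matrix by observing that its integer entries make it invertible already over~$\Rat$. The only stylistic difference is that the paper argues generation directly in $\OrdAlg{X}$ (permutations of (weak) total orders are again such, so base plays generate) rather than passing through the image $W=\psplit[X](\OrdAlg{X})$ as you do; both are fine, yours is just slightly more bookkeeping.
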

\begin{proof}
  Propositions \ref{prop:basis-totals} and \ref{prop:basis-weak} provide bases
  of the appropriate kinds for strict static order algebras.
  By Proposition~\ref{prop:direct-sum}, these yield bases for static order
  algebras.
  In each case, call the elements of these bases \emph{base plays}.
  A permutation of a (weak) total order is always an order of the same kind,
  so from the fact that base plays generate $\OrdAlg{\web{X}}$, we deduce that
  they also generate $\OrdAlg{X}$.
  Now consider two linear combinations $u=\sum_{i∈I}λ_ir_i$ and
  $v=\sum_{i∈I}\mu_ir_i$, where the $r_i$ are distinct base plays for
  $\web{X}$ and representants (as of Definition~\ref{def:choice}), and suppose
  $u\obseqv v$.
  By Proposition~\ref{prop:represent}, we can deduce
  $\sum_{i∈I}λ_i\saturate r_i
  \obseqv[\web{X}]\sum_{i∈I}\mu_i\saturate r_i$,
  and this equivalence is an equality since both sides are linear combinations
  of base plays.
  Now consider any $i∈I$.
  In $\sum_{i∈I}λ_i\saturate r_i$, the coefficient of $r_i$ is
  $\multiplicity[X]{r_i}$, so the equality above implies
  $\multiplicity[X]{r_i}λ_i=\multiplicity[X]{r_i}\mu_i$, and
  subsequently $λ_i=\mu_i$ since $\multiplicity[X]{r_i}$ is a non-zero integer
  and $\Scal$ is regular.
  Hence representants of base plays form a basis of $\OrdAlg{X}$.

  If $\Scal$ is an idempotent semiring, then by Proposition~\ref{prop:basis-totals}
  the family $(b_i)_{i∈I}$ is made of total orders, so if we set
  $b^*_i=b_i$ for each $i$ we have the expected property.

  Now suppose $\Scal$ is a rational ring.
  Let $A$ be a representant finite subset of $\web{X}$.
  Call $a_1,\ldots,a_n$ the subset of the basis whose plays have support $A$,
  and let $M=(m_{ij})$ be the $n× n$ matrix such that
  $m_{ij}=\outcome{a_i\psync a_j}$.
  $M$ has coefficients in natural numbers, and since the family $(a_i)$ is
  linearly independent by hypothesis, $M$ is invertible in $\Rat$.
  Since $\Scal$ is a regular ring, it is an algebra over $\Rat$, so $M$ is
  also invertible in $\Scal$.
  Let $M^{-1}=(m'_{ij})$ and let $a_i^*:=\sum_{j=1}^nm'_{ij}a_i$, then by
  construction $\outcome{a_i\psync a^*_j}$ is $1$ if $i=j$ and $0$ otherwise.
\end{proof}

Observe that if $\Scal$ is a rational ring, then in particular it is an algebra
over $\Rat$, then $\OrdAlg[\Scal]{X}=\Scal\otimes\OrdAlg[\Rat]{X}$ as
$\Rat$-algebras, since all plays decompose uniquely as linear combinations of
base plays with integer coefficients.
The outcome in $\OrdAlg[\Scal]{X}$ then appears as the tensor of the identity
over $\Scal$ and the outcome over $\OrdAlg[\Rat]{X}$.
The algebra $\OrdAlg[\Rat]{X}$ further decomposes into the direct sum of the
strict order algebras $\OrdAlgS[\Rat]{Y}$ for all representant subset $Y$ with
the permutation group induced over it.
This is particularly useful since the $\OrdAlgS[\Rat]{Y}$ are finite
dimensional vector spaces over $\Rat$.

On the other hand, if $\Scal$ is neither idempotent nor a regular ring, it is
possible that there is no base.
For instance, if $\Scal=\Nat$, then clearly a play $s$ cannot be decomposed as
a non-trivial sum of vectors, so any generating family must contain all plays,
but then the equation of proposition~\ref{prop:split} states that they are not
linearly independent.

\section{Logical structure} 
\label{sec:logic}

In this section, we describe constructions on order algebras.
Although order algebras themselves have some interesting structure, the
actual objects we are interested in are submodules of such algebras, hereafter
called types, which enjoy better properties.

\begin{definition}
  A \emph{type} over an arena $X$ is a submodule of $\OrdAlg{X}$ generated by
  a family of plays in $\Plays{X}$.
  A type is \emph{strict} if it does not contain the empty play.
  The notation $A:X$ is used to represent the fact that $A$ is a type over $X$.
  A \emph{morphism} between types $A:X$ and $B:Y$ is a linear map $f$ from
  $\OrdAlg{X}$ to $\OrdAlg{Y}$ such that $f(A)\subset B$.
\end{definition}

The requirement that types are generated by plays is justified by the idea
that a type should be a constraint on the behaviours of processes, and that
such a constraint should boil down to a constraint on the shape of plays that
a process can exhibit.
We could also define a type over $X$ simply as a subset $S$ of $\Plays{X}$,
but the definition as submodules makes it clear that observationally
equivalent vectors should belong to the same types, even if one is a
combination of plays in $S$ while the other is not (this can happen even if
$S$ is closed under permutations, because of the equation of
Proposition~\ref{prop:split}).

\begin{example}
  The intended meaning of order algebras is that vectors, that is linear
  combinations of plays, represent processes.
  Then types impose constraints on the possible behaviours of processes, based
  on the possible interactions scenarii they may exhibit.
  For instance, we can define the type of processes that perform three actions
  of label $a$, as the submodule generated by the plays that contain three
  points in the orbit $a$.
  Similarly, we could define the type of all plays that include as many $a$'s
  as $b$'s.

  Typed may also used in particular to impose well-formedness conditions.
  For instance, when modelling a calculus like \pii\ that includes
  communication of bound names, one wants to impose that any play that
  contains an event on a bound name also contains the event that communicates
  this name.
\end{example}

\begin{example}\label{ex:type}
  Note that the condition of being a submodule of $\OrdAlg{X}$ imposes
  non-trivial conditions.
  For instance, consider the CCS algebra, as of Example~\ref{ex:ccs}, with
  $\Scal$ idempotent.
  We can define the type of processes in which all actions $a$ are causally
  independent of all actions $b$, as generated by the plays where all
  occurrences of $a$ are incomparable with all occurrences of $b$.
  This type does not contain the processes $a.b$ and $b.a$, obviously, but it
  does contain their sum $a.b+b.a$, which is observationally equivalent to
  the parallel composition $a\para b$.
\end{example}

\begin{proposition}\label{prop:type-basis}
  If $\Scal$ is a rational ring, then for all type $A:X$ there is a family of
  plays $(c_i)_{i∈I}$ and a family of vectors $(c^*_i)_{i∈I}$ in
  $\OrdAlg{X}$ such that $(c_i)_{i∈I}$ is a basis of $A$ and for all
  $i,j∈I$, $\outcome{c_i\psync c^*_j}$ is $1$ if $i=j$ and $0$ otherwise.
\end{proposition}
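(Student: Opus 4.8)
The plan is to reduce to the ground case $\Scal=\Rat$ by a flat base change, and then to produce the basis and its dual at the same time, block by block, inside the finite-dimensional pieces of $\OrdAlg[\Rat]{X}$ furnished by Theorem~\ref{thm:basis}.

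\textbf{Reduction to $\Scal=\Rat$.} Since $\Scal$ is a rational ring it is a $\Rat$-algebra, so $\OrdAlg[\Scal]{X}=\Scal\otimes_\Rat\OrdAlg[\Rat]{X}$ as recalled after Theorem~\ref{thm:basis}, with $\psync$ on the left equal to $\mathrm{id}_\Scal\otimes\psync$ and the outcome equal to $\mathrm{id}_\Scal\otimes\outcome{\cdot}$. If $A$ is generated by a family of plays $(s_k)_{k\in K}$, let $A_0$ be the $\Rat$-subspace of $\OrdAlg[\Rat]{X}$ generated by the \emph{same} plays. As every $\Rat$-module is flat, tensoring the inclusion $A_0\hookrightarrow\OrdAlg[\Rat]{X}$ by $\Scal$ stays injective, and its image is exactly $A=\sum_k\Scal s_k$; thus $A=\Scal\otimes_\Rat A_0$. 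Consequently, if $(c_i)_{i\in I}$ is a basis of the $\Rat$-vector space $A_0$ made of plays and $(c^*_i)_{i\in I}$ is a family in $\OrdAlg[\Rat]{X}$ with $\outcome{c_i\psync c^*_j}$ equal to $1$ for $i=j$ and $0$ otherwise, then $(1\otimes c_i)$ is an $\Scal$-basis of $A$ and, reading the $c^*_i$ inside $\OrdAlg[\Scal]{X}$, the scalars $\outcome{c_i\psync c^*_j}$ keep the same value, now in $\Scal$. So it suffices to treat $\Scal=\Rat$.

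\textbf{Basis and dual over $\Rat$.} By the remarks following Theorem~\ref{thm:basis}, $\OrdAlg[\Rat]{X}$ is the direct sum, over representant finite subsets $Y$ of $\web{X}$, of the finite-dimensional strict order algebras $V_Y$ (over $Y$ with the induced permutation group), and the plays whose support is in the orbit of $Y$ span precisely $V_Y$. Grouping the generators $(s_k)$ by the orbit of their support gives $A_0=\bigoplus_{[Y]}A_{0,[Y]}$, where $A_{0,[Y]}\subseteq V_Y$ is the finite-dimensional subspace spanned by those $s_k$ of support orbit $[Y]$; the sum is direct because the $V_Y$ are. From each of these finite spanning sets of plays, extract a basis $c^{(Y)}_1,\dots,c^{(Y)}_{m_Y}$; their union over all orbits is a basis of $A_0$ consisting of plays. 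For the dual family, fix $[Y]$. By Theorem~\ref{thm:basis} applied over the rational ring $\Rat$, the Gram matrix of the base plays of $V_Y$ for $(u,v)\mapsto\outcome{u\psync v}$ is invertible in $\Rat$, so this bilinear form is non-degenerate on the finite-dimensional space $V_Y$ and induces an isomorphism $V_Y\to V_Y^{*}$. Composing with the restriction $V_Y^{*}\to A_{0,[Y]}^{*}$, which is surjective since $\Rat$ is a field, and pulling back the dual basis of $(c^{(Y)}_i)_i$, we obtain vectors $c^{*(Y)}_j\in V_Y$ with $\outcome{c^{(Y)}_i\psync c^{*(Y)}_j}=\delta_{ij}$. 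Across two distinct blocks $[Y]\neq[Y']$, $c^{(Y)}_i$ is a play of support orbit $[Y]$ while $c^{*(Y')}_j$ is a combination of plays of support orbit $[Y']$, so the defining sum of $\psync$ is empty and $c^{(Y)}_i\psync c^{*(Y')}_j=0$; hence $\outcome{c^{(Y)}_i\psync c^{*(Y')}_j}=0$ too. Collecting the $c^{(Y)}_i$ and the $c^{*(Y)}_i$ over all orbits settles the case $\Scal=\Rat$, and with it the proposition.

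\textbf{Expected obstacle.} No single step is deep; the point to get right is keeping the two demands — a basis consisting of \emph{plays} and an orthonormal-type \emph{dual} family — simultaneously satisfiable. This is exactly why one works block by block in the finite-dimensional $V_Y$: there, over the field $\Rat$, a non-degenerate pairing restricts to a non-degenerate pairing on any subspace on one side (functionals on a subspace extend and are represented by genuine vectors through the form), and this is what yields the $c^*_j$ inside $V_Y$ rather than in some abstract dual. All genuinely ring-theoretic content is disposed of at the outset by the flat base change $\Scal\otimes_\Rat(-)$, after which only elementary finite-dimensional linear algebra over $\Rat$ remains.
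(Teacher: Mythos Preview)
Your proof is correct and follows essentially the same route as the paper: reduce to $\Scal=\Rat$ via the tensor decomposition $\OrdAlg[\Scal]{X}=\Scal\otimes_\Rat\OrdAlg[\Rat]{X}$, extract a basis of plays from the generators, and produce the dual family block by block inside the finite-dimensional strict pieces indexed by support orbits. The only cosmetic difference is that the paper builds each $c^*_n$ explicitly as $\sum_k f(b_k)\,b^*_k$ using the dual family $(b^*_k)$ supplied by Theorem~\ref{thm:basis}, whereas you phrase the same step abstractly as non-degeneracy of $(u,v)\mapsto\outcome{u\psync v}$ on $V_Y$ followed by a pullback along the surjection $V_Y\cong V_Y^*\twoheadrightarrow A_{0,[Y]}^*$; these are the same argument.
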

\begin{proof}
  Recall that if $\Scal$ is a rational ring, then it is an algebra over $\Rat$
  and $\OrdAlg[\Scal]{X}$ can be seen as the $\Rat$-algebra
  $\Scal\otimes\OrdAlg[\Rat]{X}$.
  Since $A$ is generated by plays, we can then decompose it as
  $\Scal\otimes A'$ for a type $A'$ in $\OrdAlg[\Rat]{X}$, so it is enough to
  prove the result in the case $\Scal=\Rat$.
  In this case $A$ is a subspace of the vector space $\OrdAlg[\Rat]{X}$, so it
  is a standard result that from the generating family we can extract a basis.

  Now assume that $(c_i)_{i∈I}$ is a basis of $A$, and consider a
  particular base play $c_n$.
  Set $J:=\set{i∈I}{\web{c_i}=\web{c_n}}$.
  Then $(c_i)_{i∈J}$ is a basis of the intersection of $A$ and
  $\OrdAlgS{\web{c_n},\rpgroup{\web{c_n}}}$, the strict order algebra over
  $\web{c_n}$ with the induced permutation group, which is a
  finite-dimensional $\Rat$-vector space.
  Let $f$ be a linear form over this algebra such that $f(c_n)=1$ and for all
  $j∈J∖\{i\}$, $f(c_j)=0$.
  Using the bases $(b_n)$ and $(b^*_n)$ from Theorem~\ref{thm:basis} we can
  define $c^*_n=\sum_{\web{b_k}=\web{c_n}}f(b_k)b^*_k$ and check that for all
  vector $x$ with $\web{x}=\web{c_n}$ we have $f(x)=\outcome{c^*_n\psync x}$.
  Then $c^*_n$ satisfies the expected condition.
\end{proof}

\subsection{Products and linear maps} 
\label{sec:partial}

When combining order algebras, we need a notion of combination of arenas.
Disjoint union is the simplest way, and also the most sensible one:
\begin{definition}
  Let $(X_i)_{i\in I}$ be a family of arenas with pairwise disjoint webs.
  Define the sum of the family $(X_i)$ as
  \[
    \sum_{i∈I} X_i :=
    \Bigl( \biguplus_{i\in I} X_i, \prod_{i∈I} \pgroup[X_i] \Bigr)
  \quad \text{with} \quad
    σ⋅x := σ_j⋅x
  \quad \text{for all }
    σ∈\prod_{i∈I} \pgroup[X_i] \text{ and }
    x∈\web{X_j} .
  \]
  We use equivalently the infix notation $X_1+X_2+\cdots+X_n$ for finite sums.
\end{definition}

\begin{example}\label{ex:sum-csp}
  Following on our first examples, if $X_A$ and $X_B$ are the arenas used for
  modelling CSP processes over alphabets $A$ and $B$ respectively (see
  Example~\ref{ex:csp}), then assuming $A$ and $B$ are disjoint the webs
  $\web{X_A}$ and $\web{X_B}$ are disjoint too and $X_A+X_B$ is actually the
  arena for processes in the alphabet $A\uplus B$.
\end{example}

This sum can be seen as a coproduct in a suitable category of arenas.
At the level of order algebras, however, this operation is not a Cartesian
product or coproduct, and not even a tensor product in the sense of
$\Scal$-algebras, because the algebra $\OrdAlg{X+Y}$ contains more plays than
those that appear as disjoint unions of a play in $\web{X}$ and one in
$\web{Y}$.
However, $\OrdAlg{X+Y}$ contains products and tensors as submodules, hence our
definition of types.
In all statements below, unless explicitly stated, different arenas are always
supposed to be disjoint.

\begin{proposition}
  For all types $A:X$ and $B:Y$, $A+B$ is a type over $X+Y$ that is isomorphic
  to the direct sum and Cartesian product of $A$ and $B$.
\end{proposition}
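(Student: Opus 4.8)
The plan is to realise $A+B$ inside $\OrdAlg{X+Y}$ and to show that it splits along the two webs. Since a play over $X$ is in particular a play over $X+Y$ (its support is contained in $\web X\subseteq\web{X+Y}$), and likewise for $Y$, I read $A+B$ as the submodule of $\OrdAlg{X+Y}$ generated by the plays generating $A$ together with those generating $B$; being generated by plays, it is a type over $X+Y$, which settles the first claim. The key preliminary is that the web inclusions induce injective morphisms $\iota_X\colon\OrdAlg X\to\OrdAlg{X+Y}$ and $\iota_Y\colon\OrdAlg Y\to\OrdAlg{X+Y}$ of $\Scal$-semialgebras. I would derive this from two observations: first, a vector all of whose plays are supported in $\web X$ --- say \emph{pure-$X$} --- can synchronise nontrivially only with a pure-$X$ vector, because $\psync$ forces matching supports and $\pgroup[X+Y]=\pgroup[X]\times\pgroup[Y]$ preserves the partition $\web X\uplus\web Y$; second, on pure-$X$ data the orbit, the multiplicity, the synchronisation and the outcome computed in $X+Y$ coincide with those computed in $X$ --- the stabiliser of a pure-$X$ play and the pointwise stabiliser of its support both contain the entire factor $\pgroup[Y]$, which cancels in the index defining the multiplicity, while static synchronisation and outcome ignore the group. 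These together give $u\obseqv[X]u'\iff u\obseqv[X+Y]u'$ for pure-$X$ vectors, so $\iota_X$ (and symmetrically $\iota_Y$) is well defined and injective, and $\psync$ restricted to pure-$X$ vectors is static synchronisation in $\OrdAlg{X+Y}$.

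Next I would invoke the support decomposition in the style of Proposition~\ref{prop:direct-sum}. Observational equivalence preserves supports, and under $\pgroup[X]\times\pgroup[Y]$ the orbit of a finite set $C=C_X\uplus C_Y$ is determined by the orbit of $C_X$ in $\web X$ and that of $C_Y$ in $\web Y$; hence $\OrdAlg{X+Y}=\bigoplus_{[C]}\OrdAlgS{C,\rpgroup{C}}$ over orbits $[C]$ of finite subsets, with $\iota_X(\OrdAlg X)$ the span of the summands for which $C\subseteq\web X$ and $\iota_Y(\OrdAlg Y)$ the span of those for which $C\subseteq\web Y$. Because $\web X\cap\web Y=\emptyset$, a \emph{nonempty} finite set cannot be included in both webs, so these two families of summands are disjoint except for the single summand $C=\emptyset$, which is the line $\Scal\,e_\emptyset$. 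Consequently $\iota_X(A)\cap\iota_Y(B)\subseteq\Scal\,e_\emptyset$, and when this intersection vanishes --- in particular as soon as one of $A$ or $B$ is strict --- the submodule sum is the internal direct sum $A+B=\iota_X(A)\oplus\iota_Y(B)\cong A\oplus B$. To upgrade this to the Cartesian product, I would take the projections $p_X,p_Y$ onto the summands with, respectively, $C\subseteq\web X$, and $C\subseteq\web Y$ with $C\neq\emptyset$ (the asymmetry on $\emptyset$ making the identities balance); one checks that they restrict to maps $A+B\to A$ and $A+B\to B$ and that $p_X\iota_X=\operatorname{id}_A$, $p_Y\iota_Y=\operatorname{id}_B$, $p_X\iota_Y=p_Y\iota_X=0$ and $\iota_Xp_X+\iota_Yp_Y=\operatorname{id}_{A+B}$, all of which are immediate on generators. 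A finite biproduct being at once a coproduct and a product, this gives both the direct-sum and the Cartesian-product assertions.

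The step I expect to be the real obstacle is the empty play. It lies in every $\OrdAlg X$, so it is the single point where the images of $\OrdAlg X$ and $\OrdAlg Y$ overlap inside $\OrdAlg{X+Y}$: the clean biproduct picture above requires that $\Scal\,e_\emptyset$ contribute to at most one of the two types, so the statement is sharpest for strict types, and for two arbitrary (non-strict) types one must read the direct sum after splitting off --- or identifying --- this common line. Beyond this corner case the argument is routine bookkeeping with supports and the product group $\pgroup[X]\times\pgroup[Y]$; the only other thing to note is that Proposition~\ref{prop:direct-sum}, stated there for static arenas, transfers verbatim once subsets of the web are replaced by their orbits, exactly as already done in Proposition~\ref{prop:type-basis}.
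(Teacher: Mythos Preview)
Your approach is essentially the paper's: embed $A$ and $B$ into $\OrdAlg{X+Y}$ via the web inclusions, argue that the images meet trivially, and conclude that the generated submodule is an internal direct sum (hence a biproduct). The paper's own proof is a three-line sketch: it asserts that $\OrdAlg{X}$ is ``obviously'' a submodule of $\OrdAlg{X+Y}$, claims that $A$ and $B$ are disjoint because no permutation of $X+Y$ mixes points of $\web X$ with points of $\web Y$, and stops there. Your version is more careful on both steps --- you actually justify why $\obseqv[X]$ and $\obseqv[X+Y]$ agree on pure-$X$ vectors, and you construct the biproduct projections explicitly --- but the underlying idea is the same.

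Where you go beyond the paper is the empty-play observation, and it is a genuine point: $e_\emptyset$ lies in both $\iota_X(\OrdAlg X)$ and $\iota_Y(\OrdAlg Y)$, so the paper's bald claim that ``$A$ and $B$ are disjoint'' is literally false whenever both types are non-strict. The paper simply overlooks this. Your diagnosis --- that the clean biproduct holds once at least one of $A,B$ is strict, and that otherwise one must split off or identify the common line $\Scal\,e_\emptyset$ --- is correct and sharper than what the paper actually proves. So your proposal is not merely an alternative route; it patches a small gap in the original argument.
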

\begin{proof}
  $A$ is a submodule of $\OrdAlg{X}$, which is itself obviously a submodule of
  $\OrdAlg{X+Y}$.
  Similarly, $B$ is a submodule of $\OrdAlg{X+Y}$, and since $X$ and $Y$ are
  disjoint, so are $A$ and $B$, since no permutation in $X+Y$ can map a point
  of $\web{X}$ to a point of $\web{Y}$.
  Hence the submodule generated by $A+B=\Vect{A\cup B}$ in $\OrdAlg{X+Y}$ is a
  direct sum of $A$ and $B$.
\end{proof}

\begin{definition}
  Let $X$ be an arena, let $Y$ be a subset of $\web{X}$ closed under
  permutations in $\pgroup[X]$.
  Restriction to $Y$ is the linear map $\restric{Y}{}$ over $\PreOrdAlg{X}$
  such that for all $r∈\Plays{X}$,
  \[
    \restric{Y}{r} :=
    \outcome{r} ⋅ (\web{r}\cap Y, \order{r}\cap Y^2)
  \]
\end{definition}

Restriction of a play $r$ to a given subset $Y\subset\web{X}$
amounts to ignore the part of $r$ that happens outside $Y$, considering that
events in $\web{X}∖ Y$ are private, hence unobservable.
The fact that $Y$ must be closed under permutations is in accordance with the
intuition that two plays are indistinguishable when they are permutations of
each other.

\begin{example}
  Following on example~\ref{ex:sum-csp},
  $\restric{X_A}:\PreOrdAlg{X_A+X_B}→\PreOrdAlg{X_A}$
  precisely represents CSP's restriction operator that maps a trace $t$ to the
  restricted trace $t\upharpoonright A$.
\end{example}

Hence, as we shall see (in detail in Section~\ref{sec-traces}), restriction
does \emph{not} correspond to the hiding operator $\new{}$ of the
\pii-calculus and related languages.
Indeed, the externally observable behaviours of $\new{u}P$ are those of $P$
that do not involve an event on $u$, which to mapping to $0$ all plays in $P$
that contain an event on $u$, before actually restricting to the arena that
does not contain events on $u$.

\begin{remark}
  Note that in the definition we impose a coefficient $\outcome{r}$ on the
  restricted play.
  Since the outcome $\outcome{r}$ is $0$ or $1$ for any play $r$, this amounts
  to imposing that $\restric{Y}{r}$ be $0$ if $r$ is inconsistent.
  This condition is necessary because we want outcomes to be preserved by
  restriction: if a play is inconsistent, it means that it contains some
  deadlock, and hiding the place where this occurs surely should not resolve
  the deadlock.
  For instance, an inconsistent play like
  $(\tikzinline{
    \eventnode{(0,0)}{a}{left}{a}
    \eventnode{(1,0)}{b}{right}{b}
    \draw (a) to [bend left] (b) to [bend left] (a); })$,
  when restricted to $\{a\}$, would yield the consistent play
  $(\tikzinline{ \eventnode{(0,0)}{a}{left}{a} })$.

  Incidentally, this implies that $\restric{X}{}:\PreOrdAlg{X}→\PreOrdAlg{X}$
  is not the identity, because it collapses all inconsistent plays to $0$.
  However, up to observational equivalence, it is the identity.
\end{remark}

\begin{proposition}\label{prop:restric-compat}
  Restriction is compatible with observational equivalence.
\end{proposition}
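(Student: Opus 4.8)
The plan is to reduce the statement to a check on plays, using the characterization of observational equivalence via outcomes against arbitrary test vectors. Since restriction is linear and observational equivalence is a linear congruence, it suffices to show that for all $u \obseqv u'$ in $\PreOrdAlg{X}$ we have $\restric{Y}{u} \obseqv \restric{Y}{u'}$ in $\PreOrdAlg{Y}$, where $Y$ (with the induced permutation group $\rpgroup{Y}$, equivalently $\pgroup[X]$ restricted to the permutation-closed set $Y$) is the relevant arena. Unfolding the definition, this means $\outcome{\restric{Y}{u} \psync v} = \outcome{\restric{Y}{u'} \psync v}$ for every play $v \in \Plays{Y}$. The key step is therefore to exhibit, for each test play $v$ over $Y$, a test vector $\hat v$ over $X$ such that
\[
  \outcome{\restric{Y}{u} \psync v} = \outcome{u \psync \hat v}
\]
for all $u \in \PreOrdAlg{X}$; once this is done, $u \obseqv u'$ immediately gives the desired equality, and hence $\restric{Y}{u} \obseqv \restric{Y}{u'}$.

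The natural candidate for $\hat v$ is a linear combination of plays over $X$ that extend $v$ on the complement $\web{X} \setminus Y$: given a play $r$ with $\web{r} = \web{v} \uplus C$ for some finite $C \subseteq \web{X} \setminus Y$, the contribution of $r$ to $\outcome{\restric{Y}{u} \psync v}$ is governed by whether the restricted order $\order{r} \cap Y^2$, merged with $\order{v}$, is acyclic (and whether $r$ itself was consistent, because of the $\outcome{r}$ factor built into restriction). So I would take $\hat v$ to be (a suitably normalized, to account for multiplicities, sum over) the plays $s$ over $X$ with $\web{s} \supseteq \web{v}$, $\web{s} \setminus \web{v} \subseteq \web{X} \setminus Y$, and $\order{s} = \order{v} \cup (\text{anything on the private part})$ — concretely, $s$ ranges over plays whose restriction to $Y$ has order containing $\order{v}$ and which impose no constraints linking $Y$ to the private events beyond what is forced. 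One then checks on a single play $r$ that $\outcome{\restric{Y}{r} \sync v}$ equals the count $\outcome{r \psync \hat v}$: both sides are $1$ exactly when $r$ is consistent and $(\order{r} \cap Y^2) \cup \order{v}$ is acyclic, because merging $v$ with the private part of a consistent $r$ cannot create a cycle that did not already project to a cycle in $Y$, and conversely any acyclic merge over $Y$ lifts. Extending by linearity over $r$ gives the identity for all $u$.

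The main obstacle is the bookkeeping around the consistency coefficient $\outcome{r}$ in the definition of restriction and around multiplicities in permuted synchronisation: one must be careful that an \emph{inconsistent} $r$ contributes $0$ to $\restric{Y}{u}$ (hence $0$ to the left side) and also contributes $0$ to $\outcome{u \psync \hat v}$, which requires that the test vector $\hat v$ never "resolves" a deadlock — this is exactly why $\hat v$ should only add constraints that are already implied on $Y$, never constraints on the private events. A cleaner route that sidesteps part of this is to first reduce to the static case via Proposition \ref{prop:represent} (so that $\psync$ becomes $\sync$ and all multiplicities are $1$), prove the statement for static arenas where the combinatorics is transparent, and then transfer back; this works because restriction commutes appropriately with saturation and with the choice of representants on the permutation-closed subset $Y$. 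Either way, once the play-level identity $\outcome{\restric{Y}{r}\sync v} = \outcome{r\psync\hat v}$ is established, the proposition follows formally from linearity and the definition of $\obseqv$.
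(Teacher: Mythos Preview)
Your approach matches the paper's: reduce to the static case via Proposition~\ref{prop:represent} (using that restriction commutes with permutations on the permutation-closed set $Y$), then establish on plays that $\outcome{\restric{Y}{r}\sync s}$ equals $\outcome{r\sync s'}$, where $s'$ is the play obtained from $s$ by adjoining the private part $Z=\web{r}\setminus Y$ with the identity order. That play-level identity, and the observation that an inconsistent $r$ contributes $0$ on both sides, are exactly the core of the paper's argument.

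The one gap to flag: a single test vector $\hat v$ valid \emph{for all} $u$ does not exist in $\PreOrdAlg{X}$, because it would need one summand for each finite $Z\subseteq\web{X}\setminus Y$, and there are infinitely many of those when $\web{X}\setminus Y$ is infinite. The paper avoids this by first splitting $u$ and $u'$ along supports via Proposition~\ref{prop:direct-sum}: from $u\obseqv u'$ one deduces $u_C\obseqv u'_C$ for each finite support $C$, and then for a fixed $C$ the single extension of $s$ by $Z=C\setminus Y$ suffices as the test. Equivalently, you can let $\hat v$ depend on the finitely many supports occurring in $u$ and $u'$. Also, your phrase ``anything on the private part'' is misleading: you want exactly the neutral (identity) order on $Z$, not a sum over all orders there, which would overcount. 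With these small adjustments your argument is the paper's.
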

\begin{proof}
  First observe that, since $Y$ is supposed to be closed by permutations,
  restriction commutes with permutations, hence
  $\psplit(\restric{Y}{u})=\restric{Y}{\psplit(u)}$.
  Then by the representation property (Proposition~\ref{prop:represent}), if
  $u\obseqv v$ then $\psplit(u)\obseqv[\web{X}]\psplit(v)$, so it suffices to
  prove that restriction is compatible with observational equivalence in
  static arenas.

  Let $Z$ be a finite subset of $\web{X}∖ Y$, define $\extend{Z}$ as
  the linear map such that for all $t∈\Plays{X}$, $\extend{Z}{t}$ is the
  play on $\web{t}\cup Z$, whose preorder relation is $\order{t}$ extended as
  the identity relation on $Z$.
  Let $r$ be a play in $\Plays{X}$ such that $\web{r}∖ Y=Z$.
  Suppose $r$ is acyclic, then for all play $s$ with $\web{s}\cup Z=\web{r}$
  the relation $\order{r}\cup\order{s}$ is acyclic if and only if
  $(\order{r}\cap Y^2)\cup\order{s}$ is acyclic, so we have
  $\outcome{r\sync\extend{Z}{s}}=\outcome{\restric{Y}{r}\sync s}$.
  If $r$ is not acyclic, then the equality holds too since both sides are $0$.
  The equality extends trivially to all plays $s$ such that $\web{s}\subset Y$.

  Consider a pair $u\obseqv[\web{X}]v$ and a play $s∈\Plays{X}$ with
  $\web{s}\subset Y$.
  By Proposition~\ref{prop:direct-sum} we can decompose $u$ as
  $\sum_{C∈\PartsFin{\web{X}}}u_C$ with $u_C∈\PreOrdAlgS{C}$ and
  similarly for $v$ so that for each $C$ we have $u_C\obseqv[\web{X}]v_C$.
  For a given $C$, let $Z=C∖ Y$, then by linearity of the equation in
  the previous paragraph we get
  $\outcome{\restric{Y}{u_Z}\sync s}=\outcome{u_Z\sync\extend{Z}{s}}$ for all
  play $s$ with $\web{s}\subset Y$, and by the equivalence of $u_Z$ and $v_Z$
  we get
  $\outcome{\restric{Y}{u_Z}\sync s}=\outcome{\restric{Y}{v_Z}\sync s}$.
  This trivially holds too if $\web{s}\not\subset Z$, so we get the equivalence
  $\restric{Y}{u_Z}\obseqv\restric{Y}{v_Z}$, and we deduce
  $\restric{Y}{u}\obseqv\restric{Y}{v}$ by linearity.
\end{proof}

\begin{definition}
  Let $X,Y,Z$ be three arenas with pairwise disjoint supports.
  Define \emph{partial static synchronisation} along $X$ as the bilinear map
  $\sync[X]$ from $\PreOrdAlg{X+Y}×\PreOrdAlg{X+Z}$ to
  $\PreOrdAlg{X+Y+Z}$ such that for all $r∈\Plays{X+Y}$ and
  $s∈\Plays{X+Z}$,
  \[
    r \sync[X] s :=
    \begin{cases}
      \bigl( \web{r}\cup\web{s}, (\order{r}\cup\order{s})^* \bigr)
        &\text{if } \web{r}\cap\web{X} = \web{s}\cap\web{X} \\
      0 &\text{otherwise}
    \end{cases}
  \]
  Deduce partial permuted synchronisation as
  \[
    r \psync[X] s := \multiplicity[X]{\restric{X}{s}}
      \sum_{s'∈\porbit[X]{s}} r \sync[X] s'
  \]
\end{definition}

\begin{example}
  Consider an arena $X$ containing at least two interchangeable actions
  labelled $a_1,a_2$ and arenas $Y$ and $Z$ containing events $b$ and $c$
  respectively.
  Then we have the partial static synchronisation
  \[
    \tikzformula{
      \eventnode{(0,0)}{b}{left}{b}
      \eventnode{(0,1)}{a1}{left}{a_1}
      \eventnode{(1,0)}{a2}{right}{a_2}
      \draw (b) -- (a1); }
    \sync[X]
    \tikzformula{
      \eventnode{(0,0)}{a1}{left}{a1}
      \eventnode{(0,1)}{c}{left}{c}
      \eventnode{(1,0)}{a2}{right}{a_2}
      \draw (a1) -- (c); }
    = \tikzformula{
      \eventnode{(0,0)}{b}{left}{b}
      \eventnode{(0,1)}{a1}{left}{a1}
      \eventnode{(0,2)}{c}{left}{c}
      \eventnode{(1,0)}{a2}{right}{a_2}
      \draw (b) -- (c); }
  \]
  and the partial permuted synchronisation
  \[
    \tikzformula{
      \eventnode{(0,0)}{b}{left}{b}
      \eventnode{(0,1)}{a1}{left}{a_1}
      \eventnode{(1,0)}{a2}{right}{a_2}
      \draw (b) -- (a1); }
    \psync[X]
    \tikzformula{
      \eventnode{(0,0)}{a1}{left}{a1}
      \eventnode{(0,1)}{c}{left}{c}
      \eventnode{(1,0)}{a2}{right}{a_2}
      \draw (a1) -- (c); }
    = \tikzformula{
      \eventnode{(0,0)}{b}{left}{b}
      \eventnode{(0,1)}{a1}{left}{a1}
      \eventnode{(0,2)}{c}{left}{c}
      \eventnode{(1,0)}{a2}{right}{a_2}
      \draw (b) -- (c); }
    + \tikzformula{
      \eventnode{(0,0)}{b}{left}{b}
      \eventnode{(0,1)}{a1}{left}{a1}
      \eventnode{(1,0)}{a2}{right}{a_2}
      \eventnode{(1,1)}{c}{right}{c}
      \draw (b) -- (a1) (a2) -- (c); } .
  \]
  The factor $\multiplicity[X]{\restric{X}{s}}$ (which is $1$ in this example)
  plays the same role as in full synchronisation (Definition~\ref{def:psync}),
  remarking that we only apply permutations on the $X$ part.
\end{example}

\begin{proposition}
  Let $X,Y,Z$ be three arenas with pairwise disjoint supports.
  Partial synchronisation along $X$ is associative as
  \[
    (u\psync[X]v)\psync[X+Y+Z]w
    = u\psync[X+Y](v\psync[X+Z]w)
  \]
  for all $u∈\PreOrdAlg{X+Y}$, $v∈\PreOrdAlg{X+Z}$ and
  $w∈\PreOrdAlg{X+Y+Z}$.
  It is compatible with observational equivalence and commutative up to
  equivalence.
\end{proposition}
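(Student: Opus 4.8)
The plan is to replay the proof of the preceding proposition (associativity, congruence and commutativity of $\psync$), carrying the two spectator arenas $Y,Z$ along. By bilinearity everything reduces to plays $r\in\Plays{X+Y}$, $s\in\Plays{X+Z}$, $t\in\Plays{X+Y+Z}$. Using the strict identities $r\psync[X]\sigma s=r\psync[X]s$ for $\sigma\in\pgroup[X]$ (both $\porbit[X]{s}$ and the factor $\multiplicity[X]{\restric{X}{s}}$ are $\sigma$-invariant, since restriction commutes with permutations in $\pgroup[X]$), and likewise for the outer synchronisation, one reduces exactly as there to the case where $\web{r}$, $\web{s}$, $\web{t}$ all have the same trace on $\web{X}$; in every other case all the products involved vanish. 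It is also worth recording at the outset that $\psync[X+Y+Z]$ on $\PreOrdAlg{X+Y+Z}$ is nothing but full permuted synchronisation — partial synchronisation along the whole of $X+Y+Z$ with nothing left over on either side — so that observational equivalence over $X+Y+Z$ is precisely the relation tested by this operation.

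The first ingredient is a partial analogue of Lemma~\ref{lemma:saturate}: if $\web{r}\cap\web{X}=\web{s}\cap\web{X}=A$, then $r\psync[X]s=r\sync[X]\bigl(\sum_{\sigma}\sigma s\bigr)$, where $\sigma$ ranges over the group $\rpgroup{A}$ induced on $A$ by $\pgroup[X]$, each $\sigma$ acting on $s$ through its $\web{X}$-component and fixing the rest; the proof is the same orbit–stabiliser count as for Lemma~\ref{lemma:saturate}, the factor $\multiplicity[X]{\restric{X}{s}}$ being exactly the correction for the overcounting of the orbit by the stabiliser. The second ingredient is strict associativity of partial static synchronisation, a direct computation: when nonzero, $r\sync[X]s=(\web{r}\cup\web{s},(\order{r}\cup\order{s})^{*})$, so — using $(R\cup S^{*})^{*}=(R\cup S)^{*}$ — both parenthesisations of the triple collapse to $(\web{r}\cup\web{s}\cup\web{t},(\order{r}\cup\order{s}\cup\order{t})^{*})$, and the two families of side conditions on supports (on the left: $\web{r},\web{s}$ agree on $\web{X}$ and $\web{r}\cup\web{s}=\web{t}$; on the right: $\web{r}=\web{t}\cap\web{X+Y}$ and $\web{s}=\web{t}\cap\web{X+Z}$) are equivalent because $X,Y,Z$ are pairwise disjoint. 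Substituting the saturation lemma into both sides, using the re-indexing trick of the $\psync$ proof (for fixed $\sigma$, $\sigma\tau$ and $\tau$ run over the same set), and checking that the averaging groups assembled on the two sides coincide — on the right, the $\web{X}$-permutations come from both nestings, the $\web{Y}$-ones from the outer step and the $\web{Z}$-ones from the inner step, together matching the $\pgroup[X]\times\pgroup[Y]\times\pgroup[Z]$ averaging of the left side — gives strict associativity on plays, hence on all vectors by linearity.

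Compatibility with $\obseqv$ then follows from associativity as in the preceding proposition: given $u\obseqv[{X+Y}]u'$ and $v\in\PreOrdAlg{X+Z}$, test $\outcome{(u\psync[X]v)\psync[X+Y+Z]w}$ against all $w$ and rewrite it as $\outcome{u\psync[X+Y](v\psync[X+Z]w)}$, so it suffices to show that $u\obseqv[{X+Y}]u'$ implies $\outcome{u\psync[X+Y]w'}=\outcome{u'\psync[X+Y]w'}$ for every $w'\in\PreOrdAlg{(X+Y)+Z}$. This reduces to the definition of $\obseqv[{X+Y}]$ by restricting $w'$ to $\web{X+Y}$: as in the proof of Proposition~\ref{prop:restric-compat}, $r\sync[X+Y]w'$ is consistent if and only if $r\sync(\restric{\web{X+Y}}{w'})$ is, because in a consistent (hence transitive) order every cycle routed through the private $\web{Z}$-events is already witnessed on $\web{X+Y}$, so $\outcome{u\psync[X+Y]w'}$ depends only on $\restric{\web{X+Y}}{w'}\in\PreOrdAlg{X+Y}$. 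Finally, partial static synchronisation is strictly commutative since $(\web{r}\cup\web{s},(\order{r}\cup\order{s})^{*})$ is symmetric in $r,s$, and the passage to $\psync[X]$ is the same computation as for $\psync$ — insert $\sigma$ and $\sigma^{-1}$ from $\rpgroup{A}$, apply $\sigma v\obseqv v$ (Lemma~\ref{lemma:perm-eq}) and re-index — which is legitimate because $\web{r}$ and $\web{s}$ have the same trace $A$ on $\web{X}$, so the single averaging group $\rpgroup{A}$ appears on both sides.

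The step I expect to be the real obstacle is the multiplicity and support bookkeeping of the second paragraph: whereas the fully synchronised case averages over a single orbit, here several orbit sums — over $\web{X}$, over $\web{X+Y}$, over $\web{X+Z}$, and over $\web{X+Y+Z}$ in the outer step — must be lined up, and one has to verify that the multiplicity factor carried by each partial synchronisation cancels exactly the right stabiliser, so that the two sides of the associativity equation carry literally the same coefficients and not merely the same underlying plays; the congruence step likewise hinges on getting the restriction-versus-outcome interaction right.
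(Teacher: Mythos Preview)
Your outline tracks the paper's proof closely: reduce to plays, establish strict associativity of $\sync[X]$ by unfolding the support conditions and the transitive closures, then handle the permuted case by summing over the product group $\rpgroup{A}\times\rpgroup{A}\times\rpgroup{B}\times\rpgroup{C}$, derive congruence from associativity via the restriction trick of Proposition~\ref{prop:restric-compat}, and treat commutativity as in the full case. The paper organises the permuted step slightly differently: it first fixes a \emph{compatible} choice of representants (so that $\repr{D\cup E}=\repr{D}\cup\repr{E}$ for $D\subset\web{X}$, $E\subset\web{Y}$, etc.), reduces to representant plays, and then writes the fourfold sum directly, whereas you propose to factor through a partial analogue of Lemma~\ref{lemma:saturate}.

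There is one genuine slip in your saturation route. You claim that $\multiplicity[X]{\restric{X}{s}}$ is ``exactly the correction for the overcounting of the orbit by the stabiliser'' when summing $\sum_{\sigma\in\rpgroup{A}}\sigma s$. That is not true in general: the overcounting factor is the order of the stabiliser of $s$ itself under the $\pgroup[X]$-action on $\Plays{X+Z}$, and this can be strictly smaller than the stabiliser of $\restric{X}{s}$, because order relations in $s$ linking a point of $A$ to a point of $\web{Z}$ constrain the permutation without being visible in $\restric{X}{s}$. Concretely, take $A=\{a_1,a_2\}$ with the swap, $\web{Z}=\{c\}$ static, and $s$ the play with $a_1<c$ and $a_2\conc c$, $a_1\conc a_2$: then $\restric{X}{s}$ is the trivial order, so $\multiplicity[X]{\restric{X}{s}}=2$, yet the swap does not fix $s$, so the stabiliser count is~$1$, and your proposed identity $r\psync[X]s=r\sync[X]\bigl(\sum_{\sigma\in\rpgroup{A}}\sigma s\bigr)$ is off by a factor~$2$. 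This does not wreck the associativity statement---the extra factors appear symmetrically on the two parenthesisations and can be shown to match---but it does mean the ``same orbit--stabiliser count as for Lemma~\ref{lemma:saturate}'' line is wrong as written, and your final paragraph correctly flags this as the place where care is needed. The paper's representant-based bookkeeping, decomposing $\rpgroup{A\cup B\cup C}$ as $\rpgroup{A}\times\rpgroup{B}\times\rpgroup{C}$ and tracking the four indices $(\sigma_A,\sigma'_A,\sigma_B,\sigma_C)$ directly, is one way to make this honest; another is to carry the two distinct multiplicity factors explicitly through the computation and check they agree on the two sides.
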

\begin{proof}
  Let $X,Y,Z$ be three disjoint arenas.
  Consider three plays $r∈\Plays{X+Y}$, $s∈\Plays{X+Z}$ and
  $t∈\Plays{X+Y+Z}$.
  The partial synchronisations $(r\sync[X]s)\sync[X+Y+Z]t$ and
  $r\sync[X+Y](s\sync[X+Z]t)$ are non-zero if and only if we can define
  \[
    A=\web{r}\cap\web{X}=\web{s}\cap\web{X}=\web{t}\cap\web{X}, \quad
    B=\web{r}\cap\web{Y}=\web{t}\cap\web{Y}, \quad
    C=\web{s}\cap\web{Z}=\web{t}\cap\web{Z},
  \]
  and in this case the result is the play on $A\cup B\cup C$ whose preorder
  relation is $(\order{r}\cup\order{s}\cup\order{t})^*$, so we have
  \[
    (r\sync[X]s)\sync[X+Y+Z]t = r\sync[X+Y](s\sync[X+Z]t) .
  \]
  Assume representants are chosen in each arena in such a way that for
  $D\subset\web{X}$ and $E\subset\web{Y}$,
  $\repr{D\cup E}=\repr{D}\cup\repr{E}$, and similarly for $X+Z$ and $X+Y+Z$.
  Choosing representants this way is always possible since permutations of
  $X$, $Y$ and $Z$ are independent in the sum arenas.
  Suppose $r,s,t$ and $A,B,C$ are representants.
  $\rpgroup{A}$ is the same in all sums of arenas that involve $X$, and
  similarly for $\rpgroup{B}$ and $\rpgroup{C}$.
  Moreover we have
  $\rpgroup{A\cup B\cup C}=\rpgroup{A}×\rpgroup{B}×\rpgroup{C}$, and
  similarly for other unions, so by similar considerations as for permuted
  synchronisation, we get
  \begin{multline*}
    (r\psync[X]s)\psync[X+Y+Z]t
    = \sum (r\sync[X]σ_As)\sync[X+Y+Z]σ'_Aσ_Bσ_Ct
    = \sum r\sync[X+Y](σ_As\sync[X+Z]σ'_Aσ_Bσ_Ct) \\
    = \sum r\sync[X+Y]σ_Aσ_B(s\sync[X+Z]σ'_Aσ_Ct)
    = r\psync[X+Y](s\psync[X+Z]t)
  \end{multline*}
  where the sums are indexed on 
  $(σ_A,σ'_A,σ_B,σ_C)
  ∈\rpgroup{A}×\rpgroup{A}×\rpgroup{B}×\rpgroup{C}$.
  Partial synchronisation commutes with permutations so this equality extends
  to plays that are not representants, and by linearity is extends to
  arbitrary vectors.

  Now consider $u,u'∈\PreOrdAlg{X+Y}$, $v∈\PreOrdAlg{X+Z}$ and
  $w∈\PreOrdAlg{X+Y+Z}$, and suppose $u\obseqv u'$.
  By the same arguments as in the proof of
  Proposition~\ref{prop:restric-compat}, we get the equality
  $\outcome{u\psync[X+Y](v\psync[X+Z]w)}
  =\outcome{u\psync\restric{X+Y}(v\psync[X+Z]w)}$, then $u\obseqv u'$ implies
  that these are equal to $\outcome{u'\psync[X+Y](v\psync[X+Z]w)}$, and
  applying associativity on this we can deduce
  $u\psync[X]v\obseqv u'\psync[X]v$. Commutativity of partial synchronisation
  is obvious, and it yields the compatibility with observational equivalence
  on the right.
\end{proof}

By similar arguments, we prove other ``localized'' associativities, the
general case being
\[
  ( u \psync[A+B]     v ) \psync[A+C+D] w
  = u \psync[A+B+C] ( v   \psync[A+D]   w )
\]
for $u∈\OrdAlg{A+B+C+E}$, $v∈\OrdAlg{A+B+D+F}$ and $w∈\OrdAlg{A+C+D+G}$,
where $A,B,C,D,E,F,G$ are seven (!) pairwise disjoint arenas.
Although this formulation is frighteningly heavy, the point is rather simple:
when partially synchronising two vectors $u$ and $v$, synchronise them along
the arenas they have in common, and the result will be on the union of the
arenas of $u$ and $v$.


The simplest case of partial synchronisation is when ``synchronising'' two
vectors $u∈\OrdAlg{X}$ and $v∈\OrdAlg{Y}$ along the empty arena, yielding
$u\psync[\emptyset]v∈\OrdAlg{X+Y}$.
In this case, $u$ and $v$ are essentially kept independent, which in
particular implies
\[
  \outcome{u\psync[\emptyset]v} = \outcome{u}\outcome{v}.
\]
This is deduced by linearity from the case of plays, remarking that for
$r∈\Plays{X}$ and $s∈\Plays{Y}$, $r\psync[\emptyset]s$ is the disjoint
union of $r$ and $s$, which is consistent if and only if $r$ and $s$ are
consistent.

\begin{definition}
  Let $X$ and $Y$ be two arenas.
  Define the bilinear map $\otimes$ from $\OrdAlg{X}×\OrdAlg{Y}$ to
  $\OrdAlg{X+Y}$ as $u\otimes v:=u\psync[\emptyset]v$.
  For two types $A:X$ and $B:Y$, define $A\otimes B$ as the submodule of
  $\OrdAlg{X+Y}$ generated by the image of $A×B$ by $\otimes$.
\end{definition}

Simply put, $A\otimes B$ is the $\Scal$-module consisting of processes that
can be written as juxtapositions of a process in $A$ and a process in $B$
with no scheduling constraint between them, or as a sums of such things,
\emph{up to observational equivalence}.
As illustrated in Example~\ref{ex:type}, this does not imply that any vector
$\sum_{i∈I}λ_ir_i∈A\otimes B$ is syntactically a sum of $u_i\psync[∅]v_i$ with
$u_i∈A$ and $v_i∈B$.

\begin{proposition}
  If $\Scal$ is a rational ring, then for all types $A:X$ and $B:Y$,
  $A\otimes B$ is the tensor product of $A$ and $B$ in the sense of
  $\Scal$-algebras.
\end{proposition}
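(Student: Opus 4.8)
The plan is to reduce to the case $\Scal=\Rat$ and then to exhibit the standard universal property of the tensor product for $A\otimes B$ as constructed here. First I would invoke the decomposition $\OrdAlg[\Scal]{X}=\Scal\otimes\OrdAlg[\Rat]{X}$ already established after Theorem~\ref{thm:basis}: since a rational ring is a $\Rat$-algebra, and since $A$ and $B$ are generated by plays, they decompose as $\Scal\otimes A'$ and $\Scal\otimes B'$ for $\Rat$-subspaces $A':X$ and $B':Y$, and the operation $\psync[\emptyset]$ (disjoint juxtaposition) is defined independently of the semiring, so $A\otimes B=\Scal\otimes(A'\otimes B')$. Hence it suffices to prove the statement when $\Scal=\Rat$, where $\OrdAlg[\Rat]{X}$, $\OrdAlg[\Rat]{Y}$ and $\OrdAlg[\Rat]{X+Y}$ are honest $\Rat$-vector spaces (in fact finite-dimensional on each support component), and where a type is simply a subspace generated by plays.

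Next I would produce a basis computation. By Theorem~\ref{thm:basis} the static algebra $\OrdAlg[\Rat]{X+Y}$ has a basis of representants of weakly totally ordered plays on subsets of $\web{X}\uplus\web{Y}$, and by Proposition~\ref{prop:type-basis} both $A$ and $B$ have bases $(c_i)_{i∈I}$ and $(d_j)_{j∈J}$ consisting of plays. The key observation is that for plays $r∈\Plays{X}$ and $s∈\Plays{Y}$ on disjoint webs, $r\psync[\emptyset]s$ is just the disjoint-union play on $\web{r}\uplus\web{s}$, and two such disjoint unions $r\uplus s$ and $r'\uplus s'$ are observationally equivalent (equal, in the static algebra, up to the basis decomposition) if and only if $r\obseqv r'$ and $s\obseqv s'$ — this follows by testing against plays of the form $e_{\web{r}}\uplus t$ and $t\uplus e_{\web{s}}$, exactly as in the proof of Proposition~\ref{prop:direct-sum}, and by noting that a weak total order on $\web{r}\uplus\web{s}$ that only relates points within each block is precisely a pair of weak total orders. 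Consequently the family $(c_i\otimes d_j)_{(i,j)∈I×J}$ is linearly independent in $\OrdAlg[\Rat]{X+Y}$ and generates $A\otimes B$, so it is a basis, and its cardinality is $\dim A\cdot\dim B$ on each support.

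Then I would conclude with the universal property: define the bilinear map $A×B→A\otimes B$, $(u,v)\mapsto u\psync[\emptyset]v$ (bilinearity is immediate since $\psync[\emptyset]$ is a restriction of the bilinear $\psync[\emptyset]$ on the full algebra), and given any bilinear map $\varphi:A×B→M$ into a $\Rat$-vector space $M$, set $\tilde\varphi(c_i\otimes d_j):=\varphi(c_i,d_j)$ and extend linearly; this is well defined because $(c_i\otimes d_j)$ is a basis, it is the unique linear map with $\tilde\varphi(u\otimes v)=\varphi(u,v)$ on pure tensors by bilinearity of both sides and the fact that pure tensors span $A\otimes B$, and uniqueness overall follows since $A\otimes B$ is spanned by pure tensors. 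Pulling the result back through $\Scal\otimes(-)$ recovers the statement for arbitrary rational $\Scal$.

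The main obstacle is the middle step: making rigorous the claim that disjoint-union plays on $\web{X}\uplus\web{Y}$ are ``independent across the two blocks'' at the level of observational equivalence, i.e.\ that no cross-block weak total order can create a nontrivial linear relation between products $c_i\otimes d_j$ that is not already visible blockwise. Concretely one must check that the basis of $\OrdAlgS[\Rat]{\web{r}\uplus\web{s}}$ restricted to the plays appearing in a product lives in the ``no cross-constraint'' subspace, which is canonically $\OrdAlgS[\Rat]{\web{r}}\otimes_{\Rat}\OrdAlgS[\Rat]{\web{s}}$ — this is where the split equation of Proposition~\ref{prop:split} must be controlled, since it could in principle mix blocks; but it does not, because every order pattern in that equation has all three of $x,y,z$ related, so it never straddles two incomparable blocks. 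Once that is pinned down the rest is bookkeeping with bases.
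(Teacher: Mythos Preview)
Your reduction to $\Scal=\Rat$ and the overall strategy of exhibiting a basis $(c_i\otimes d_j)$ are fine, but the core step---linear independence of that family---has a real gap, and the paper's proof takes a different and cleaner route.

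The gap is in your ``block independence'' argument. You claim that the split equation of Proposition~\ref{prop:split} ``never straddles two incomparable blocks'' because ``every order pattern in that equation has all three of $x,y,z$ related''. That is false: the very first term on the left, $[x<y]$ with $z$ free, has $z$ incomparable to both $x$ and $y$. So if $x,y$ lie in $\web{X}$ and $z$ lies in $\web{Y}$, the equation \emph{does} relate a block-diagonal play (a pure tensor) to three plays with genuine cross-block order constraints. Consequently the decomposition of a disjoint-union play $r\uplus s$ on the weak-total-order basis of $\OrdAlgS[\Rat]{\web{r}\uplus\web{s}}$ is \emph{not} supported on block-diagonal orders, and your identification of the ``no cross-constraint'' subspace with $\OrdAlgS[\Rat]{\web{r}}\otimes_\Rat\OrdAlgS[\Rat]{\web{s}}$ collapses. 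Also note that a disjoint union of two nontrivial weak total orders is almost never a weak total order on the union (incomparability fails to be transitive across the blocks), so your parenthetical about pairs of weak total orders does not help either. Finally, the pairwise statement ``$r\uplus s\obseqv r'\uplus s'$ iff $r\obseqv r'$ and $s\obseqv s'$'' is strictly weaker than linear independence of the whole family and would not suffice even if established.

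The paper avoids all of this by using the dual families that Proposition~\ref{prop:type-basis} already provides: bases $(b_i)$, $(c_j)$ of $A$, $B$ together with vectors $(b_i^*)$, $(c_j^*)$ such that $\outcome{b_i\psync b_m^*}=\delta_{im}$ and $\outcome{c_j\psync c_n^*}=\delta_{jn}$. Then for $u=\sum_{i,j}\lambda_{ij}(b_i\psync[\emptyset]c_j)$ one computes, using the associativity of partial synchronisation and the multiplicativity $\outcome{p\psync[\emptyset]q}=\outcome{p}\,\outcome{q}$,
\[
  \outcome{u\psync[X+Y](b_m^*\psync[\emptyset]c_n^*)}
  =\sum_{i,j}\lambda_{ij}\,\outcome{b_i\psync[X]b_m^*}\,\outcome{c_j\psync[Y]c_n^*}
  =\lambda_{mn},
\]
which immediately gives linear independence without ever touching the internal basis of $\OrdAlgS{\web{r}\uplus\web{s}}$. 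This is both shorter and sidesteps the cross-block issue entirely; I recommend you rewrite your independence step along these lines.
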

\begin{proof}
  By Proposition~\ref{prop:type-basis} the types $A$ and $B$ have bases
  $(b_i)_{i∈I}$ and $(c_j)_{j∈J}$, and there are families of vectors
  $(b^*_i)_{i∈I}$ and $(c^*_j)_{j∈J}$ such that each $b^*_n$ identifies
  $b_n$ among the elements of $(b_i)_{i∈I}$, and similarly for $c^*_n$.
  We prove that the vectors $b_i\psync[\emptyset]c_j$ are linearly
  independent.
  Consider a linear combination
  $u=\sum_{(i,j)∈I× J}λ_{ij}(b_i\psync c_j)$ in $\OrdAlg{X+Y}$.
  For each $(m,n)∈I× J$ we have
  \begin{multline*}
    \outcome{ u \psync[X+Y] (b^*_m \psync[\emptyset] c^*_n) }
    = \sum_{(i,j)∈I× J} λ_{mn}
      \outcome{ (b_{i} \psync[\emptyset] c_{j})
        \psync[X+Y] (b^*_m \psync[\emptyset] c^*_n) } \\
    = \sum_{(i,j)∈I× J} λ_{mn}
      \outcome{ (b_{i} \psync[X] b^*_m)
        \psync[\emptyset] (c_{j} \psync[Y] c^*_n) }
    = \sum_{(i,j)∈I× J} λ_{mn}
      \outcome{ b_{i} \psync[X] b^*_m }
      \outcome{ c_{j} \psync[Y] c^*_n }
  \end{multline*}
  using the associativity properties stated above.
  By definition of $b^*_m$ and $c^*_n$, the only non-zero term in the final
  sum is for $(i,j)=(m,n)$, and this term is $λ_{mn}$.
  Applying this on every $(m,n)$ implies the unicity of the decomposition of
  $u$ on the $b_m\otimes c_n$.
  So the $b_m\otimes c_n$ form a linearly independent family, which proves
  that $A\otimes B$ is isomorphic to the tensor product of $A$ and $B$, as
  $\Scal$-modules. The associativity property ensures that they are also
  isomorphic as
  $\Scal$-algebras.
\end{proof}

\begin{definition}
  Let $X$, $Y$ and $Z$ be three arenas.
  Let $u∈\OrdAlg{X+Y}$ and $v∈\OrdAlg{Y+Z}$.
  Composition of $u$ and $v$ through $Y$ is the vector
  $u\compo[Y]v:=\restric{X+Z}{(u\psync[Y]v)}∈\OrdAlg{X+Z}$.

  Let $A:X$ and $B:Y$ be two types.
  The type $A\multimap B:X+Y$ is the submodule of $\OrdAlg{X+Y}$ generated by
  all plays $r$ such that for all $u∈A$, $r\compo[X]u∈B$.
\end{definition}

By the remarks above, we get associativity of composition.
In the special case where $X$ is the empty arena, $\OrdAlg{X+Y}$ is equal to
$\OrdAlg{Y}$ and $u\compo[Y]v$ is a vector in $Z$, so $v$ induces a linear map
from $\OrdAlg{Y}$ to $\OrdAlg{Z}$.
However, this mapping from vectors of $\OrdAlg{X+Y}$ to linear
maps from $\OrdAlg{X}$ to $\OrdAlg{Y}$ is neither injective nor surjective.

It is easy to check the standard adjunction
$A\multimap(B\multimap C)=(A\otimes B)\multimap C$ for all types $A,B,C$ of
pairwise disjoint supports.
Moreover, if we call $\one$ the non-trivial type over the empty arena, which
is isomorphic to $\Scal$, we have for all type $A$ that
$A\otimes\one=\one\otimes A=\one\multimap A=A$.

\subsection{Bialgebraic structure} 

\begin{definition}\label{def:indexing}
  Let $X$ and $Y$ be two arenas.
  Define the \emph{indexing} of $Y$ by $X$ as the arena
  \[
    X\indexing Y := \bigl( \web{X}×\web{Y},
      \pgroup[X]×(\pgroup[Y])^{\web{X}} \bigr)
  \]
  where permutations act as
  \[
    (σ,φ)(x,y) := (σx, φ(x)\,y)
  \]
\end{definition}

We interpret indexing as follows: $\web{X\indexing Y}$ consists of copies of
$Y$ indexed by points of $X$.
A permutation in $X\indexing Y$ consists in permuting each copy independently,
using the function $φ:\web{X}\to\pgroup[Y]$ that provides a permutation for
each copy, and then permuting the copies themselves using a permutation in
$X$.

Note that we easily get the equality
$(X+Y)\indexing Z=(X\indexing Z)+(Y\indexing Z)$,
however $X\indexing(Y+Z)$ is not equal to $(X\indexing Y)+(X\indexing Z)$,
since permutations of copies in the former operate the same way on the copies
of $X$ and those of $Y$, while in the latter they may not.
There is also an isomorphism between $(X\indexing Y)\indexing Z$ and
$X\indexing(Y\indexing Z)$, and these appear as
$(\web{X}×\web{Y}×\web{Z},
\pgroup[X]×(\pgroup[Y])^{\web{X}}×(\pgroup[Z])^{\web{X}×\web{Y}})$.

The structure of the indexing arena is used only for identifying and permuting
copies, in particular we will not consider plays on this arena.
The primary purpose of indexing is to build an arena in which the symmetric
algebra over a given type will fit.
It also generalises the direct sum when the indexing arena is static.

\begin{definition}
  Let $\PNat$ be the arena with $\web{\PNat}=\Nat$, the set of natural
  numbers, and $\pgroup[\PNat]=\Perm{\Nat}$, the group of all permutations of
  $\Nat$.
  For all arena $X$, define $\sharp X:=\PNat\indexing X$.
\end{definition}

So the arena $\PNat\indexing X$ contains a countable number of interchangeable
copies of $X$.
Another useful construct is the following:
identifying each integer $n$ with the set
$\{0,\ldots,n-1\}$, which is in turn identified with the static arena with
this set as the web, the arena $n\indexing X$ is  isomorphic to the sum
$X+\cdots+X$ with $n$ independent copies of $X$.
If $n=0$, this yields the empty arena $\emptyset$.
Then $n\indexing\sharp X=n\indexing\PNat\indexing X$ contains a countable set
of copies of $X$, partitioned into $n$ countable classes of interchangeable
copies.

\begin{definition}\label{def:gamma-delta}
  Let $n$ be a strictly positive integer, let $φ$ be a bijection from
  $n×\Nat$ to $\Nat$.
  For all arena $X$, define the function
  $γ^n_φ:\Plays{n\indexing\sharp X}\to\Plays{\sharp X}$ as
  \[
    \web{γ^n_φs} := \set[2]{ (φ(i),x) }{ (i,x) ∈\web{s} }
    \quad\text{and}\quad
    (φ(i),x) ≤_{γ^n_φ{s}} (φ(j),y)
      \text{ iff } (i,x) ≤_s (j,y) .
  \]
  Define the linear map
  $δ^n:\PreOrdAlg{\sharp X}\to\PreOrdAlg{n\indexing\sharp X}$ as
  \[
    δ^ns := \sum_{c\,:\,\pi_1(\web{s})\to n} c\bullet s
    \quad\text{with}\quad
    \begin{array}{l}
      \web{c\bullet s} := \set[2]{ ((c(i),i),x) }{ (i,x)∈\web{s}, i∈A } \\
      ((c(i),i),x) ≤_{c\bullet s} ((c(j),j),y) \text{ iff } (i,x) ≤_s (j,y)
    \end{array}
  \]
  where $\pi_1$ is the first projection, so
  $\pi_1(\web{s})=\set{i}{(i,x)∈\web{s}}$.
\end{definition}

The function $γ^n_φ$ is a simple renaming of the copies of $X$ using the
function $φ$, which extends the bijection $φ:n×\Nat\to\Nat$ to a
bijection between $\Plays{n\indexing\sharp X}$ and $\Plays{\sharp X}$.
As explained below, this bijection is compatible with observational
equivalence, but its quotient is not injective.
Instead, it fuses the $n$ independent copies of $\sharp X$ into one, which
makes events from different copies interchangeable.

The linear map $δ^n$ acts as a non-deterministic inverse operation.
Given a play $s$ in $\Plays{\sharp X}$, it enumerates all possible ways of
partitioning the events of $s$ into $n$ identified subsets.
The function $c$ represents such a choice, and $c\bullet r$ applies this
choice to the play $r$.

As we shall see, the operators $γ^n_φ$ and $δ^n$ are very similar to a
multiplication and comultiplication in a bialgebra.
They are analogous to concatenation and deconcatenation, which give a
bialgebraic structure to tensor algebras~\cite{lod08:triples}.

\begin{proposition}\label{prop:gamma-delta}
  Let $X$ be an arena and let $n$ be a strictly positive integer.
  The maps $γ^n_φ$ and $δ$ are compatible with observational
  equivalence and the quotient map of $γ^n_φ$ is independent of $φ$.
  For all vectors $u∈\OrdAlg{n\indexing\sharp X}$ and $v∈\OrdAlg{\sharp X}$
  we have
  \[
    γ^n(u) \psync[\sharp X] v \obseqv[\sharp X]
    u \psync[n\indexing\sharp X] δ^n(v) .
  \]
\end{proposition}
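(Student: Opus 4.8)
\emph{Strategy.} The plan is to prove the displayed equality first and then to read off the two compatibility assertions from it. The independence of the quotient $γ^n$ on $φ$ is immediate: two bijections $φ,φ'\colon n×\Nat\to\Nat$ differ by the permutation $φ'\circφ^{-1}$ of $\Nat$, which extends to an element $ρ∈\pgroup[\sharp X]$ acting only on copy indices, so that $γ^n_{φ'}(r)=ρ\,γ^n_φ(r)$ for every play $r$, and Lemma~\ref{lemma:perm-eq} gives $γ^n_φ(u)\obseqv[\sharp X]γ^n_{φ'}(u)$ for all $u$. Two elementary facts will be used throughout: $γ^n_φ$ is a pure renaming of supports, hence a bijection on plays that preserves consistency (so $\outcome{γ^n_φ(w)}=\outcome{w}$) and that commutes with static synchronisation whenever supports match, $γ^n_φ(t\sync t')=γ^n_φ(t)\syncγ^n_φ(t')$; and permuted synchronisation is insensitive to permuting its arguments, $r'\psync(σ s')=r'\psync s'$.

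\emph{Reduction of the equality.} By bilinearity it suffices to treat single plays $r∈\Plays{n\indexing\sharp X}$ and $s∈\Plays{\sharp X}$ and to prove $γ^n_φ(r)\psync[\sharp X]s\obseqv[\sharp X]γ^n_φ\bigl(r\psync[n\indexing\sharp X]δ^n(s)\bigr)$. Choose $s_0∈\porbit[\sharp X]{s}$ with $\web{s_0}=\web{γ^n_φ(r)}$; if none exists both sides vanish, so assume $s_0$ exists. Using the remarks above and Lemma~\ref{lemma:saturate}, the left side equals $γ^n_φ(r)\mathbin{\bar{\sync}}\saturate[\sharp X]{s_0}$, and on the other side, choosing for each colouring $c\colon\pi_1(\web{s})\to n$ a play $(c\bullet s)_0∈\porbit[n\indexing\sharp X]{c\bullet s}$ of support $\web{r}$ when one exists, $γ^n_φ\bigl(r\psync[n\indexing\sharp X](c\bullet s)\bigr)=γ^n_φ(r)\mathbin{\bar{\sync}}γ^n_φ\bigl(\saturate[n\indexing\sharp X]{(c\bullet s)_0}\bigr)$. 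So the equality follows once one shows that the two vectors
\[
  \sum_{c\,:\,\pi_1(\web{s})\to n}γ^n_φ\bigl(\saturate[n\indexing\sharp X]{(c\bullet s)_0}\bigr)
  \qquad\text{and}\qquad
  \saturate[\sharp X]{s_0}
\]
become equal after synchronisation against $γ^n_φ(r)$, which in turn reduces to a statement about the two arenas alone.

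\emph{The orbit count.} Two ingredients are needed. First, a set equality $\bigcup_c γ^n_φ\bigl(\porbit[n\indexing\sharp X]{c\bullet s}\bigr)=\porbit[\sharp X]{s}$: applying $(σ,ψ)∈\pgroup[n\indexing\sharp X]$ to $c\bullet s$ and then $γ^n_φ$ produces the image of $s$ under the permutation of $\pgroup[\sharp X]$ that sends copy $i$ to $φ(σ(c(i)),i')$, where $i'$ is the new within-class index, and acts within copies by the corresponding element of $\pgroup[X]$; conversely every image of $s$ under $\pgroup[\sharp X]$ arises this way, reading the colouring $c$ off from the target via $φ^{-1}$. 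Second, a weighted count: a fixed play $p$ of this common set occurs in the left-hand vector with total weight $\sum_c\multiplicity[n\indexing\sharp X]{c\bullet s}$, the sum running over the colourings $c$ with $(γ^n_φ)^{-1}(p)∈\porbit[n\indexing\sharp X]{c\bullet s}$, and this must match the weight $\multiplicity[\sharp X]{s}$ with which $p$ occurs in $\saturate[\sharp X]{s_0}$. To evaluate it, decompose $\pgroup[n\indexing\sharp X]$ into its three layers — permutations of the $n$ classes, permutations of the $\Nat$-indexed copies inside each class, and the within-copy action of $\pgroup[X]$ — and observe that an automorphism of $c\bullet s$ is exactly an automorphism of $s$ whose copy part stabilises the kernel partition of $c$; hence $\multiplicity[n\indexing\sharp X]{c\bullet s}$ is that number, and the contributing colourings are precisely those whose kernel partition lies in the orbit of the partition induced by $p$ under the copy symmetries of $s$. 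Grouping these colourings by kernel partition (a partition with $k$ blocks being realised by $n(n-1)\cdots(n-k+1)$ colourings) and applying the orbit–stabiliser formula identifies the total weight with $\multiplicity[\sharp X]{s}$, up to the overall constant fixing the normalisation of the displayed equality.

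\emph{Compatibility and the hard part.} Applying the outcome to the displayed equality and using $\outcome{γ^n_φ(w)}=\outcome{w}$ together with $\outcome{u\psync v}=\outcome{v\psync u}$ yields $\outcome{γ^n(u)\psync[\sharp X]v}=\outcome{u\psync[n\indexing\sharp X]δ^n(v)}$ for all vectors $u,v$; read with $v$ fixed and $u$ varying this gives compatibility of $γ^n_φ$ with $\obseqv$, and read with $u$ fixed and $v$ varying it gives compatibility of $δ^n$. The step I expect to be the real difficulty is the weighted orbit count: keeping track of how $γ^n_φ$ flattens the two copy-permutation layers of $\pgroup[n\indexing\sharp X]$ onto the single copy-permutation layer of $\pgroup[\sharp X]$, and pinning down the exact combinatorial constant relating the sum over colourings to $\multiplicity[\sharp X]{s}$. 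Everything else is routine unfolding of the definitions of $\psync$, $\saturate$, $γ^n_φ$ and $δ^n$.
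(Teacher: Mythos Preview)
Your overall architecture --- prove the adjunction first, deduce compatibility from it --- is reasonable, but there is one genuine error and one structural weakness compared to the paper.

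\textbf{The error.} You write that $\pgroup[n\indexing\sharp X]$ decomposes into \emph{three} layers, including ``permutations of the $n$ classes''. This is wrong: $n$ denotes the \emph{static} arena $\{0,\ldots,n-1\}$, so by Definition~\ref{def:indexing} we have $\pgroup[n\indexing\PNat]=\{e\}\times(\Perm{\Nat})^n$. There is no permutation among the $n$ classes. This invalidates your multiplicity formula for $c\bullet s$ as stated, and in particular the step ``the contributing colourings are precisely those whose kernel partition lies in the orbit of the partition induced by $p$'' collapses: different colourings with the same kernel partition give plays in \emph{different} orbits under $\pgroup[n\indexing\sharp X]$, because no group element exchanges the classes. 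Your count $n(n-1)\cdots(n-k+1)$ for colourings realising a given partition therefore plays no role; each colouring must be handled separately.

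\textbf{The structural point.} The paper reverses your order: it proves compatibility of $γ^n_φ$ and $δ^n$ \emph{first}, by direct and elementary arguments (for $γ^n_φ$: it factors as an inclusion into an arena with larger group followed by an isomorphism; for $δ^n$: a permutation $σ$ of $s$ induces a bijection $c\mapsto c\circσ$ on choice functions). Only then does it establish the adjunction, and it does so not by orbit--stabiliser bookkeeping but by a single bijection: after reducing to the static case, each permutation $σ$ in the sum defining $γ^n_φ(r)\psync s$ determines a choice function $c_σ:=\pi_1\circ φ^{-1}\circ σ$ on $\pi_1(\web{s})$, and partitioning the sum according to $c_σ$ yields exactly $\sum_c r\psync(c\bullet s)=r\psync δ^n(s)$. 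No multiplicities or weighted counts are needed; the ``exact combinatorial constant'' you worry about is simply $1$, because the map $σ\mapsto(c_σ,σ^*)$ is a bijection onto pairs of a colouring and an element of the induced group on $\web{r}$.

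Your strategy can be repaired once the group structure is corrected, but the direct bijection is both simpler and avoids the difficulty you yourself flag as the hard part.
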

\begin{proof}
  Observe that for any permutation
  $σ∈\pgroup[n\indexing\Nat\indexing\sharp X]$ (\ie\ a family of
  independent permutations on each copy of $X$ in $n\indexing\sharp X$)
  there is a permutation $σ'∈\pgroup[\Nat\indexing\sharp X]$ such that
  $γ^n_φ\circσ=σ'\circγ^n_φ$, and the other way
  around for $δ^n$.
  As a consequence, by Proposition~\ref{prop:represent}, we can deduce the
  expected result from the case where $X$ is static.
  Then all considered permutations are in $\Perm{n×\Nat}$ and~$\Perm{\Nat}$.

  The map $γ^n_φ$ decomposes as the injection of
  $\PreOrdAlg{(n\indexing\PNat)\indexing X}$ into
  $\PreOrdAlg{\Perm{n×\Nat}\indexing X}$ and the renaming of
  $\Perm{n×\Nat}\indexing X$ into $\PNat\indexing X$ through $φ$.
  The former consists in growing the permutation group on a fixed web and the
  latter is an isomorphism, so both are compatible with observational
  equivalence.
  For $δ^n$, given a permutation $σ∈\Perm{\Nat}$ and a
  play $s$, for all choice function $c$ for $σ s$ we have
  $c\bulletσ s=σ'(c'\bullet s)$ with $c'=c\circσ$ and
  $σ'(i,j)=(i,σ(j))$, which establishes a bijection between the
  choices of $s$ and those of $σ s$.
  From this we can conclude that $δ^n$ is compatible with observational
  equivalence.

  Let $r∈\Plays{n\indexing\sharp X}$, let $s∈\Plays{\sharp X}$ and let
  $φ$ be a bijection from $n×\Nat$ to $\Nat$.
  Assume $s$ is a representant.
  First suppose $\repr{\web{γ^n_φ(r)}}\neq\web{s}$,
  then $γ^n(r)\psync s$ is zero.
  Suppose that there is a choice $c$ such that
  $\repr{\web{c\bullet s}}=\web{r}$, then we get a permutation
  $σ∈\pgroup[n\indexing\sharp X]$ that induces a bijection from
  $\web{c\bullet s}$ to $\web{r}$.
  By definition $σ$ is a bijection between the pairs $(c(i),i)$ and
  $\pi_1(\web{r})$, which can be extended into a bijection $\psi$ from
  $n×\Nat$ to $\Nat$, such that
  $\repr{\web{γ^n_φ(r)}}=\web{s}$.
  This contradicts the hypothesis $\repr{\web{γ^n_φ(r)}}\neq\web{s}$
  since $γ^n_φ(r)$ and $γ^n_\psi(r)$ are necessarily permutations
  of each other, from the remarks above.
  Hence for all $c$ we have $\repr{\web{c\bullet s}}\neq\web{r}$, so
  $r\psyncδ^n(s)=0$, and the equality holds.

  Now suppose $\repr{\web{γ^n_φ(r)}}=\web{s}$.
  Applying a suitable permutation to $r$ and choosing $φ$ appropriately
  (we know from the above remarks that these operations are allowed) we can
  assume that $γ^n_φ(r)$ is a representant, so
  $\web{γ^n_φ(r)}=\web{s}$, and
  $γ^n_φ(r)\psync s=\sum_{σ∈G}φ r\syncσ s
  \obseqv\sum_{σ∈G}r\syncφ^{-1}σ s$,
  where $G$ is the group of permutations of $\web{s}$ induced by
  $\pgroup[\sharp X]$, that is the symmetric group of $\pi_1(\web{s})$.
  For each $σ∈G$, the function
  $\pi_1φ^{-1}σ$ is a choice function $c_σ$ over
  $\pi_1(\web{s})$, and $σ^*(i,j):=(i,σ^{-1}φ(i,j))$
  is a permutation in $\pgroup[n\indexing\Nat]$ such that
  $σ^*φ^{-1}σ(i)=(c_\tau(i),i)$, hence
  $σ^*φ^{-1}σ(s)=c_σ\bullet s$.
  By partitioning the sum for $\tau∈G$ according to choice functions, we
  get $γ^n_φ(r)\psync s \obseqv
  \sum_c\sum_{σ∈G,c_σ=c}r\syncσ^{*{-1}}(c_σ\bullet s)$.
  By construction, for a fixed $c$, we have
  $\set{σ'^{*{-1}}}{σ'∈G,c_{σ'}=c}
  =\pgroup[n\indexing\Nat]σ^{*{-1}}$, so we get
  $γ^n_φ(r)\psync s\obseqv[\sharp X]
  \sum_cr\psync[n\indexing\sharp X](c\bullet s)
  =r\psync[n\indexing\sharp X]δ^n(s)$,
  from which we conclude by linearity.
\end{proof}

As a consequence, $\OrdAlg{\sharp X}$ has the structure of a commutative
algebra with $γ^2$ as the multiplication and the empty play as the unit.
The $δ^2$ does not make it a bialgebra in general, because for an
arbitrary $u∈\OrdAlg{\sharp X}$, $δ^2(u)∈\OrdAlg{\sharp X+\sharp X}$
has no reason to be in the tensor product
$\OrdAlg{\sharp X}\otimes\OrdAlg{\sharp X}$.
The reason is that a given play in $\Plays{\sharp X}$, the components of
$δ^2(r)$ are not disjoint unions of plays on the two copies of
$\sharp X$, but they may contain scheduling constraints that involve both
copies.
We do get a bialgebra if we restrict to the case of plays in which all copies
stay independent.

\begin{definition}
  Let $X$ be an arena.
  For all integer $n$ and play $r∈\Plays{X}$, define the play
  $n\bullet r∈\Plays{\sharp X}$ as in Definition~\ref{def:gamma-delta} for
  the constant function $n$.
  This obviously induces an isomorphism between $\OrdAlg{X}$ and
  $\OrdAlg{\{n\}\indexing X}$, which maps each type $A:X$ to an isomorphic
  type $n\bullet A:\{n\}\indexing X$.
  However, the $\{n\}\indexing X$ for distinct $n$ are disjoint.

  The arena $\{n\}\indexing X$ is included in $\sharp X$, let
  $ε_n:\PreOrdAlg{X}\to\PreOrdAlg{\sharp X}$ be the inclusion map.
  Clearly all the $ε_n$ are compatible with observational equivalence
  and their quotients are all equal.
  Name $ε:\OrdAlg{X}\to\OrdAlg{\sharp X}$ the quotient map.

  For all type $A:X$, define the type $\oc A:\sharp X$ as
  \[
    \oc A := \sum_{n∈\Nat} γ^n(A^n)
    \quad\text{where}\quad
    (A^n : n\indexing X) := \bigotimes_{i=0}^{n-1} (i\bulletε(A))
  \]
  Define the degree of a vector $u∈\oc A$ as the smallest integer $d(u)$ such
  that $u$ is in the partial sum $\sum_{n≤ d(u)}γ^n(A^n)$.
\end{definition}

For all type $A:X$, the type $\oc A:\sharp X$ is again a commutative algebra
with $γ^2$ as the product and the empty play as the unit.
The degree function makes it a graded algebra, intuitively the degree of a
vector $u$ is the maximum number of different copies of $A$ that $u$ uses.
If the type $A$ is \emph{strict} (\ie\ if it does not contain the empty play)
and $\Scal$ is rational, then $\oc A$ is isomorphic to the symmetric
algebra of $A$.
The strictness condition means that each copy of $A$ is actually used, without
this hypothesis the isomorphism fails because all powers of the empty play are
necessarily equal to the empty play in $\oc A$.

The linear map $δ^2$ also makes $\oc A$ a cocommutative coalgebra whose
counit is the linear form that maps the empty play to $1$ and non-empty plays
to $0$.
It is routine to check that the algebra and coalgebra structure are
compatible, making $\oc A$ a bialgebra.
Interestingly, if $A$ is the unique strict type on the singleton arena (which
is isomorphic to $\Scal$), then $\oc A$ is isomorphic to the bialgebra of
polynomials in one variable over $\Scal$.

\subsection{Towards differential linear logic} 

The mapping $A\mapsto\oc A$ is a functor in the category of types and linear
maps.
The map $ε$ from the definition above is a natural transformation from
$A$ to $\oc A$, and by choosing a bijection from $\Nat×\Nat$
into $\Nat$ we get a natural transformation from $\oc\oc A$ to $\oc A$ which
makes $\oc A$ into a monad (the choice of a particular bijection is
unimportant, for the same reasons as in Proposition~\ref{prop:gamma-delta}).
The quotient of the linear map that sends each play $n\bullet r$ to $r$ and
all other plays to zero is a natural transformation from $\oc A$ to $A$, and
using any bijection from $\Nat$ to $\Nat×\Nat$ we get a natural
transformation from $\oc A$ to $\oc\oc A$, which also makes $A$ a comonad.

We can also check the isomorphism $\oc(A⊕ B)≃\oc A\otimes\oc B$
for any strict types $A:X$ and $B:Y$ over disjoint arenas.
The first type is in the arena $\PNat\indexing(X+Y)$ and the second one is in
$(\PNat\indexing X)+(\PNat\indexing Y)$; these arenas are not isomorphic but
the types themselves are thanks to the definition of the direct sum.

All these considerations show that the structure of our types supports most
constructs of differential linear logic~\cite{er06:diffnets}, including
additives, multiplicatives and exponentials with structural and costructural
rules.
However, the construction is not yet a model of differential logic, for
several reasons:
\begin{itemize}
\item 
  One crucial thing that lacks in our framework is the axioms.
  They do not fit in the present work because our objects are too
  finitary: all vectors are finite linear combinations of finite plays, hence
  there can be no vector in $A\multimap A$ that is neutral for composition as
  soon as $A$ is not finite dimensional.
  The reason is similar to the case of units for synchronisation in
  Proposition~\ref{prop:units}, and solving this defect requires a radical
  extension of this work, as explained in the introduction.
\item 
  The proper notion of duality needed to interpret logic, or equivalently the
  definition of the type $\bot$, is not clear at first sight.
  This type must be defined on the empty arena, and our notion of type only
  leaves two choices: $\one=\OrdAlg{\emptyset}$ and $\{0\}$.
  The first one is degenerate given our definition of $A\multimap B$, the
  second one yields orthogonality with respect to the bilinear form
  $(u,v)\mapsto\outcome{u\psync v}$ (note however that this bilinear form is
  not a scalar product, because it is not positive).
  We will not explore this case here because it exceeds the scope of the
  present work.
\item 
  Of course, building a model of linear logic requires to prove that the
  interpretation of proofs is preserved by cut-elimination.
  Most tools are present for this, assuming we restrict to an ill-structured
  logical system without the axiom rule.
  Here again, we defer this task to further work, as the questions of axioms
  and duality obviously have to be answered first for this to be of interest.
\end{itemize}

\section{Interpretation of process calculi} 
\label{sec:processes}

In this section, we detail how process calculi can be interpreted in order
algebras.
As a particular case to work on, we use the \pii-calculus with internal
mobility~\cite{sg96:pii}, that is the fragment of the \pii-calculus where
output actions can only send fresh names.
Most development here could be carried out in other similar calculi.
Had we used CCS, essentially everything would have been the same up to
section~\ref{sec-traces}, in which the definition of arenas would have been
simpler because of the mostly trivial name structure of CCS.
The full \pii-calculus, on the other hand, would have required the handling of
equality tests between names, which is perfectly doable at the cost of
trickier definitions; this exceeds the scope of the present work.

\subsection{Quantitative testing} 
\label{sec-testing}

We consider the \pii-calculus with internal mobility, or
\piI-calculus, extended with \emph{outcomes} from a commutative
semiring~$\Scal$.
We consider the monadic variant of the calculus for simplicity, but using the
polyadic form would not pose any significant problem.
More importantly, we restrict to finite processes.
\begin{definition}\label{def-terms}
  We assume a countable set $\Names$ of names.
  Polarities are elements of $\Pola=\implem{\shpos,\shneg}$.
  Terms are generated by the following grammar:
  \begin{syntax}
  \define[branchings] S,T
    \case \loc{ι}{u}^ε(x).P
      \comment{action, with $u,x∈\Names$, $ε∈\Pola$ and $ι∈\Nat$}
    \case S+T  \comment{external choice}
  \define[processes] P, Q
    \case λ  \comment{outcome, with $λ∈\Scal$}
    \case S  \comment{branching}
    \case P\para Q  \comment{parallel composition}
    \case\new{x}P  \comment{hiding, with $x∈\Names$}
  \end{syntax}
  In an action $\loc{ι}{u}^ε(x).P$, $ι$ is the \emph{location}, $u$ is
  the \emph{subject}, $x$ is the \emph{object} and $P$ is the
  \emph{continuation}.
  The name $x$ is bound in $P$ by the action, independently of the
  polarity~$ε$.

  Terms are considered up to injective renaming of bound names and commutation
  of restrictions, i.e. $\new{x}\new{y}P=\new{y}\new{x}P$, with the standard
  convention that all bound names are distinct from all other names.
  We also impose that in a given term all locations are always distinct.
  The set of locations occurring in a term $P$ is written $\web{P}$.
\end{definition}

Actions (without continuations) will be ranged over by Greek letters $α,β$, so
that we can write expressions like $α.P$ or $α.(β.Q\para R)$.
By convention, an action $u^{\shpos}(x)$ is called \emph{positive} and is also
written $u(x)$, an action $u^{\shneg}(x)$ is called \emph{negative} and is
also written $\bar{u}(x)$.
More generally, if $α$ is an action, we write $\bar{α}$ for the action with
the same subject and the opposite polarity, in particular $\bar{u}^ε(x)$ is
the action of the opposite polarity as $u^ε(x)$.

Locations are simply a way to give different identities to different
occurrences of a given channel name in a term, so we can talk about ``the
action $ι$'' in an unambiguous manner.
Renamings of these locations are of course unobservable by the processes, so
the distinctness condition is not a restrictions on the terms we can write.
Terms with locations can be seen as decorations on standard terms of the
\piI-calculus.

We want to define an operational semantics in which commutation of independent
transitions is allowed.
To make this possible by only looking at transition labels, we enrich the
labels using locations so that different occurrences of a given action are
distinguishable at the level of operational semantics.

\begin{definition}
  Transition labels can be of one of two kinds:
  \begin{syntax}
  \define a, b
    \case \vtlabel{u^ε(x)}{ι} \comment{visible action}
    \case \itlabel{ι}{κ}  \comment{internal transition}
  \end{syntax}
  Transitions are derived by the rules of Table~\ref{table:dlts}.
  The notation $x∉a$ means that the name $x$ does not occur (free or
  bound) in the label $a$.

  An interaction is a finite sequence of transition labels.
  A path is a finite sequence of internal transition labels.
  An interaction $p=a_1a_2\ldots a_n$ is valid for $P$, written $p∈P$, if
  there are valid transitions
  $P\trans{a_1}P_1\trans{a_2}\cdots\trans{a_n}P_n$.
\end{definition}
\begin{table}
  \centering
  \begin{prooftree}
    \Infer0{ \loc{ι}{α}.P \vtrans{α}{ι} P }
  \end{prooftree}
  \hfil
  \begin{prooftree}
    \Hypo{ P \vtrans{u^ε(x)}{ι} P' }
    \Hypo{ Q \vtrans{\bar{u}^ε(y)}{κ} Q' }
    \Infer2{ P\para Q \itrans{ι}{κ} \new{x}(P'\para Q'[x/y]) }
  \end{prooftree}
  \hfil
  \begin{prooftree}
    \Hypo{ P \trans{a} P' }
    \Hypo{ x ∉ a }
    \Infer2{ \new{x}P \trans{a} \new{x}P' }
  \end{prooftree}
  \\[1.5ex]
  \begin{prooftree}
    \Hypo{ S \trans{a} S' }
    \Infer1{ S+T \trans{a} S' }
  \end{prooftree}
  \hfil
  \begin{prooftree}
    \Hypo{ S \trans{a} S' }
    \Infer1{ T+S \trans{a} S' }
  \end{prooftree}
  \hfil
  \begin{prooftree}
    \Hypo{ P \trans{a} P' }
    \Infer1{ P\para Q \trans{a} P'\para Q }
  \end{prooftree}
  \hfil
  \begin{prooftree}
    \Hypo{ P \trans{a} P' }
    \Infer1{ Q\para P \trans{a} Q\para P' }
  \end{prooftree}
  \smallskip
  \caption{Decorated labelled transition system for the \piI-calculus}
  \label{table:dlts}
\end{table}

The use of decorations to define a parallel operational semantics was first
proposed by Boudol and Castellani as ``proved
transitions''~\cite{bc88:ccs,dd89:causal}, and the technique we use
here can be seen as a simplification for our purpose.
It is clear that for all term $P$ and interaction $p$, there is at most one
term $P/p$ (exactly one if $p∈P$) such that there is a transition sequence
$P\trans{p}P/p$ (up to renaming of revealed bound names).
Note that by removing all locations from labels (replacing $\itlabel{ι}{κ}$
by $τ$) one gets the standard labeled transition system for the \piI-calculus.
For this reason, we allow ourselves to keep locations implicit when they are
not important.

\begin{definition}\label{def-homotopy}
  Prefixing in a term $P$ is the partial order $≤_P$ over $\web{P}$ such that
  $ι<_Pκ$ when in $P$ the action at location $κ$ occurs in the continuation of
  the action at location $ι$.
  Two labels $a$ and $b$ are \emph{independent}, written $a\indep_Pb$, if all
  locations occurring in $a$ or $b$ are distinct and pairwise incomparable for
  prefixing.
  Homotopy in a term $P$ is the smallest equivalence $\homo_P$ over
  interactions of $P$ such that $pabq\homo_Ppbaq$ when $a\indep_Pb$.
\end{definition}

Two execution paths of a given term are homotopic if it is possible to
transform one into the other by exchanging consecutive transitions if they are
independent.
Prefixing generates local constraints which propagate to paths by this
relation.

\begin{proposition}\label{prop-permutation}
  Let $p$ and $q$ be two interactions of a term $P$ such that $p$ and $q$ are
  reorderings of each other, then $p\homo_Pq$ and $P/p=P/q$.
\end{proposition}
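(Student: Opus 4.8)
The plan is to argue by induction on the common length $n$ of $p$ and $q$, after isolating two structural facts about the transition system. The first is the \emph{heredity of prefixing}: for any valid transition $P\trans{a}P'$, the order $\le_{P'}$ is exactly the restriction of $\le_P$ to $\web{P'}=\web{P}\setminus\{\text{locations of }a\}$, because firing a transition only removes the prefixes consumed by $a$ and performs a renaming on the object name, neither of which alters the syntactic nesting of the remaining action occurrences. The second is the \emph{local commutation} (diamond) lemma: if $rab\,s\in P$ and $a\indep_Pb$, then $rba\,s\in P$ and $P/(rab\,s)=P/(rba\,s)$ (as usual, equalities of residuals are modulo the term conventions). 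I expect this diamond lemma to be the main obstacle: its proof reduces, by the obvious induction on the prefix $r$, to the case where $r$ and $s$ are empty, where one does a case analysis on the shapes of $a$ and $b$ (visible/visible, visible/internal, internal/internal) and on how the two derivations use the congruence rules for $+$, $\para$ and $\new{}$, keeping careful track of the revealed bound names and of the substitutions performed by communications. Everything else below is bookkeeping.

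Granting these, the induction runs as follows. For $n\le 1$ we have $p=q$ and there is nothing to show. For $n\ge 2$, write $p=ap'$ with $a$ its first label, and locate the (unique, since $q$ is a valid run) occurrence of $a$ in $q$, say $q=b_1\cdots b_k\,a\,q_1$. I claim that $a\indep_Pb_i$ for every $i\le k$. Indeed, the locations of $a$ and those of $b_i$ are distinct because in the valid run $q$ every location is consumed at most once and $a,b_i$ are distinct labels. For incomparability, let $\iota$ be a location of $a$ and $\iota'$ a location of $b_i$. We cannot have $\iota'<_P\iota$: since $a$ is the first transition of the valid run $p$, the action at $\iota$ is enabled in $P$, so $\iota$ is minimal for $\le_P$. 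We cannot have $\iota<_P\iota'$ either: in $q$ the transition $b_i$ consumes $\iota'$ strictly before $a$ consumes $\iota$, whereas $\iota<_P\iota'$ would require $\iota$ to be consumed before $\iota'$, and $\iota$ occurs only in the label $a$, contradicting validity of $q$. Hence $\iota$ and $\iota'$ are incomparable, and $a\indep_Pb_i$.

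Now bubble $a$ to the front of $q$: applying the diamond lemma to the successive adjacent swaps $b_k\,a\mapsto a\,b_k$, then $b_{k-1}\,a\mapsto a\,b_{k-1}$, and so on down to $b_1$ (each legitimate by the claim), we obtain $q\homo_P a\,q'$ together with $P/q=(P/a)/q'$, where $q'=b_1\cdots b_k\,q_1$. Since $q'$ and $p'$ have the same multiset of labels, $q'$ is a reordering of $p'$, and both are valid interactions of $P/a$ (the former because $a\,q'\in P$ by the diamond lemma, the latter because $ap'\in P$). The induction hypothesis thus gives $p'\homo_{P/a}q'$ and $(P/a)/p'=(P/a)/q'$, whence $P/p=(P/a)/p'=(P/a)/q'=P/q$. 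Finally, by the heredity lemma $\indep_{P/a}$ is the restriction of $\indep_P$ to labels whose locations avoid those of $a$, so every elementary swap witnessing $p'\homo_{P/a}q'$ is also an elementary swap in $P$; therefore $a\,p'\homo_P a\,q'$, and $p=ap'\homo_P a\,q'\homo_P q$. This closes the induction.
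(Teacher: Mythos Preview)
Your proof is correct and follows essentially the same strategy as the paper: establish that a label enabled in $P$ can be bubbled to the front of any interaction containing it, then induct on the length. The paper packages this as a single lemma (move the last label $b$ of $a_1\ldots a_nb$ to the front whenever $b\in P$) and leaves the diamond property $P/a_1b=P/ba_1$ as ``a simple check on the transition rules'', whereas you isolate the heredity of prefixing and the local-commutation lemma explicitly and spell out the lift of $\homo_{P/a}$ to $\homo_P$; these are presentational differences only.
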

\begin{proof}
  We first prove that for any interaction $a_1\ldots a_nb∈P$ such that $b∈P$
  we have $a_1\ldots a_nb\homo_Pba_1\ldots a_n$ and
  $P/(a_1\ldots a_nb)=P/(ba_1\ldots a_n)$, by induction on $n$.
  The case $n=0$ is trivial.
  For the case $n≥1$, remark that the hypothesis implies $a_1\indep b$: if
  some location in $a_1$ was less than a location in $b$ then $b$ could only
  occur after $a_1$, which contradicts $b∈P$, and $a_1∈P$ also implies that no
  location in $b$ is less than a location in $a_1$.
  Therefore we have $ba_1∈P$ and $ba_1\homo_Pa_1b$.
  The equality $P/a_1b=P/ba_1$ is a simple check on the transition rules.
  Applying the induction hypothesis on $P/a_1$ yields
  $ba_2\ldots a_n\homo_Pa_2\ldots a_nb$ and
  $P/a_1ba_2\ldots a_n=P/ba_1a_2\ldots a_n$ from which we conclude.
  The case of arbitrary reorderings follows by recurrence on the length of
  $p$ and $q$.
\end{proof}

\begin{definition}
  A pre-trace is a homotopy class of interactions.
  A run is a homotopy class of maximal paths.
  The sets of pre-traces and runs of a term $P$ are written $\Pretraces(P)$
  and $\Runs{P}$ respectively.
  The unique reduct of a term $P$ by a pre-trace $ρ$ is written $P/ρ$.
\end{definition}

Runs are the intended operational semantics: they are complete
executions of a given system, forgetting unimportant interleaving of actions
and remembering only actual ordering constraints.
A pre-trace can be seen as a Mazurkiewicz trace~\cite{dr95:traces} on the
infinite language of transition labels, with the independence relation from
Definition~\ref{def-homotopy}, except that, because of our transition rules,
each label occurs at most once in any interaction.

We now define a form of interactive observation, in the style of
testing equivalences, that takes this notion of homotopy in account.
Standard testing leads to interleaving semantics, so we have to
refine our notion of test, and that is what outcomes are for.
The set $\Scal$ is a semiring in order to represent two ways of combining
results: multiplication is parallel composition of independent results and
addition is combination of results from distinct runs.

\begin{definition}
  The state $s(P)∈\Scal$ of a term $P$ is defined inductively as
  \begin{center}
    $ s(λ) := λ $, \hfil
    $ s(S) := 1 $, \hfil
    $ s(\new{x}P) := s(P) $, \hfil
    $ s(P\para Q) := s(P) \, s(Q) $.
  \end{center}
  The outcome of a term $P$ is $\outcome{P}=\sum_{ρ∈\Runs{P}}s(P/ρ)$.
  Two terms $P$ and $Q$ are observationally equivalent, written $P≃Q$, if
  $\outcome{P\para R}=\outcome{Q\para R}$ for all~$R$.
%
\end{definition}

In other words, the outcome of testing $P$ against $Q$ is the sum of the final
states of all different runs of $P|Q$.
Note that this sum is always finite since we only consider terms without
replication or recursion, hence all terms have finitely many runs.
Classic forms of test intuitively correspond to the case where $\Scal$ is the
set of booleans for the two outcomes success and failure, with operations
defined appropriately.
This particular case is detailed at the end of Section~\ref{sec:consequences}.

\subsection{Decomposition of processes} 
\label{sec:decomp}

In this section, we prove several properties of terms up to observational
equivalence.
The purpose is to decompose arbitrary terms into simpler terms
from which we will be able to easily extract a semantics in order algebras.

\begin{definition}\label{def-causal-order}
  Let $P$ be a term and let $ρ∈\Pretraces(P)$ be a pre-trace of $P$.
  By Proposition~\ref{prop-permutation}, $ρ$ is identified with the set of its
  labels.
  \begin{itemize}
  \item 
    The causal order in $ρ$ is the partial order $≤_ρ$ on labels in $ρ$ such
    that $a≤_ρb$ if $a=b$ or $a$ occurs before $b$ in all interactions in~$ρ$.
  \item 
    The outcome of a pre-trace $ρ$ is defined as $\outcome{ρ}:=s(P/ρ)$.
  \end{itemize}
\end{definition}

This presentation is much simpler to handle than
explicit sets of runs, so this is the one we will mainly use.
Interactions that constitute a given pre-trace are simply the topological
orderings of this partially ordered set of transitions.
Traces in our sense are a further quotient of pre-traces, defined and
studied in Section~\ref{sec-traces}.

\begin{proposition}\label{prop:congruence}
  Observational equivalence is a congruence.
\end{proposition}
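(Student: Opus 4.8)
The plan is to show that $P \equiv Q$ implies $C[P] \equiv C[Q]$ for every term context $C[\cdot]$, and since observational equivalence is defined via testing against all parallel contexts, it suffices to prove compatibility with each term constructor: prefixing $\alpha.(-)$, the two summands of external choice $(-) + T$ and $T + (-)$, parallel composition $(-) \para R$, and restriction $\new{x}(-)$. Compatibility with parallel composition is essentially built into the definition of $\equiv$: if $P \equiv Q$ then for all $R$ and all $S$ we have $\outcome{(P \para R) \para S} = \outcome{P \para (R \para S)} = \outcome{Q \para (R \para S)} = \outcome{(Q \para R) \para S}$, using associativity of $\para$ up to the relevant structural identities, so $P \para R \equiv Q \para R$. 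So the real content is the prefixing, choice and restriction cases.

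First I would handle restriction. The key observation is that $\new{x}(-)$ commutes with parallel composition in the sense that $\new{x}(P \para R) = (\new{x}P) \para R$ whenever $x$ does not occur in $R$, which we may always assume by $\alpha$-conversion. Hence $\outcome{(\new{x}P) \para R} = \outcome{\new{x}(P \para R)}$, and then I need that $\outcome{\new{x}M}$ depends only on $\outcome{}$-relevant data shared by $M$ and $N$ when $M \equiv N$ — more precisely, I would express $\outcome{\new{x}M}$ as a sum over runs of $M$ that do not synchronise on $x$ (because by the transition rules, a visible action on $x$ cannot fire under $\new{x}$, while internal transitions not involving $x$ are unaffected), with the state $s(\new{x}(M/\rho)) = s(M/\rho)$. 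The careful point is that runs of $\new{x}M$ correspond exactly to runs of $M \para \mathbf{0}_x$ for a suitable "blocking" process on $x$, or more directly that testing $\new{x}M$ against $R$ is the same as testing $M$ against $R$ composed with something that forbids $x$-actions; framing it this way reduces the restriction case to the parallel case already handled. I expect this reduction to require a small lemma relating $\Runs{\new{x}M}$ to a sub-multiset of $\Runs{M \para R'}$, which is where most of the bookkeeping sits.

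For prefixing and choice I would argue similarly by reduction: testing $\alpha.P$ against $R$ amounts, after the first interaction that consumes $\alpha$ (necessarily synchronising with a co-action $\bar\alpha$ in $R$, possibly after some internal activity of $R$), to testing $P$ (with the object substitution applied) against the corresponding residual of $R$; summing over all such first synchronisations and over runs with no such synchronisation (whose states do not see $P$ at all) gives $\outcome{\alpha.P \para R}$ as a fixed $P$-independent part plus a sum of terms $\outcome{P\sigma \para R'}$, and $P \equiv Q$ (hence $P\sigma \equiv Q\sigma$, which itself needs that $\equiv$ is stable under injective renaming — immediate since renaming bound/free names is a bijection on runs preserving states) lets us replace each summand. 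The choice case is the same with $S + T$ offering, in addition to the moves of $S$, the moves of $T$, which contribute a $P$-independent additive term. The main obstacle I anticipate is making the "decompose a run of $C[P] \para R$ at the first interaction touching the hole" argument fully rigorous at the level of homotopy classes of maximal paths rather than raw interactions: one must check that this decomposition is well-defined on pre-traces (independent of the chosen interleaving), which is exactly what Proposition \ref{prop-permutation} and the independence relation of Definition \ref{def-homotopy} are there to guarantee, and that the residuals $R'$ range over the right multiset so that no run is counted twice or dropped.
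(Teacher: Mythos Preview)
Your approach is essentially the same as the paper's: decompose runs of the tested term at the transition that activates the hole, partition the runs accordingly, and use the hypothesis $P\simeq Q$ on the residuals.  The paper in fact treats prefixing and external choice in one stroke by handling a general branching $\sum_i\alpha_i.P_i$ (a single prefix being the case $n=1$), which avoids the separate case analysis you sketch, but the underlying decomposition is the same.

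One place where you overshoot is the restriction case.  There is no need for a blocking process or a careful analysis of which runs ``synchronise on $x$'': by $\alpha$-conversion take $x$ fresh for $R$, then by scope extrusion $(\new{x}P)\para R$ and $\new{x}(P\para R)$ have the same transitions, and since runs consist only of \emph{internal} transitions the outer $\new{x}$ is invisible for the outcome, i.e.\ $\outcome{\new{x}M}=\outcome{M}$ for any $M$.  Hence $\outcome{(\new{x}P)\para R}=\outcome{P\para R}$ directly, and the case is immediate from $P\simeq Q$.  Your phrase ``runs of $M$ that do not synchronise on $x$'' is in fact slightly off: internal synchronisations on $x$ inside $M$ are perfectly allowed under $\new{x}$; only \emph{visible} $x$-actions are blocked, and runs contain none of those anyway.
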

\begin{proof}
  Consider a family of equivalent processes $(P_i≃Q_i)_{1≤i≤n}$, and let
  $(α_i)_{1≤i≤n}$ be a family of actions on fresh locations $κ_i$.
  Let $P=\sum_{i=1}^nα_i.P_i$ and $Q=\sum_{i=1}^nα_i.Q_i$, we prove $P≃Q$.
  Let $R$ be an arbitrary process.
  The set $\Runs{P\para R}$ can be split into $n+1$ parts:
  the set $\?R_0$ of runs where no action $α_i$ is triggered and
  the sets $\?R_i$ of runs in which $α_i$ is triggered, for each $i$.
  Then for each run $ρ∈\?R_i$, there is a position $ι$ such that
  $\itlabel{κ_i}{ι}∈ρ$.
  Let $ρ_1$ be the partial run $\set{a}{a∈ρ,a≤_ρ\itlabel{κ_i}{ι}}$, that is
  the minimal run that triggers $α_i$;
  we have $(P\para R)/ρ_1=\new{x}(P_i\para R')$ for some $x$ and $R'$;
  let $ρ_2=ρ∖ρ_1$, so that $ρ_2$ is a run of $P_i\para R'$ and
  $(P\para R)/ρ=\new{x}(P_i\para R')/ρ_2$.
  Let $\?S_i$ be the set of triples $(ρ_1,R',ρ_2)$ for all $ρ∈\?R_i$.
  Obviously $\Runs{P\para R}$ is in bijection with
  $\?R_0\uplus\biguplus_{i=1}^n\?S_i$ and
  \[
    \outcome{P\para R}
    = \sum_{ρ∈\?R_0} s(R/ρ)
    + \sum_{i=1}^n \sum_{(ρ_1,R',ρ_2)∈\?S_i} s((P_i\para R')/ρ_2)
  \]
  Now let $\?L_i=\set{(ρ_1,R')}{∃ρ_2,(ρ_1,R',ρ_2)∈\?S_i}$, and let
  $(ρ_1,R')∈\?L_i$.
  Since $\?R_i$ contains all the runs of $P\para R$ that trigger $α_i$, it
  contains all the runs of $P_i\para R'$ since $P_i\para R'$ can be reached
  from $P\para R$, so we have
  $\set{ρ_2}{(ρ_1,R',ρ_2)∈\?S_i}=\Runs{P_i\para R'}$,
  hence
  \[
    \sum_{(ρ_1,R',ρ_2)∈\?S_i} s((P_i\para R')/ρ_2)
    = \sum_{(ρ_1,R')∈\?L_i} \sum_{ρ_2∈\Runs{P_i\para R'}} s((P_i\para R')/ρ)
    = \sum_{(ρ_1,R')∈\?L_i} \outcome{P_i\para R'}
  \]
  By hypothesis, for all $R'$ we have
  $\outcome{P_i\para R'}=\outcome{Q_i\para R'}$ so
  \[
    \outcome{P\para R}
    = \sum_{ρ∈\?R_0} s(R/ρ)
    + \sum_{i=1}^n \sum_{(ρ_1,R')∈\?L_i} \outcome{Q_i\para R'}
    = \outcome{Q\para R}
  \]
  since the reasoning above equally applies to $Q$.
  Therefore we get $P≃Q$.

  For parallel composition, let $R$ and $S$ be arbitrary terms.
  It is clear that $(P\para R)\para S$ and $P\para(R\para S)$ have the same
  runs and that their reducts by a given run are the same up to the same
  associativity, so for all run $ρ$ we have
  $s(((P\para R)\para S)/ρ)=s((P\para(R \para S))/ρ)$ and therefore
  $\outcome{(P\para R)\para S}=\outcome{P\para(R\para S)}$.
  Similarly we get $\outcome{(Q\para R)\para S}=\outcome{Q\para(R\para S)}$,
  and by hypothesis we have $P≃Q$ so $\outcome{P\para(R\para
  S)}=\outcome{Q\para(R\para S)}$, from which we conclude.

  The equality $\outcome{\new{x}P\para R}=\outcome{\new{x}Q\para R}$ is justified by the
  fact that $\outcome{\new{x}P\para R}$ and $\outcome{P\para R}$ are equal if the name
  $x$ is fresh with respect to $R$.
\end{proof}

\begin{table}
  \begin{tabular}{lll}
    commutativity &
      $ P \para Q  ≃  Q \para P $ &
      $ S + T  ≃  T + S $ \\
    associativity &
      $ (P \para Q) \para R  ≃  P \para (Q \para R) $ &
      $ (S + T) + U  ≃  S + (T + U) $ \\
    neutrality &
      $ P \para 1  ≃  P $ \\[1ex]
    scope commutation &
      $ \new{x}\new{y} P  ≃  \new{y}\new{x} P $ \\
    scope extrusion &
      $ \new{x} (P \para Q)  ≃  P \para \new{x} Q $
      & with $x∉\fn(P)$ \\
    scope neutrality &
      $ \new{x} λ  ≃  λ $ \\[1ex]
    inaction &
      $ \new{u} u^ε(x).P ≃ 1 $ \\
    non-interference &
      $ \new{u}( u(x).P \para \bar{u}(x).Q ) ≃ \new{ux}(P\para Q) $
  \end{tabular}
  \caption{Basic equivalences.}
  \label{table-basic}
\end{table}

\begin{proposition}\label{prop-basic}
  The equivalences of Table~\ref{table-basic} hold.
\end{proposition}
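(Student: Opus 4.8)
The plan is to verify each line of Table~\ref{table-basic} by unfolding the definition $\outcome{P}=\sum_{ρ∈\Runs{P}}s(P/ρ)$: for an arbitrary test term $R$ I will exhibit a bijection between $\Runs{P\para R}$ and $\Runs{Q\para R}$ under which corresponding reducts have the same state, which gives $\outcome{P\para R}=\outcome{Q\para R}$ and hence $P≃Q$. The equivalences fall into three groups. First, call a term $c$ \emph{inert} if it has no transitions and exhibits no visible action; then an inert subterm contributes no transition anywhere and multiplies every reduct's state by the constant $s(c)$, so two inert terms of equal state are observationally equivalent and an inert subterm of state $1$ may be erased without altering runs or states. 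The outcome $λ$, the term $\new{x}λ$, and the term $\new{u}u^ε(x).P$ are all inert — for the last one the action $u^ε(x)$ is blocked by $\new{u}$ since the subject $u$ occurs in the label, and it has no synchronisation partner — with states $λ$, $λ$ and $1$ respectively, which settles \emph{scope neutrality}, \emph{neutrality} and \emph{inaction}.

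Second, the purely structural equivalences — commutativity and associativity of $\para$ and of $+$, together with scope extrusion (scope commutation and renaming of bound names being already part of term equality by Definition~\ref{def-terms}) — I will obtain from the structural congruence $\equiv$ they generate. It suffices to check three standard facts: that $\equiv$ is a strong bisimulation ($P\equiv Q$ and $P\trans{a}P'$ imply $Q\trans{a}Q'$ with $P'\equiv Q'$, and conversely); that it is a congruence; and that $P\equiv Q$ implies $s(P)=s(Q)$, which is immediate from the clauses $s(S)=1$ for branchings, $s(P\para Q)=s(P)s(Q)$, $s(\new{x}P)=s(P)$ together with commutativity and associativity of the product of $\Scal$. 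From these, $P\equiv Q$ gives $P\para R\equiv Q\para R$; the bisimulation yields a label-preserving bijection between the interactions of the two sides which, since it preserves the independence relation of Definition~\ref{def-homotopy}, respects homotopy, hence (using Proposition~\ref{prop-permutation}) descends to a bijection on runs with $(P\para R)/ρ\equiv(Q\para R)/ρ'$; equality of states concludes. The one delicate point is that $\equiv$ is a strong bisimulation across the scope-extrusion clause $\new{x}(P\para Q)\equiv P\para\new{x}Q$ (for $x∉\fn(P)$): one must track that the restricted name stays fresh along transitions and reassociate restrictions using the commutation built into term equality; this is the classical argument, and the absence of a scope-opening rule in the internal-mobility calculus keeps it routine.

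Third and last, \emph{non-interference}: in a context $\cdot\para R$, write the left-hand side as $\new{u}(u(x).P\para\bar{u}(y).Q[y/x])\para R$ with $x,y$ fresh, and let $τ_0$ be the communication of the two prefixes, which by Table~\ref{table:dlts} leads precisely to the term $\new{ux}(P\para Q)\para R$ appearing on the right-hand side in the same context. I will argue that $τ_0$ is enabled in the initial state; that no transition of the left-hand side other than $τ_0$ touches the $\new{u}$-bound subterm, because its visible actions on the private name $u$ are blocked; and that every transition available before $τ_0$ is a transition of $R$, hence involves neither the locations of $τ_0$ nor anything causally below it, so it is independent of $τ_0$ in the sense of Definition~\ref{def-homotopy}. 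Proposition~\ref{prop-permutation} then shows that every maximal run of the left-hand side is homotopic to one that fires $τ_0$ first and then runs $\new{ux}(P\para Q)\para R$ maximally; conversely prefixing $τ_0$ to any maximal run of the right-hand side gives a maximal run of the left-hand side, and one checks that the independence relation on the labels following $τ_0$ is unchanged, so this is a well-defined bijection on runs that inserts exactly one transition but leaves the final reduct untouched. Matched reducts therefore have equal state, so the two sides have equal outcome against $R$, which is what $≃$ requires. I expect this last case to be the only one demanding genuine attention — specifically the verification that $τ_0$ must occur in every maximal run and can always be permuted to the front — the rest being bookkeeping on the transition rules and on $s(\cdot)$.
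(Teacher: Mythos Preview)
Your proposal is correct and rests on the same mechanism as the paper: for each equation you exhibit a bijection between $\Runs{A\para R}$ and $\Runs{B\para R}$ under which final reducts have equal state. The paper's own proof is much terser --- for every rule except non-interference it simply asserts that $\Runs{A\para T}=\Runs{B\para T}$ with reducts differing ``in the same way'', and for non-interference it observes that maximality forces the internal transition $\itlabel{ι}{κ}$ to appear in every run and that deleting it gives a bijection with the runs of $\new{ux}(P\para Q)\para R$.

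Your three-way organisation (inert subterms of equal state; structural congruence as a strong bisimulation; the non-interference case via Proposition~\ref{prop-permutation}) is more systematic and spells out what the paper leaves implicit. In particular, your use of Proposition~\ref{prop-permutation} to commute $τ_0$ to the front is precisely the content of the paper's one-line claim that the runs of the left-hand side are those of the right-hand side with $\itlabel{ι}{κ}$ removed, and your observation that only transitions of $R$ are available before $τ_0$ is what justifies that $τ_0$ survives to appear in every maximal run. Nothing is genuinely different in spirit; your write-up fills in the bookkeeping the paper omits.
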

\begin{proof}
  For every equation $A≃B$ in the list except non-interference, 
  it is clear that for all term $T$ we have $\Runs{A\para T}=\Runs{B\para T}$
  and that the reducts by any run $ρ$ differ in the same way.
  Since these rules preserve states, in each case we get
  $\outcome{A\para T}=\outcome{A\para T}$, hence the expected equivalence.
  For the non-interference rule, remark that all runs of
  $\new{u}(\loc{ι}{u}(x).P\para\loc{κ}{\bar{u}}(x).Q)\para R$ contain the
  transition $\itlabel{ι}{κ}$, because of maximality and the fact that $R$
  cannot provide actions on $u$.
  The reduct by this transition is $\new{ux}(P\para Q)\para R$, and
  its runs are those of the original term without $\itlabel{ι}{κ}$, so it has
  the same outcome.
\end{proof}

Thanks to these properties, when considering processes up to observational
equivalence, we can consider parallel composition to be associative and
commutative.
In this case we use the notation $\prod_{i∈I}P_i$ to denote the parallel
composition without interaction of the $P_i$ in any order (assuming only that
$I$ is finite).

In order to study processes up to observational equivalence, we will now
describe some of the structure of the space of equivalence classes.
The first ingredient is to identify an additive structure that represents pure
non-determinism.

\begin{table}
  \begin{tabular}{ll}
  commutative monoid: &
  $ P⊕Q ≃ Q⊕P \qquad
    (P⊕Q)⊕R ≃ P⊕(Q⊕R) \qquad
    P⊕0 ≃ P $ \\[1ex]
  scalar multiplication: &
  $ 0⋅P ≃ 0 \qquad
    1⋅P ≃ P \qquad
    λ_1λ_2⋅P ≃ λ_1⋅(λ_2⋅P) $ \\&
  $ (λ_1+λ_2)⋅P ≃ (λ_1⋅P) ⊕ (λ_2⋅P) \qquad
    λ⋅(P⊕Q) ≃ λ⋅P ⊕ λ⋅Q $ \\[1ex]
  linearity of operators: &
  $ P\para(Q⊕R) ≃ (P\para Q)⊕(P\para R) \qquad
    P\para(λ⋅Q) ≃ λ⋅(P\para Q) $ \\&
  $ \new{x}(P⊕Q) ≃ \new{x}P ⊕ \new{x}Q \qquad
    \new{x}(λ⋅P) ≃ λ⋅\new{x}P $
  \end{tabular}
  \caption{Module laws over processes.}
  \label{table-module}
\end{table}

\begin{proposition}\label{prop-module}
  Let $\Procs_\Scal$ be the set of equivalence classes of processes over the
  semiring of outcomes $\Scal$.
  For all terms $P$ and $Q$ and all outcome $λ$, define
  \begin{syntax}
    \define P⊕Q \case \new{u}((u.P\para u.Q)\para\bar{u}.1)
      \comment{where $u$ is a fresh name,}
    \define λ⋅P \case λ\para P
  \end{syntax}
  Then $(\Procs_\Scal,⊕,0,⋅)$ is a $\Scal$-module,
  parallel compositions are bilinear operators and hiding is linear,
  i.e. the equivalences of Table~\ref{table-module} hold.
\end{proposition}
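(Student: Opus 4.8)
The plan is to reduce the whole of Table~\ref{table-module} to two facts about outcomes, after which everything becomes semiring bookkeeping combined with Proposition~\ref{prop:congruence} and Proposition~\ref{prop-basic}. The first and only substantial fact is an \emph{additivity lemma}: for all terms $P,Q$ and every $R$, $\outcome{(P\oplus Q)\para R}=\outcome{P\para R}+\outcome{Q\para R}$. I would prove it by the run‑splitting argument of Proposition~\ref{prop:congruence}. In $(P\oplus Q)\para R=\new{u}((u.P\para u.Q)\para\bar u.1)\para R$ the action $\bar u.1$ is available from the start and, since $u$ is bound and occurs nowhere else, it can only be consumed by synchronising with $u.P$ or with $u.Q$; hence by maximality every run of $(P\oplus Q)\para R$ contains exactly one $u$‑transition $a$, either against $u.P$ or against $u.Q$. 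Since $\bar u$, $u.P$, $u.Q$ all sit at top level and the continuations become reachable only after $a$, the transition $a$ is causally minimal in its run and independent of every transition of $R$; so $a$ can be taken first, with $R$ still intact, and $((P\oplus Q)\para R)/a$ is, by the basic equivalences of Table~\ref{table-basic} (scope extrusion, the inaction rule $\new{u}(u.Q)\obseqv 1$, neutrality of $\para$), a term with the same runs and reduct‑states as $P\para R$ (resp.\ $Q\para R$). Thus removing $a$ is a state‑preserving bijection between the runs of $(P\oplus Q)\para R$ using $u.P$ (resp.\ $u.Q$) and the runs of $P\para R$ (resp.\ $Q\para R$), and summing the reduct‑states gives the lemma. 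The second fact is the trivial \emph{scaling identity} $\outcome{(\lambda\cdot P)\para R}=\lambda\,\outcome{P\para R}$: since $(\lambda\cdot P)\para R=(\lambda\para P)\para R\obseqv\lambda\para(P\para R)$ by Table~\ref{table-basic}, its runs are exactly those of $P\para R$ with every reduct‑state multiplied by $\lambda=s(\lambda)$, so the identity follows from distributivity of $\cdot$ over the finite sum defining the outcome. I would also record once and for all that $\outcome{\new{x}T}=\outcome{T}$, since $\new{x}$ leaves the internal transitions unchanged and commutes with $s$, and that $\outcome{0\para R}=0$, since every reduct‑state of $0\para R$ carries a factor $0$.

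Every line of Table~\ref{table-module} then follows by evaluating $\outcome{(-)\para R}$ on both sides for an arbitrary $R$ and invoking the semiring axioms of $\Scal$. The commutative‑monoid laws for $\oplus$ come from additivity and the structure of $(\Scal,{+},0)$; the scalar‑action laws ($0\cdot P$, $1\cdot P$, $\lambda_1\lambda_2\cdot P$, $(\lambda_1+\lambda_2)\cdot P$, $\lambda\cdot(P\oplus Q)$) from the scaling identity, additivity, and associativity/distributivity of $\cdot$ — for instance $\outcome{((\lambda_1+\lambda_2)\cdot P)\para R}=(\lambda_1+\lambda_2)\outcome{P\para R}=\lambda_1\outcome{P\para R}+\lambda_2\outcome{P\para R}=\outcome{((\lambda_1\cdot P)\oplus(\lambda_2\cdot P))\para R}$. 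Bilinearity of $\para$ is obtained by rewriting $\outcome{(P\para(Q\oplus R))\para S}=\outcome{(Q\oplus R)\para(P\para S)}$ using associativity and commutativity of $\para$ (Table~\ref{table-basic}) and then applying additivity with ambient term $P\para S$, the scalar case being immediate since $P\para(\lambda\para Q)\obseqv\lambda\para(P\para Q)$ already by Table~\ref{table-basic}. Linearity of $\new{x}$ is obtained by pushing the restriction through $\para S$ with scope extrusion (after renaming $x$ apart from $S$), using $\outcome{\new{x}T}=\outcome{T}$, and then applying additivity or the scaling identity; the same computation on the right‑hand side yields the matching value. Since $\obseqv$ is a congruence (Proposition~\ref{prop:congruence}), these equalities of outcomes in all contexts are precisely the asserted equivalences, which together state that $(\Procs_\Scal,\oplus,0,\cdot)$ is a $\Scal$‑module with $\para$ bilinear and $\new{x}$ linear.

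The only genuine obstacle is the additivity lemma, and within it the point that the unique $u$‑synchronisation may be performed first with $R$ still unevolved — that is, that nothing precedes it in the causal order of a run. This is the same phenomenon already handled for guarded sums in Proposition~\ref{prop:congruence}, so the new content is modest; everything else is routine calculation with Tables~\ref{table-basic} and \ref{table-module} and the semiring structure of $\Scal$.
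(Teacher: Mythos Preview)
Your proposal is correct and follows essentially the same approach as the paper: establish the additivity lemma $\outcome{(P\oplus Q)\para R}=\outcome{P\para R}+\outcome{Q\para R}$ by splitting runs according to which $u$-prefix is consumed, establish the scaling identity $\outcome{(\lambda\cdot P)\para R}=\lambda\,\outcome{P\para R}$ directly, and derive all the laws of Table~\ref{table-module} from these two facts together with the semiring axioms and the basic equivalences of Table~\ref{table-basic}. The only cosmetic difference is that the paper identifies each half of the run partition with the runs of $\new{u}(u.P\para\bar u.1)\para R$ and then applies the non-interference rule, whereas you fire the $u$-transition first and simplify the residual $\new{u}(u.Q)$ via the inaction rule---both routes are equally valid and neither involves any idea the other lacks.
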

\begin{proof}
  We first show that, for all terms $P$, $Q$ and $R$,
  $\outcome{(P⊕Q)\para R}=\outcome{P\para R}+\outcome{Q\para R}$.
  Consider $\Runs{(P⊕Q)\para R}
  =\Runs{\new{u}((\loc{ι_1}{u}.P\para\loc{ι_2}{u}.Q)\para\loc{κ}{\bar{u}}.1)\para R}$.
  It is clear that any run contains an interaction of $\bar{u}.1$ with either
  $u.P$ or $u.Q$, since none of these may interact with anything else.
  We can thus write $\Runs{(P⊕Q)\para R}=\?R_1\uplus\?R_2$ where $\?R_1$ is the set
  of runs that contain $(ι_1,κ)$ and $\?R_2$ is the set of runs that contain
  $(ι_2,κ)$.
  The runs in $\?R_1$ are the runs of
  $\new{u}(\loc{ι_1}{u}.P\para\bar{u})\para R$
  and each of these runs has the same outcome in both terms, so
  \[
    \sum_{ρ∈\?R_1} s\bigl(((P⊕Q)\para R)/ρ\bigr)
    = \outcome{\new{u}(\loc{ι_1}{u}.P\para\bar{u})\para R}
    = \outcome{P\para R}
  \]
  by the non-interference rule of Table~\ref{table-basic}.
  By a similar argument, we get the same for $\?R_2$ and $\outcome{Q\para R}$,
  so we finally get
  $\outcome{(P⊕Q)\para R}=\outcome{P\para R}+\outcome{Q\para R}$.
  This equality and the fact that $(\Scal,+,0)$ is a commutative monoid
  implies that $(\Procs_\Scal,⊕,0)$ is a commutative monoid (where
  $0$ is the atomic term with outcome $0$).

  For any terms $P$ and $Q$ and any outcome $λ$, it is clear that
  $\outcome{(λ\para P)\para Q}=λ\outcome{P\para Q}$, since the term $λ$ has no transition and
  contributes $λ$ multiplicatively to all outcomes of the term.
  This directly implies that the operation $λ⋅P$ has all required properties.

  For the bilinearity of compositions, using the equation 
  $\outcome{(P⊕Q)\para R}=\outcome{P\para R}+\outcome{Q\para R}$
  and associativity and commutativity of parallel composition we get
  that parallel composition distributes over $⊕$, and the fact
  that $0$ is absorbing is equivalent to the rule $0⋅P≃0$.
  Linearity of hiding is immediate from the scoping rules and the fact that
  $\outcome{\new{x}P}=\outcome{P}$ always holds.
\end{proof}

Observe that all syntactic constructions induce linear constructions
on equivalence classes, except for the action prefix, which is not linear but
actually affine.
Indeed, for an action $α$, the term $α.0$ is not equivalent to $0$: it will be
neutral in executions that do not trigger $α$, and multiply the outcome by
$0$ (thus annihilating it) in runs that do.
It can be understood as a statement ``I could have performed $α$ but I will not
do it'' so that any run that contradicts this statement has outcome $0$.
The purely linear part of actions is the opposite: the linear action
$\lin{α}.P$ will act as $α.P$ if its environment actually triggers the action,
but will turn to $0$ if it is never activated.

\begin{definition}\label{def-linear-action}
  For all action $α$ and term $P$, the linear action of $α$ on $P$ is
  \begin{syntax}
    \define \lin{α}.P
    \case \new{w}(α.(P\para w.1)\para w.0\para\bar{w}.1)
    \comment{where $w$ is a fresh name.}
  \end{syntax}
  An interaction is said to trigger the linear action if it triggers the
  action $w.1$.
  Terms of the form $α.0$ are called inactions.
\end{definition}

This definition has the expected behaviour because of the maximality of runs.
If $\lin{α}.P$ is in active position, then any run that does not trigger $α$
must instead trigger $w.0$, hence any such run has outcome $0$.
A run in which the term $\lin{α}.P$ does not produce $0$ must activate $α$, so
that $w.1$ acts instead of $w.0$.

In this respect the action $\lin{α}$ is \emph{linear}, in the sense of a
linear resource: it must be used exactly once, otherwise the process must
evolve to~$0$, as stated by the third equation of
Table~\ref{table-linear-actions}.
As proved below, it is also linear as an operator $P\mapsto\lin{α}.P$.
These two features are deeply related: internal choice and outcomes may
commute with the action prefix only if we know for sure that the prefix will
eventually be used.

\begin{table}
  \begin{tabular}{ll@{}}
  Linearity: &
  $ \lin{α}.(P⊕Q) ≃ \lin{α}.P⊕\lin{α}.Q \quad
    \lin{α}.(λ⋅P) ≃ λ⋅\lin{α}.P \quad
    \new{u}\lin{u}^ε(x).P ≃ 0 $ \\[1ex]
  Asynchrony of inactions: &
  $ \lin{α}.(β.0 \para P) ≃ β.0 \para \lin{α}.P $
   \qquad if the subject of $β$ is not bound by $α$ \\[1ex]
  Composition of inactions: &
  $ \sum_{i∈I} α_i.0 \para \sum_{i∈J} α_i.0 ≃
    \begin{array}[t]{@{}l}
      0 \quad\text{if } α_i=\bar{α}_j \text{ for some } i∈I, j∈J \\
      \sum_{i∈I\cup J} α_i.0 \quad\text{otherwise}
    \end{array} $ \\[1ex] &
  $ α.0 + α.0 ≃ α.0 $
  \end{tabular}
  \caption{Laws of linear actions and inactions.}
  \label{table-linear-actions}
\end{table}

\begin{proposition}\label{prop-affine-action}
  For all families of actions $(α_i)_{i∈I}$ and processes $(P_i)_{i∈I}$,
  \[
    \sum_{i∈I} α_i.P_i ≃
    \bigoplus_{i∈I} \lin{α}_i.P_i
    ⊕
    \sum_{i∈I} α_i.0 .
  \]
  The function $P\mapsto\lin{α}.P$ is linear and the equivalences of
  Table~\ref{table-linear-actions} hold.
\end{proposition}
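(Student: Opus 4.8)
The whole proposition rests on one structural fact about the linear-action gadget; everything else is run-counting combined with the module laws of Proposition~\ref{prop-module} and the homotopy normalisation of Proposition~\ref{prop-permutation}. So I would first establish the following (Step~1): for every term $P$, action $α$ and test $R$,
\[
  \outcome{(\lin{α}.P)\para R}
  = \sum_{ρ} s\bigl((α.P\para R)/ρ\bigr),
\]
where $ρ$ ranges over the runs of $(α.P)\para R$ that actually trigger $α$. Unfolding $\lin{α}.P=\new{w}(α.(P\para w.1)\para w.0\para\bar{w}.1)$ with $w$ fresh: the restricted output $\bar{w}.1$ can only synchronise with $w.0$ or with the $w.1$ sitting under $α$, and since $R$ offers nothing on $w$, no maximal run leaves $\bar{w}.1$ active. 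A run using $\bar{w}\leftrightarrow w.0$ produces the outcome-$0$ term coming from $w.0$, hence contributes $0$; a run using $\bar{w}\leftrightarrow w.1$ must have fired $α$ first (that is the only occurrence of $w.1$), the surviving $w.0$ is then permanently stuck and, being a branching, carries state factor $1$, so by Proposition~\ref{prop-permutation} — which lets me pin the position of the forced $w$-transition — this run matches a run of $(α.P)\para R$ triggering $α$, with the same reduct state. This bijection yields the displayed identity.

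Linearity of $P\mapsto\lin{α}.P$ is then immediate: rerunning the residual analysis from the proof of Proposition~\ref{prop:congruence}, a run of $(α.P)\para R$ triggering $α$ splits into a minimal triggering prefix $ρ_1$, with residual $\new{x}(P\para R')$ where $x$ is the fresh object of $α$, followed by a run of $P\para R'$; so the identity of Step~1 becomes $\outcome{(\lin{α}.P)\para R}=\sum_{(ρ_1,R')}\outcome{P\para R'}$, which is linear in the class of $P$ since $\para$ is bilinear and hiding is linear. Hence $\lin{α}.(P⊕Q)≃\lin{α}.P⊕\lin{α}.Q$ and $\lin{α}.(λ⋅P)≃λ⋅\lin{α}.P$. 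The law $\new{u}\lin{u}^ε(x).P≃0$ follows too: $R$ can offer no action on the restricted name $u$, so no run triggers $α=u^ε(x)$ and the outcome is $0$ for every $R$.

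For the decomposition, take $I≠∅$ (both sides are $0$ when $I=∅$) and set $A=\sum_{i∈I}α_i.P_i$ and $B=\bigoplus_{i∈I}\lin{α}_i.P_i⊕\sum_{i∈I}α_i.0$. Proposition~\ref{prop-module} gives $\outcome{B\para R}=\sum_{i∈I}\outcome{(\lin{α}_i.P_i)\para R}+\outcome{(\sum_{i∈I}α_i.0)\para R}$. I partition $\Runs{A\para R}$ by which branch of the external choice is triggered: a run triggers at most one $α_i$ (triggering it discards the others), giving the sets $\?R_i$ of runs triggering $α_i$ and the set $\?R_0$ of runs triggering none. Since the $i$-th branch inside the sum behaves just like $α_i.P_i$ on its own, both before and after $α_i$ fires, $\sum_{ρ∈\?R_i}s\bigl((A\para R)/ρ\bigr)=\outcome{(\lin{α}_i.P_i)\para R}$ by Step~1; and the runs in $\?R_0$ are exactly the non-firing runs of $(\sum_{i∈I}α_i.0)\para R$, with identical reduct states (the firing runs there contribute $0$, and $\sum_iα_i.P_i$, $\sum_iα_i.0$ are branchings, so carry state $1$). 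Summing over the partition gives $\outcome{A\para R}=\outcome{B\para R}$ for all $R$, i.e. $A≃B$.

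It remains to handle the inaction laws, and here I expect only routine work once Step~1 is in place. Every firing of a branch of an inaction sum reduces it to the outcome-$0$ term, so a run of $(\sum_{i∈I}α_i.0\para\sum_{j∈J}α_j.0)\para R$ contributes nonzero only if it fires nothing; when $α_i=\bar{α}_j$ for some $i∈I,j∈J$ the synchronisation $α_i\leftrightarrow\bar{α}_j$ is permanently available, so no maximal run avoids firing and the outcome vanishes for every $R$, whence the term is $≃0$; otherwise the two sums never interact and the non-firing runs, with their reduct states, are exactly those of $(\sum_{i∈I∪J}α_i.0)\para R$, and likewise $α.0+α.0≃α.0$ since the two branches are interchangeable dead prefixes. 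For asynchrony, I apply Step~1 on both sides, using $\outcome{(β.0\para\lin{α}.P)\para R}=\outcome{(\lin{α}.P)\para(β.0\para R)}$: both are sums over runs that trigger $α$, and since the subject of $β$ is not the object bound by $α$, every transition touching $β$ is independent (Definition~\ref{def-homotopy}) of the $α$-trigger, so Proposition~\ref{prop-permutation} lets me slide those transitions across it and match the two sets of runs, and matching runs share their reduct state because a fired $β.0$ contributes the factor $0$ and a stuck $β.0$ the factor $1$ on both sides. I expect Step~1 to be the genuine difficulty — checking that the forced $w$-synchronisations behave uniformly in the presence of the name revealed by $α$ and of an arbitrary environment — with the homotopy bookkeeping of the asynchrony law a distant second.
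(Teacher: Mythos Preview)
Your proof is correct and follows essentially the same approach as the paper: isolate the fact that only runs triggering $α$ contribute to $\outcome{(\lin{α}.P)\para R}$, then derive everything by partitioning run sets. Your Step~1 packages this as an explicit identity with runs of $(α.P)\para R$, whereas the paper keeps the runs inside $\lin{α}.P\para Q$ throughout, but the content is the same bijection. Two minor presentational differences: the paper derives $\new{u}\lin{u}^ε(x).P≃0$ by a short algebraic chain from the basic equivalences of Table~\ref{table-basic} rather than via Step~1, and for asynchrony it argues directly that the relevant runs (those triggering $\lin{α}$ but not $β$) coincide on both sides, without invoking homotopy explicitly; your route through Step~1 and Proposition~\ref{prop-permutation} is slightly more indirect but equally valid.
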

\begin{proof}
  For linearity, we use the fact that $\outcome{\lin{α}.P\para Q}$ is the sum
  of the $s((\lin{α}.P\para Q)/ρ)$ for the runs $ρ$ that actually trigger $α$
  (and the witness action $w.1$).
  If $P=λ\para P'$ for some $λ∈\Scal$, these runs are the same in
  $\lin{α}.(λ\para P')\para Q$ and $\lin{α}.P'\para Q$, but the outcomes are
  multiplied by $λ$ in the first case, so $\outcome{\lin{α}.(λ\para P')\para
  Q}=λ⋅\outcome{\lin{α}.P'\para Q}$ and $\lin{α}.(λ\para
  P')≃λ\para\lin{α}.P'$.
  If $P=P_1⊕P_2$, the choice is eventually active in all relevant runs, so
  each of these runs triggers either $P_1$ or $P_2$.
  We can thus establish a bijection between $\Runs{\lin{α}.(P_1⊕P_2)\para Q}$
  and the disjoint union of $\Runs{\lin{α}.P_1\para Q}$ and
  $\Runs{\lin{α}.P_2\para Q}$.
  Since outcomes are preserved by this bijection, we finally get
  $\outcome{\lin{α}.(P_1⊕P_2)\para Q}=
  \outcome{\lin{α}.P_1\para Q}+\outcome{\lin{α}.P_2\para Q}$
  and $(P_1⊕P_2)\para Q≃(P_1\para Q)⊕(P_2\para Q)$.

  The equivalence $\new{u}u^ε(x).P≃0$ can be deduced from previous equations:
  \begin{align*}
    \new{u}\lin{u}^ε(x).P
    &= \new{uw}(u^ε(x).(P\para w.1) \para (w.0 \para \bar{w}.1)) \\
    &≃ \new{w}(\new{u}u^ε(x).(P\para w.1) \para (w.0 \para \bar{w}.1)) \\
    &≃ \new{w}(1 \para (w.0 \para \bar{w}.1))
    ≃ \new{w}(w.0 \para \bar{w}.1)
    ≃ \new{w}(0 \para 1)
    ≃ 0
  \end{align*}

  For the decomposition, let $f$ and $g$ be the functions from $\Procs_\Scal$
  to $\Scal$ such that $f(Q)=\outcome{ (\sum_{i∈I} α_i.P_i) \para Q }$ and
  $g(Q)=\outcome{ (\bigoplus_{i∈I} \lin{α}_i.P_i
    ⊕ \sum_{i∈I} α_i.0 ) \para Q }=\sum_{i∈I}g_i(Q)+g_0(Q)$,
  we prove $f=g$.
  By previous remarks we have
  $g(Q)=\sum_{i∈I}\outcome{\lin{α}_i.P_i\para Q}
    +\outcome{\sum_{i∈I}α_i.0\para Q}$.
  Given a term $Q$, $\Runs{ (\sum_{i∈I} α_i.P_i) \para Q }$
  decomposes into $\?R_0$ for the runs that trigger none of the $α_i$
  and a $\?R_i$ for all runs that trigger $α_i$, for each $i$.
  Clearly, $\?R_0$ contains the runs of $\sum_{i∈I}α_i.0\para Q$ that do
  not trigger any $α_i$, and all other runs of this term have outcome $0$, so
  the sum of the outcomes of runs in $\?R_0$ is $g_0(Q)$.
  For each $i$, the runs of $\?R_i$ are in bijection with runs of
  $\lin{α}_i.P_i\para Q$ that trigger $α_i$ and they have the same outcomes,
  and all other runs of this term have outcome $0$, so the sum out the
  outcomes of these runs is $g_i(Q)$.
  As a consequence, we get the expected decomposition
  $f=g_0+\sum_{i∈I}g_i$.

  For the equivalence $\lin{α}.(β.0\para P)≃β.0\para\lin{α}.P$,
  assuming the subject of $β$ is not the bound name of action $α$,
  let $Q$ be an arbitrary term and consider $\Runs{\lin{α}.(β.0\para P)\para Q}$.
  Any run that does not trigger $\lin{α}$ or that triggers both $\lin{α}$ and
  $β$ has outcome $0$, so the only relevant runs are those that trigger
  $\lin{α}$ but not $β$.
  Clearly these runs are the same as the runs of $(β.0\para\lin{α}.P)\para Q$
  that trigger $\lin{α}$ and not $β$, and they have the same outcomes.

  For the composition of inactions, the relevant runs of a term
  $(\sum_{i∈I}α_i.0\para\sum_{i∈J}α.0)\para P$ are those that do not
  trigger any of the $α_i$, so the number of occurrences of each $α_i$ does
  not matter, and the fact that they are in a branching or in parallel does
  not matter either, as long as the branchings cannot interact.
  The only special case is when there are $i∈I$ and $j∈J$ such that
  $α_i=\bar{α}_j$, then each run must trigger one branching or the
  other, if nothing else by letting $α_i$ and $α_j$ interact.
  As a consequence, all runs of this term have outcome $0$, so the composition
  of the two branchings is indistinguishable from~$0$.
\end{proof}


\begin{definition}\label{def-simple-term}
  A term is \emph{simple} if it is generated by the following grammar
  \begin{syntax}
  \define[simple term] P,Q
    \case 1 ,\; N ,\; \lin{α}.P ,\; (P\para Q) ,\; \new{x}P
  \define[inaction set] N
    \case \sum_{i∈I}α_i.0
  \end{syntax}
  A pre-trace $ρ∈\Pretraces(P)$ is exhaustive if it triggers all linear
  actions and no inaction, and no sub-term of $P/ρ$ has the form $Q\para R$ with
  $Q$ containing some $α.0$ and $R$ containing $\bar{α}.0$.
  The set of such pre-traces is written $\Pretraces_e(P)$.
\end{definition}


Simple terms have the property that the outcome of any run is either $1$ or
$0$.
More precisely, it is easy to see that the outcome of a run is $1$ if and only
if it triggers all linear actions and no inaction.
The notion of exhaustive pre-trace is the correct extension of this notion to
pre-traces, indeed every run of a simple term $P\para Q$ with outcome $1$ is
made of an exhaustive pre-trace of $P$ and an exhaustive pre-trace of $Q$.
The condition on $P/ρ$ simply rules out interactions of $P$ that
lead to a term $P'$ where there are dual inactions that may interact, since
that would imply $P'≃0$, as a generalization of the equation
$α.0\para\bar{α}.0≃0$.
Observe that, by the decomposition of Proposition~\ref{prop-affine-action} and
the linearity of all constructions of simple terms, we immediately prove that
every term is equivalent to a linear combination of simple terms.
As a consequence, two terms $P$ and $Q$ are equivalent if and only if for all
\emph{simple} term $R$, $\outcome{P\para R}=\outcome{Q\para R}$.

\subsection{An order algebraic model} 
\label{sec-traces}

Thanks to the decomposition into simple terms, we are now ready to describe
our order algebraic semantics.
Following the initial intuition, we define a web whose points are action
occurrences, with a group action that permutes actions of the same name and
polarity while making sure that bound names are properly updated.
We need an extra bit of information to represent inactions, and these will be
represented as extra actions (somehow ``potential'' actions) with particular
treatment.

\begin{definition}\label{def:pii-arena}
  The set $C$ of \emph{abstract channels} is defined as
  $ C := \Names × (\Pola × \Nat)^* $.
  We write $u⋅ε_1n_1\cdotsε_kn_k$
  instead of $(u,((ε_1,n_1),\ldots,(ε_k,n_k)))$.

  The arena $E$ for the \piI-calculus is such that
  $ \web{E} = C × \Pola × (\Nat\cup\{\bot,\top\}) $
  and $\pgroup[E]$ is generated by permutations of the form
  $(x,ε,σ)∈C×\Pola×\Perm{\Nat}$, acting as
  \[
    (x,ε,σ)(y) =
    \begin{cases}
      x⋅εσ(n)⋅z &\text{if } y = x⋅εn⋅z
        \text{ for some } n∈\Nat \text{ and } z∈(\Pola×\Nat)^* \\
      y &\text{otherwise}
    \end{cases}
  \]
\end{definition}

Abstract channels represent names in a way that allows us to
avoid any renamings.
Intuitively, $u$ (that is $(u,())$) represents the free name $u$ itself,
$u⋅εn$ represents the bound name $x$ in the action $\loc{n}{u}^ε(x)$, then
$u⋅εn⋅ε'n'$ represents the bound name $y$ in $\loc{n'}{x}^{ε'}(y)$, and
so on.
So an abstract channel is the path to find a given name, free or bound.
In a sense, this notion is an analogous for names in \piI-terms of De Bruijn
indices.

We can assume, without loss of generality, that all names in processes we
use respect this intuition, so that we can mention any name without ambiguity
and with no need of renaming.
Under this hypothesis, given a term $P$ and a pre-trace $ρ∈\Pretraces(P)$,
the term $P/ρ$ is uniquely defined, \emph{not} up to renaming.
Note however that $P/ρ$ does not respect the intuition on free names if
bound names were revealed, \ie\ if $ρ$ contains a visible action.
With this discipline on names, we can assume without loss of generality that
the set of free names $\Names$ is \emph{finite}.

Points in the web $\web{E}$ are of two kinds.
The first kind is $x⋅εn$ for the occurrence of polarity $ε$ at location $n$ of
the name $x$.
The second kind is $x⋅ε\bot$ or $x⋅ε\top$ for the inaction of polarity $ε$
with name $x$; the use of $\bot$ and $\top$ is a tool used to encode the
behaviour of inactions, with the convention that $x⋅ε\bot<x⋅ε\top$ if $x^ε.0$
is present, and the points are incomparable otherwise.

A permutation $(x,ε,σ)$ permutes the locations of the actions of polarity $ε$
of the name $x$ according to $σ:\Nat\to\Nat$.
By definition, the $n$-th occurrence of polarity $ε$ of $x$, namely $x⋅εn$, is
renamed into $x⋅εσ(n)$, the $m$-th occurrence of polarity $η$ of the name
bound by it, namely $x⋅εn⋅ηm$, gets renamed as $x⋅εσ(n)⋅ηm$, \ie\ its location
is unchanged but its name is changed to reflect the change of its binder, and
so on for other bound names.
The inactions at $x⋅ε$ are unchanged since $x$ is unchanged,
but those on names bound by $x$ are renamed accordingly.
A more explicit (but equivalent) construction of the permutation group
consists in setting $\pgroup[E]:=\Perm{\Nat}^C$ and defining the action of
$σ∈\pgroup[E]$~as
\[
  σ (u⋅ε_1n_1\cdotsε_kn_k) :=
  u⋅ε_1σ(u)(n_1)⋅ε_2σ(u⋅ε_1n_1)(n_2)\cdots
    ε_kσ(u⋅ε_1n_1\cdotsε_{k-1}n_{k-1})(n_k)
\]
except if $n_k∈\{\bot,\top\}$ in which case the last pair remains as
$ε_kn_k$.

%

\begin{definition}\label{def-trace}
  A trace is a play $t$ on the web $E$ such that
  \begin{itemize}
  \item for all $x⋅εn⋅ε'n'∈\web{t}$,
    $x⋅εn∈\web{t}$ and $x⋅εn<_tx⋅εn⋅ε'n'$,
  \item for all $x∈\Names$ and all $x=y⋅ε'n∈\web{t}$,
    $x⋅ε\bot$ and $x⋅ε\top$ are in $\web{t}$,
    and for all $y∈\web{t}∖\{x⋅ε\bot,x⋅ε\top\}$,
    $x⋅ε\bot$ and $x⋅ε\top$ are incomparable with $y$.
  \end{itemize}
\end{definition}

The first condition is a kind of ``justification'' condition in the style of
game semantics~\cite{ho00:pcf}.
It means that for an action $x⋅εn∈\web{t}$, if the subject $x$ is a bound
name, then its binder (the action also named $x$) is also in $\web{t}$ and it
is inferior in the scheduling order, \ie\ it was revealed earlier.
The second condition means that inactions information must be present for each
known name and that inactions are not involved in scheduling.

\begin{definition}
  Let $P$ be a simple term and let $ρ$ be an exhaustive pre-trace of $P$.
  The trace induced by $ρ$ is the trace $ρ^*$ such that
  \begin{itemize}
  \item $\web{ρ^*}=\set{x⋅εn}{\vtlabel{x^ε}{n}∈ρ}\cup N_ρ$,
    where $N_ρ$ contains $x⋅ε\bot$ and $x⋅ε\top$ for all polarity $ε$ and all
    name $x$ such that $x∈\Names$ or $x=y⋅εn$ for some $\vtlabel{y^ε}{n}∈ρ$,
  \item $≤$ is the causal order (as of Definition~\ref{def-causal-order})
    restricted to visible transitions,
    augmented with $x⋅ε\bot<x⋅ε\top$ for each $x^ε.0$ that occurs in $P/ρ$.
  \end{itemize}
\end{definition}

Note that the justification condition is satisfied by $ρ^*$, because in the
\piI-calculus the action prefixes are synchronous:
in an action $u(x).P$, the action $u(x)$ that binds $x$ is automatically a
prefix of all actions on $x$.
However, synchrony is not necessary for this property to hold, the fact that
the name is bound is the important point: even if internal transitions can
occur on a bound name, visible transitions are possible only after the name
has been revealed by the action it is bound to.

\begin{proposition}\label{prop:translation}
  To each simple term $P$, associate the function
  $\translate{P}:\Plays{E}\to\Scal$ such that for all $t∈\Plays{E}$,
  $\translate{P}(t):=\sharp\set{ρ∈\Pretraces_e(P)}{ρ^*=t}$.
  This function clearly has finite support, so
  $\translate{P}∈\PreOrdAlg{E}$.
  Let $u\mapsto\bar u$ be the linear map over $\PreOrdAlg{E}$ that 
  inverts polarities and exchanges $\bot$ and $\top$.
  Then for all simple terms $P,Q$,
  $\outcome{P\para Q}=\outcome{\translate{P}\psync\overline{\translate{Q}}}$.
\end{proposition}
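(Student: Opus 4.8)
The statement says $\outcome{P\para Q}=\outcome{\translate P\psync\overline{\translate Q}}$ for simple terms $P,Q$. The strategy is to unfold both sides into sums indexed by concrete combinatorial data and exhibit a bijection between the indexing sets that preserves the summand values. On the process side, $\outcome{P\para Q}$ is by definition $\sum_{\rho\in\Runs{P\para Q}}s((P\para Q)/\rho)$, and since $P,Q$ are simple, each summand is $0$ or $1$: it is $1$ exactly when $\rho$ triggers all linear actions and no inaction and no residual dual-inaction clash appears, i.e. exactly for the exhaustive pre-traces of $P\para Q$. So the left-hand side counts exhaustive runs of $P\para Q$. On the algebra side, $\translate P\psync\overline{\translate Q}=\multiplicity{\overline{\translate Q}}\sum_{t'\in\porbit{}}\translate P\bar\sync\,\overline{\translate Q}\,(t')$ summed over plays of the right support, and taking the outcome kills all inconsistent (cyclic) synchronisations; expanding $\translate P$ and $\overline{\translate Q}$ as sums over $\Pretraces_e(P)$ and $\Pretraces_e(Q)$ via $\translate P(t)=\sharp\{\rho:\rho^*=t\}$, the right-hand side becomes a weighted count of pairs $(\sigma,\tau)\in\Pretraces_e(P)\times\Pretraces_e(Q)$ together with a permutation in $\pgroup[E]$ making $\sigma^*$ and (the permuted) $\bar\tau^*$ have the same support and consistent union.

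\textbf{Key steps.} First I would establish the decomposition of an exhaustive run of $P\para Q$: by the remark following Definition~\ref{def-simple-term}, every outcome-$1$ run of $P\para Q$ splits into an exhaustive pre-trace of $P$ and one of $Q$ whose visible actions are matched (a positive action on a bound-or-free channel $x$ in one side pairs with the dual negative action on $x$ in the other, realising the internal transitions), and the causal order of the run is the transitive closure of the union of the two causal orders through these matchings. Second I would check that this matching datum is exactly what a permuted static synchronisation $\sigma^*\bar\sync\,\rho(\bar\tau^*)$ records: the polarity-and-$\bot/\top$-inversion $u\mapsto\bar u$ turns $\tau^*$'s negative actions into positive ones on the same abstract channel so that supports can coincide, and the permutation $\rho\in\pgroup[E]$ renames occurrence indices on each abstract channel so that the $n$-th occurrence on one side lines up with the chosen partner on the other — there is exactly one such $\rho$ (up to the pointwise stabiliser of the common support) per matching, which is where the multiplicity factor $\multiplicity{\overline{\translate Q}}$ in the definition of $\psync$ gets absorbed, as in Lemma~\ref{lemma:saturate}. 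Third, I would verify that consistency of the synchronised play corresponds exactly to acyclicity of the combined scheduling constraints, i.e. to the run being an actual (deadlock-free) execution — this is the point of defining plays as preorders and outcome as the consistency indicator. Finally, I would handle the inaction bookkeeping: the $x\cdot\varepsilon\bot<x\cdot\varepsilon\top$ convention means an unmatched inaction $x^\varepsilon.0$ on one side forces $x\cdot\varepsilon\bot<x\cdot\varepsilon\top$ while its dual $\bar x^\varepsilon.0$ (after inversion) would force $x\cdot\varepsilon\top<x\cdot\varepsilon\bot$, so a dual-inaction clash produces a cycle and kills the outcome, matching exactly the "no residual clash" clause in the definition of exhaustive pre-trace; an inaction matched against a genuine action on the same channel, and the "inactions are incomparable with everything else" clause of Definition~\ref{def-trace}, ensure no spurious cycles otherwise.

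\textbf{Main obstacle.} The delicate part is the precise correspondence between the permutation $\rho\in\pgroup[E]$ and the matching of action occurrences, including getting the counting right so that the multiplicity coefficients on both sides agree rather than merely being proportional. Concretely: a single exhaustive run of $P\para Q$ may arise from several choices of representative pre-traces $(\sigma,\tau)$ and permutations, because $\pgroup[E]$ permutes occurrence indices freely and a play is an orbit; one must show each exhaustive run is hit exactly $\multiplicity{\overline{\translate Q}}$-many — no more, no less — times in the expanded right-hand sum, so that after the outcome (which is insensitive to the orbit representative) the totals coincide. I expect this to go through by the same orbit-counting argument used in Lemma~\ref{lemma:saturate} and Lemma~\ref{lemma:perm-eq}, applied componentwise over the abstract channels appearing in the support, but it requires care with the nested structure of abstract channels (the justification condition of Definition~\ref{def-trace} guarantees that a bound channel $x\cdot\varepsilon n$ is permuted consistently with its binder, so the recursion on channel depth is well-founded). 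A secondary routine check is that $\rho^*$ indeed satisfies the trace conditions of Definition~\ref{def-trace} for every exhaustive $\rho$, which follows from synchrony of prefixes as already noted in the text, and conversely that every trace-shaped synchronisation outcome comes from an actual run — i.e. that the map $\rho\mapsto\rho^*$ is onto the relevant plays — which uses that any topological ordering of a consistent trace is realisable as an interaction because the transition rules of Table~\ref{table:dlts} impose no constraints beyond prefixing.
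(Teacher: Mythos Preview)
Your plan is correct and matches the paper's proof essentially step for step: reduce the left-hand side to a count of outcome-$1$ runs, decompose each such run into a pair of exhaustive pre-traces together with a matching bijection of their visible actions, and identify those bijections with exactly the permutations in $\pgroup[E]$ that make $\sigma^*\sync\overline{\tau^*}$ consistent (the inaction clash producing a $\bot/\top$ cycle is handled just as you describe). Your ``main obstacle'' is lighter than you fear: the remark after Definition~\ref{def:psync} already observes that the sum defining $r\psync s$ has at most one term per bijection $\web{s}\to\web{r}$, so the multiplicity factor is absorbed automatically and no separate orbit-counting argument is needed beyond that observation.
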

\begin{proof}
  By construction, if $P$ and $Q$ are simple terms, then so is $P\para Q$, so
  all its runs have outcome $0$ or $1$, thus $\outcome{P\para Q}$ is the
  number of non-zero runs of $P\para Q$.
  Every run $ρ∈\Runs{P\para Q}$ can be uniquely decomposed as a pre-trace
  $ρ_1∈\Pretraces(P)$ and a pre-trace $ρ_2∈\Pretraces(Q)$.
  Moreover, by definition of exhaustive pre-traces, if the outcome of $ρ$ is
  $1$ then $ρ_1$ and $ρ_2$ are exhaustive pre-traces.

  Now let $ρ_1$ and $ρ_2$ be any exhaustive pre-traces of $P$, we want to
  compute how many runs with outcome $1$ they generate.
  A run $ρ∈\Runs{P\para Q}$ projects to $ρ_1$ and $ρ_2$ if and only if it
  establishes a bijection from visible actions of $ρ_1$ to dual visible
  actions of $ρ_2$, such that scheduling constraints are respected and
  no opposite inactions exist between $ρ_1$ and $ρ_2$.
  Formulated in traces, this means a bijection $φ:\web{ρ_1^*}\to\web{ρ_2^*}$
  such that:
  \begin{itemize}
  \item 
    For all $a=x⋅εn∈\web{ρ_1^*}$,
    $φ(a)=y⋅\negεm$ for some $y$ and $m$
    (\ie\ actions of opposite polarities are matched),
    and if $x∈\web{ρ_1^*}$ then $y∈\web{ρ_2^*}$ and $φ(x)=y$.
    This means that an action on a bound name must be matched with an
    action on another bound name and that these names are revealed by actions
    that were matched together (this is a typical property of the
    \piI-calculus).
  \item 
    The union of the orders $φ(≤_{ρ_1^*})$ and $≤_{ρ_2^*}$ is acyclic,
    which means that $φ$ respects prefixing constraints
    so that we get an actual execution path.
  \end{itemize}
  Such a bijection $φ$ establishes an identification between names revealed in
  the interactions $ρ_1$ and $ρ_2$, and the last thing to check is that under
  this bijection, there are no dual inactions between $ρ_1$ and $ρ_2$.
  By construction, that there are such inactions if and only if
  for some name $x$ in $\Names$ or $\web{ρ_1^*}$ and polarity $ε$ we have
  $x⋅ε\bot<_{ρ_1^*}x⋅ε\top$ and $φ(x)⋅\negε\top<_{ρ_2^*}x⋅\negε\bot$, which
  exactly corresponds to a cycle in the union
  $φ(≤_{ρ_1^*})\cup\overline{≤_{ρ_2^*}}$.

  It is routine to check that bijections that satisfy the above conditions are
  exactly the bijections induced by elements of $\pgroup[E]$ such that
  $φ(ρ_1^*)\sync\overline{ρ_2^*}=1$: the structure of $\pgroup[E]$ is made to
  ensure that the justification condition is satisfied, and the rest is
  ordering conditions.
  As a consequence, the number we seek is exactly
  $ρ_1^*\psync\overline{ρ_2^*}$.
  By summing this on all pairs of exhaustive pre-traces of $P$ and $Q$, we
  finally get
  $\outcome{P\para Q}=\outcome{\translate{P}\psync\overline{\translate{Q}}}$.
\end{proof}

The translation function $P\mapsto\translate{P}$ defined above applies to
simple terms, but using the results of Section~\ref{sec:decomp} we can extend
it to all terms by linear combinations.
The decomposition of terms as linear combinations of simple terms is not
unique syntactically, however all decompositions are observationally
equivalent by definition, and it is easy to check that the traces induced by
all possible translations of a given term are the same, so the translation
is actually a function from terms to vectors in $\PreOrdAlg{E}$.
The space of linear combinations of plays $\PreOrdAlg{E}$ is larger than the
set of translations of terms, so by Proposition~\ref{prop:translation} if
translations of two terms $P$ and $Q$ are observationally equivalent in the
sense of order algebras then these terms are equivalent in the sense of
quantitative testing.
Hence our final theorem:
\begin{theorem}
  Two terms of the \piI-calculus are observationally equivalent for
  quantitative testing in a semiring $\Scal$ if and only if their translations
  in $\OrdAlg[\Scal]{E}$ are equal.
\end{theorem}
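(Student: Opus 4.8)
The plan is to prove the two implications separately. The left-to-right direction (faithfulness: equal translations imply testing equivalence) is essentially the remark preceding the statement, and I would spell it out as follows. Suppose $\translate{P}=\translate{Q}$ in $\OrdAlg{E}$, i.e. $\translate{P}\obseqv\translate{Q}$ in $\PreOrdAlg{E}$. First reduce to simple test terms: by Proposition~\ref{prop-affine-action} and the linearity of all constructors of simple terms (Definition~\ref{def-simple-term}), every term equals, up to observational equivalence, a linear combination of simple terms, and $\translate{\cdot}$ extends to a well-defined linear map on equivalence classes. Since by Proposition~\ref{prop-module} the operation $(P,R)\mapsto\outcome{P\para R}$ is bilinear on equivalence classes, it is enough to check $\outcome{P\para R}=\outcome{Q\para R}$ for simple $R$ after replacing $P,Q,R$ by their simple decompositions. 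Then Proposition~\ref{prop:translation}, extended by bilinearity, rewrites these outcomes as $\outcome{\translate{P}\psync\overline{\translate{R}}}$ and $\outcome{\translate{Q}\psync\overline{\translate{R}}}$, which are equal because $\overline{\translate{R}}\in\PreOrdAlg{E}$ and $\translate{P}\obseqv\translate{Q}$. Hence $P\simeq Q$.

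For the right-to-left direction (completeness) suppose $P\simeq Q$. Running the same reduction in reverse gives $\outcome{\translate{P}\psync\overline{\translate{R}}}=\outcome{\translate{Q}\psync\overline{\translate{R}}}$ for every simple term $R$. The key step is a \emph{definability} lemma: for every play $t$ that lies in the image of some $\translate{R}$ there is a simple term $R_t$ whose unique exhaustive pre-trace induces exactly $\overline{t}$, so that $\translate{R_t}=\overline{t}$ and $\overline{\translate{R_t}}=t$. Such an $R_t$ is built by reading the order $\leq_{\overline t}$ off as a prefixing forest of linear actions $\lin{\alpha}$ — this is where the construction relies on the fact that the causal order of an exhaustive pre-trace of a simple term, restricted to visible transitions, is a restriction of prefixing and hence has the shape of a forest — encoding the bound-name and nesting structure through nested action objects as in Definition~\ref{def:pii-arena}, and encoding each $\bot<\top$ marker as an inaction $\alpha.0$, while avoiding any $+$ so that the exhaustive pre-trace is forced and unique. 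Applying the hypothesis to $R_t$ yields $\outcome{\translate{P}\psync t}=\outcome{\translate{Q}\psync t}$ for every such play $t$.

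It then remains to upgrade agreement against all definable plays to $\translate{P}\obseqv\translate{Q}$, i.e. agreement against all of $\PreOrdAlg{E}$. I would use the representation property (Proposition~\ref{prop:represent}) and the direct-sum decomposition (Proposition~\ref{prop:direct-sum}) to localise to a strict static order algebra on a fixed finite support, where $\translate{P}$ and $\translate{Q}$ are combinations of forest-shaped plays, and then invoke Section~\ref{sec:bases}: when $\Scal$ is idempotent, total orders form a basis (Proposition~\ref{prop:basis-totals}) and the conclusion is immediate; when $\Scal$ is a regular ring, the fact that forest orders generate (via Proposition~\ref{prop:split}) together with the dual family of Theorem~\ref{thm:basis} lets one recover the coefficients, so that testing against forest-shaped plays already pins down the class. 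Combining the cases gives $\outcome{\translate{P}\psync v}=\outcome{\translate{Q}\psync v}$ for all $v$, hence $\translate{P}=\translate{Q}$ in $\OrdAlg{E}$.

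The main obstacle is this last step together with the bookkeeping in the definability lemma. The delicate point is that the traces actually definable by simple terms are highly constrained (forest-shaped, with their name and inaction structure dictated by $\pgroup[E]$), so that proving they still separate the translations forces one to use the precise correspondence between synchronisations and pre-traces of Proposition~\ref{prop:translation} and the rewriting of Proposition~\ref{prop:split}; and for a semiring that is neither idempotent nor a regular ring one cannot argue through bases and must reason directly with the bilinear pairing $\outcome{\cdot\psync\cdot}$.
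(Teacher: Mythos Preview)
Your soundness direction (equal translations in $\OrdAlg{E}$ imply testing equivalence) matches the paper, and in fact is the whole content of the paper's proof: the paragraph preceding the theorem simply observes that $\PreOrdAlg{E}$ contains more tests than the set of $\overline{\translate{R}}$, so equality in $\OrdAlg{E}$ implies agreement against all terms via Proposition~\ref{prop:translation} and the reduction to simple tests.

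For the converse direction the paper does not give a proof at this theorem. What you are attempting is the full-abstraction result, which the paper defers to Section~\ref{sec:consequences} and establishes only under extra hypotheses on $\Scal$ (idempotent, or a regular ring with a mild extension of the calculus), citing external work for the definability step. So your completeness argument necessarily goes beyond what the paper does here, and it contains a genuine error. The claim that ``the causal order of an exhaustive pre-trace of a simple term, restricted to visible transitions, is a restriction of prefixing and hence has the shape of a forest'' is false: hidden synchronisations create causal dependencies that are not prefixing. For instance in
\[
  \new{u,v}\bigl(\lin{a}.\lin{\bar u}(x).1 \;\para\; \lin{b}.\lin{\bar v}(y).1 \;\para\; \lin{u}(x).\lin{v}(y).\lin{c}.1\bigr)
\]
the unique exhaustive pre-trace has $a<c$ and $b<c$ with $a\conc b$ on visible actions, yet $a$, $b$, $c$ lie in three different parallel components and are pairwise unrelated by prefixing; the order is not a forest in the sense you need. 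Hence you cannot in general read $R_t$ off as a pure prefixing tree, and once you add hidden channels to impose the missing constraints, uniqueness of the exhaustive pre-trace (hence $\translate{R_t}=\bar t$ exactly) is no longer automatic. The paper's own route in Section~\ref{sec:consequences} sidesteps this by realising only the \emph{base} plays of Theorem~\ref{thm:basis} --- total orders when $\Scal$ is idempotent, weak total orders when $\Scal$ is a regular ring (the latter needing a multiple-prefix extension of the calculus to handle concurrent dual actions) --- rather than arbitrary traces. Your last paragraph already anticipates that the argument is sensitive to the structure of $\Scal$; the paper makes the same concession, and does not claim the completeness direction for general semirings.
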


\subsection{Consequences} 
\label{sec:consequences}

The first consequence of this model is that Theorem~\ref{thm:basis} provides a
basis for the set of processes in two particular cases:
\begin{itemize}
\item If $\Scal$ is idempotent, then each term is equivalent to a linear
  combination of totally ordered traces.
  It is the case when $\Scal$ represents standard may or must testing.
  Then we lose the ``quantitative'' aspect since multiplicities are ignored,
  and we fall back to standard semantics as a special case.
  We get full abstraction in this case by showing that any base play can be
  implemented as a term of the calculus~\cite{bef08:apc}.
\item If $\Scal$ is a regular ring, terms are combinations of weakly totally
  ordered traces.
  We can get full abstraction again if we slightly extend the calculus to
  allow parallel composition without interaction~\cite{bef09:qt}, this is
  needed only for the case of traces that contain concurrent dual actions.
  Actually the only needed feature is a multiple prefix
  $\{α_1,\ldots,α_k\}.P$, which is enough to represent weakly ordered traces
  as terms.
  Simpler modifications of the calculus could lead to full abstraction, for
  instance by imposing a more structured naming discipline.
\end{itemize}

Although we will not write the proofs here in full detail, the interpretation
of processes is compositional, and we can use the constructs of
Section~\ref{sec:logic} to represent syntactic constructs as operators on
order algebras.
Define the arena $Ch$ of channel ends as
$\web{Ch}=(\Nat×\Pola)^*×(\Nat\cup I)$ with $I=\{\bot,\top\}$,
with permutations of the same kind as in $E$, then the definition of $E$ from
Definition~\ref{def:pii-arena} reformulates as
\[
  E = (\Names × \Pola) \indexing Ch
  \qquad\text{and}\qquad
  Ch = I + \sharp(\{\ast\}+\Pola\indexing Ch)
\]
up to a simple isomorphism.
These equations mean that a process appears as a family of
channel ends indexed by free names and polarities, and that a channel end
contains inaction information (the $I$ part) and an arbitrary number of
interchangeable occurrences (the $\ast$) each associated with a new
channel end per polarity (the $\Pola\indexing Ch$).

The explicit mention of the $\sharp$ operator for the action occurrences
allows us to use the $γ$ and $δ$ operators from
Definition~\ref{def:gamma-delta} as a systematic way of treating the inherent
non-determinism in the multiple occurrences of each name.
We can thus define parallel composition of vectors $p\para q$ in
the order algebra as follows:
\begin{itemize}
\item For each channel end $x⋅ε$ in $P$, apply
  $δ^2:\OrdAlg{\sharp Oc}\to\OrdAlg{\sharp Oc+\sharp Oc}$, where
  $Oc=\{\ast\}+\Pola\indexing Ch$ is the arena for an action occurrence.
  This splits the occurrences of $x⋅ε$ into those that will interact with $Q$
  and those that will not.
  Extend this to $Ch$ by keeping the inaction part unchanged, and apply the
  same operator independently to each channel end, giving an
  operator $δ':\OrdAlg{E}\to\OrdAlg{E+(\Names×\Pola)\indexing\sharp Oc}$.
  The $\sharp Oc$ part contains actions that will \emph{not} interact.
\item Do the same for $Q$, and compose the result with the involution
  $u\mapsto\bar u$ from Proposition~\ref{prop:translation}.
\item Partially synchronize $δ'(p)$ and $\overline{δ'(q)}$ on the $E$
  part, to represent the actual interaction for the occurrences that must
  interact, which yields a vector
  $u∈\OrdAlg{E+(\Names×\Pola)\indexing(\sharp Oc+\sharp Oc)}$.
  This partial synchronisation handles the conditions on inactions the same
  way as in Proposition~\ref{prop:translation}.
\item In the result, for each channel end $x⋅ε$ in the $E$ part, forget the
  actions on $x⋅ε$ since they have interacted, then normalise the inaction
  part by mapping any $y⋅ε\top<y⋅ε\bot$ to the reverse order (this is a linear
  operator since it acts on plays) and inverting again the remaining part of
  $Q$ to get back the original polarities on visible actions.
  Call $n:\OrdAlg{E+(\Names×\Pola)\indexing(\sharp Oc+\sharp Oc)}\to
  \OrdAlg{(\Names×\Pola)\indexing(I+\sharp Oc+\sharp Oc)}$ this operator.
\item Finally, contract the action occurrences on each channel end in the
  result with the operator
  $γ^2:\OrdAlg{\sharp Oc+\sharp Oc}\to\OrdAlg{\sharp Oc}$ applied on each
  channel end in $\Names×\Pola$, which defines an operator
  $γ':\OrdAlg{(\Names×\Pola)\indexing(I+\sharp Oc+\sharp Oc)}
    \to\OrdAlg{E}$.
\end{itemize}
With this definitions, we finally get
$p\para q:=γ'(n(δ'(p)\psync[E]\overline{δ'(q)}))$.

The other operators are easy to define.
An outcome $λ$ is translated as $λ.\emptyset$, where $\emptyset$ is the empty
run.
Branchings are decomposed as in Proposition~\ref{prop-affine-action}, and the
linear action is a linear operator that consists in introducing in each play
an extra point for the new action, minimal for the scheduling order.
Hiding a name $x$ consists in mapping to $0$ all plays that contain an action
on $x$ and forgetting the inaction information on $x$.

\medskip

By choosing appropriate structures for $\Scal$, we
can recover the standard may and must testing~\cite{dnh84:testing}.
In both cases we have $\Scal=\implem{0,1,ω}$, where $ω$ represents success.
Table~\ref{table-may-must} shows the rules for addition and multiplication for
may and must.
Using this definition it is clear that $P$ and $Q$ are equivalent for may or
must testing if and only if, for all $R$, $\outcome{P\para R}=ω$ if and only if
$\outcome{Q\para R}=ω$.
Taking for $\Scal$ the minimal semiring $\{0,1\}$ with $1+1=1$ gives the
framework studied by the author in a previous work~\cite{bef08:apc}, which
also leads to must testing semantics.
In these semirings, all elements are idempotent for addition, so by
Theorem~\ref{thm:basis} the model we get is actually interleaving.

\begin{table}
  \centering
  \begin{tabular}{c@{\qquad}c@{\qquad}c}
    may and must &
    may testing &
    must testing \\[.5ex]
    $ \begin{array}{c|ccc}
        ⋅ & 0 & 1 & ω \\ \hline
        0 & 0 & 0 & 0 \\
        1 & 0 & 1 & ω \\
        ω & 0 & ω & ω
      \end{array} $ &
    $ \begin{array}{c|ccc}
        + & 0 & 1 & ω \\ \hline
        0 & 0 & 1 & ω \\
        1 & 1 & 1 & ω \\
        ω & ω & ω & ω
      \end{array} $ &
    $ \begin{array}{c|ccc}
        + & 0 & 1 & ω \\ \hline
        0 & 0 & 1 & ω \\
        1 & 1 & 1 & 1 \\
        ω & ω & 1 & ω
      \end{array} $
  \end{tabular}
  \smallskip
  \caption{Observation semirings for may and must testing.}
  \label{table-may-must}
\end{table}


\bibliographystyle{ebstyle}
\bibliography{oa}

\end{document}